\def\withcolors{0}
\def\withnotes{0}
\def\BibTeX{{\rm B\kern-.05em{\sc i\kern-.025em b}\kern-.08em
    T\kern-.1667em\lower.7ex\hbox{E}\kern-.125emX}}
\theoremstyle{plain} \newtheorem{thm}{Theorem}[section]
\newtheorem{lem}[thm]{Lemma}
\theoremstyle{definition} \newtheorem{defn}[thm]{Definition}
\theoremstyle{remark} \newtheorem{rem}{Remark}
\definecolor{lightgray}{gray}{0.9}
\newcommand{\ep}{\mathcal{E}}
\newcommand{\eps}{\varepsilon}
\newcommand{\eqdef}{:=}
\newcommand{\eq}[1]{\begin{align*}#1\end{align*}}
\newcommand{\ceil}[1]{\left\lceil#1\right\rceil}%
\newcommand{\norm}[1]{\|#1\|}%
\newcommand\numberthis{\addtocounter{equation}{1}\tag{\theequation}}
 \newcommand{\R}{\mathbb{R}}
\newcommand{\E}[1]{\mathbb{E}\left[#1\right]}
\newcommand{\Z}{\mathbb{Z}} 
\newcommand{\B}{\mathscr{B}}
\newcommand{\X}{\mathcal{X}}
\newcommand{\Y}{\mathcal{Y}}
\newcommand{\oO}{\mathcal{O}}
\newcommand{\Q}{\mathcal{Q}}
\newcommand{\uRoman}[1]{\uppercase\expandafter{\romannumeral#1}}
\DeclarePairedDelimiter\floor{\lfloor}{\rfloor}
  \newcommand{\newer}[1]{{\color{brown} {#1}}} 
  \newcommand{\newest}[1]{{\color{blue} {#1}}} 
  \newcommand{\newnewest}[1]{{\color{purple} {#1}}} 
  \newcommand{\mcolor}[1]{{\color{ForestGreen}#1}} 
  \newcommand{\tcolor}[1]{{\color{Orange}#1}} 
  \newcommand{\jcolor}[1]{{\color{Cyan}#1}}
  \newcommand{\newer}[1]{{{#1}}}
  \newcommand{\newest}[1]{{{#1}}}
   \newcommand{\newnewest}[1]{{{#1}}}
  \newcommand{\mcolor}[1]{{#1}}
  \newcommand{\tcolor}[1]{{#1}}
  \newcommand{\jcolor}[1]{{#1}}
  \newcommand{\mnote}[1]{\par\mcolor{\textbf{M: }\sf #1}} 
  \newcommand{\tnote}[1]{\par\tcolor{\textbf{T: }\sf #1}} 
 \newcommand{\snote}[1]{\par\jcolor{\textbf{J: }\sf #1}}  
  \newcommand{\anote}[1]{\par\tcolor{\textbf{A: }\sf #1}} 
  \newcommand{\htnote}[1]{{\color{orange}\tt [HT: #1]}}
  \newcommand{\anote}[1]{\par\tcolor{\textbf{A: }\sf #1}} 
  \newcommand{\mnote}[1]{}
  \newcommand{\tnote}[1]{}
  \newcommand{\snote}[1]{} 
  \newcommand{\htnote}[1]{}
  \newcommand{\anote}[1]{} 
\newcommand{\ignore}[1]{\leavevmode\unskip} 
\newcommand{\Qenc}{Q^{{\tt e}}}
\newcommand{\Qdec}{Q^{{\tt d}}}
\newcommand{\known}{known $\mathbf{\Delta}$ }
\newcommand{\unknown}{unknown $\mathbf{\Delta}$ }
\newcommand{\indic}[1]{\mathbbm{1}_{#1}}
\newcommand{\cE}{\mathcal{E}}
\newcommand{\cO}{\mathcal{O}}
\newcommand{\mutualinfo}[2]{ I\mleft(#1 \land #2\mright) }
\newcommand{\clientset}{\mathtt{C}} 
\newcommand{\A}{\mathtt{A}}
\newcommand{\cA}{\mathcal{A}}
\newcommand{\tC}{\mathtt{Cl}}
\newcommand{\MSE}{\mathtt{MSE}}
\newcommand{\M}{\mathcal{M}}
\newcommand{\op}{C}
\definecolor{red1}{rgb}{0.4,0,0}
\definecolor{red1}{rgb}{0.4,0,0}
\def\BibTeX{{\rm B\kern-.05em{\sc i\kern-.025em b}\kern-.08em
    T\kern-.1667em\lower.7ex\hbox{E}\kern-.125emX}}
\newcommand\blfootnote[1]{%
  \begingroup
  \renewcommand\thefootnote{}\footnote{#1}%
  \addtocounter{footnote}{-1}%
  \endgroup
}
\begin{document}
\title{Wyner-Ziv Estimators for Distributed Mean Estimation with Side Information  and Optimization}
\author{Prathamesh Mayekar\\ National University of Singapore\\\url{pratha22@nus.edu.sg}\and Shubham Jha\\ Indian Institute of Science\\ \url{shubhamkj@iisc.ac.in} \and  Ananda Theertha Suresh\\ Google Research\\\url{theertha@google.com} \and Himanshu Tyagi\\Indian Institute of Science\\ \url{ htyagi@iisc.ac.in} }
\date{}
\maketitle 
  \blfootnote{This work was supported by a grant from Robert Bosch Center for Cyber Physical Systems, Indian Institute of Science, and a grant on Security and Privacy for Smart Cities sponsored by National Security Council, India. }  
 \newest{ \blfootnote{Parts of this paper appeared in the proceedings of International Conference on Artificial Intelligence and Statistics 2021 and the IEEE International Symposium on Information Theory 2022 ( \cite{mayekar2021wyner}  and \cite{mayekar2022wyner}, respectively).
  }}
\begin{abstract}
 Communication efficient distributed mean estimation is an important primitive that arises in many distributed learning and optimization scenarios such as federated learning. Without any probabilistic assumptions on the underlying data, we study the problem of distributed mean estimation where the server has access to side information. 
 We propose \emph{Wyner-Ziv estimators}, which are communication and computationally efficient and near-optimal when an upper bound for the distance between the side information and the data is known. 
  As a corollary, we also show that our algorithms provide efficient schemes for the classic  Wyner-Ziv problem in information theory. 
  In a different direction, when there is no knowledge assumed about the distance between side information and the data, we present an alternative
  Wyner-Ziv estimator that uses correlated sampling. This latter setting offers {\em universal recovery guarantees}, and perhaps will be of interest in practice when the number of users is large and keeping track of the distances between the data and the side information may not be possible.

 With this mean estimator at our disposal, we revisit basic problems in
 decentralized optimization and compression where our Wyner-Ziv estimator yields algorithms with almost optimal performance.
 First, we consider the problem of  communication  constrained distributed optimization and provide an algorithm which attains the optimal convergence rate by exploiting the fact that the gradient estimates are close to each other.
Specifically, the gradient compression scheme in our algorithm 
first uses
half of the parties to form side information and then uses our Wyner-Ziv estimator to compress the remaining half of the gradient estimates.

Finally, we apply our Wynzer-Ziv estimators to the classic Wyner-Ziv compression problem in information theory to get compression schemes that are computationally efficient and are almost optimal under much more relaxed assumptions than the standard probabilistic setting.

\end{abstract}

\newpage
\tableofcontents
\newpage


\section{Introduction}
\subsection{Background}
Consider the problem of distributed mean estimation for $n$ vectors $\{x_i\}_{i=1}^{n}$ in $\R^d$, where $x_i$ is available to client $i$. Each client communicates to a server using a few bits to enable the server to compute the  empirical mean
\begin{align}\label{e:sample_mean}
\bar{x}=\frac{1}{n}\sum_{i=1}^n x_i.
\end{align}

This estimation problem has become a crucial primitive for distributed optimization scenarios such as federated learning, where the data is distributed across multiple clients (see \cite{bottou2010large}, \cite{kairouz2019advances}, \cite{konevcny2016federated}, \cite{alistarh2017qsgd}, \cite{ramezani2019nuqsgd}, \cite{ faghri2020adaptive}, \cite{jhunjhunwala2021adaptive}, \cite{ghosh2020distributed}, \cite{saha2021decentralized}, \cite{lin2020achieving}, \cite{reisizadeh2020fedpaq} \cite{gandikota2019vqsgd}, \cite{basu2019qsparse},  \cite{seide20141}, \cite{wang2018atomo}, \cite{Sparse_SGD}, \cite{wen2017terngrad}, \cite{wangni2018gradient}, \cite{lu2020moniqua}, \cite{vogels2019powersgd},  \cite{acharya2019distributed}).  One of the main bottlenecks in such distributed scenarios is the significant communication cost incurred due to client communication at each iteration of the distributed algorithm. This has spurred a recent line of work which seeks  to design quantizers  to express $x_i$s using  a low precision and, yet, enable the server to compute a high accuracy estimate of $\bar{x}$ (see \cite{suresh2017distributed}, \cite{konevcny2018randomized}, \cite{chen2020breaking}, \cite{huang2019optimal},  \cite{mayekar2020ratq}, \cite{safaryan2020uncertainty}, \cite{albasyoni2020optimal}, \cite{OTA22JMT},  and the references therein).

Most of the recent works on distributed mean estimation focus on the setting where the server  must estimate the sample mean based on the client vectors, and nothing else. However, in practice, the server may also have access to some side information. For example, consider the task of training a machine learning model based on remote client data as well as some publicly accessible data  \cite{augenstein2022mixed}. At each iteration, the server communicates its global model to the client, based on which the clients compute their updates (the gradient estimates based on their local data), compress them, and then send them to the server. The server may choose to compute its own update using the publicly available dataset to complement the updates from the client.
In a related setting, the server can use the previously received gradients as side information for the next gradients expected from the clients.  
{Alternatively, the server may `simulate' side information from some client updates.  It can then use this side information to form much more accurate estimates of other clients' updates, leading to a faster distributed training algorithm.  We discuss this application in detail in Section \ref{s:appl}.} Similarly, distributed mean estimation with side information can be used for variance reduction in other problems such as power iteration or parallel SGD ($cf.$~\cite{davies2020distributed}).

Motivated by these observations, for the distributed mean estimation problem described at the start of the section, we study the setting in which the server has access to the side information $\{y_i\}_{i=1}^{n}$ in $\R^d$, in addition to the communication from clients. Here, $y_i$ can be viewed as server's initial 
estimate (guess)
of $x_i$. We emphasize that the side information $y_i$ is available only to the sever and can, therefore, be used for estimating the mean at the server, but is not available to the clients while quantizing the updates $\{x_i\}_{i=1}^n$. 

\subsection{The model} 
Consider the input $\mathbf{x} :=(x_1,\ldots, x_n)$ and the side information $\mathbf{y} :=(y_1,\ldots, y_n)$.
The clients use a communication protocol to send $r$ bits each about their observed vector
to the server. For the ease of implementation, we restrict to non-interactive
protocols. Specifically,
we allow {\em simultaneous message passing} (SMP) protocols $\pi=(\pi_1, ...,\pi_n)$ where the communication
$C_i=\pi_i(x_i, U)\in \{0,1\}^r$ of client\footnote{$[n] :=\{1, \ldots,  n\}$.} $i$, $i\in [n]$,   can only depend on its local observation $x_i$ and public randomness $U$.
Note that the clients are not aware of side information $\mathbf{y}$, which is available only to the server.
In effect, the message $C_i$ is obtained by {\em quantizing} $x_i$ using an appropriately chosen
randomized quantizer. 
Denoting the overall communication by $C^n:=(C_1, C_2, ..., C_n)$,
the server uses the transcript $(C^n, U)$ of the protocol
and the side information $\mathbf{y}$ to form the estimate 
of the sample mean\footnote{While side information $y_i$ is associated with client $i$, we do
not enforce this association in our general formulation at this point.} $\hat{\bar{x}}=\hat{\bar{x}}(C^n, U, \mathbf{y})$;
see Figure~\ref{fig:setup} for a depiction of our setting.
We call such a $\pi$ an {\em $r$-bit SMP protocol} 
with input $(\mathbf{x}, \mathbf{y})$
and output $\hat{\bar{x}}$.

\begin{figure}[ht]
\begin{center}

\begin{tikzpicture}[scale=1, transform shape,
    pre/.style={=stealth',semithick},
    post/.style={->,shorten >=1pt,>=stealth',semithick},
dimarrow/.style={->, >=latex, line width=1pt},
dimmarrow/.style={<->, >=latex, line width=1pt},
    ]
\clip (2,17.35) rectangle  (11,13.65) ;

\draw[ fill = brown!30! ] (3.5, 17) rectangle(7.5, 16);
\node[align=center] at (5.5,16.5) {$(y_1,\ldots, y_n)$} ;
\node[align=center] at (5.5,17.2) {Server} ;

\draw[ fill = orange!50!] (3, 15) rectangle(4, 14);
\node[align=center] at (3.5,14.5) {$x_1$};
\node[align=center] at (3.5,13.8) {Client 1} ;

\draw[ fill = orange!50!] (4.5, 15) rectangle(5.5, 14);
\node[align=center] at (5,14.5) {$x_2$};
\node[align=center] at (5,13.8) {Client 2} ;

\draw[ fill = orange!50!] (7, 15) rectangle(8, 14);
\node[align=center] at (7.5, 14.5) {$x_n$}; 
\node[align=center] at (7.5,13.8) {Client n} ;

\draw[dashed, line width=1pt, ta3aluminium] (5.7,14.5) to (6.8,14.5);

\draw[dimarrow, ta3aluminium] (3.5,15) to (8-3,15.9); 
\draw[dimarrow, ta3aluminium]  (5,15) to(9.5-3.5,15.9) ;
\draw[dimarrow, ta3aluminium] (7.5,15) to(9.5-3,15.9)  ;

\end{tikzpicture}

\end{center}
\caption{Problem setting of mean estimation with side information}
\label{fig:setup}
\end{figure}

We measure the performance of protocol $\pi$ for inputs $\mathbf{x}$ and $\mathbf{y}$ 
and output $\hat{\bar{x}}$
using mean squared error (MSE)
given by
 \[
 \MSE(\pi, \mathbf{x}, \mathbf{y}):=\E{\norm{\hat{\bar{x}}-\bar{x}}_2^2},
 \]
  
where the expectation is over the public randomness $U$ and $\bar{x}$ is given in \eqref{e:sample_mean}.
We study the MSE of protocols for $\mathbf{x}$
and $\mathbf{y}$ such that 
the Euclidean distance between $x_i$ and $y_i$ is at most $\Delta_i$, i.e.,
\begin{align}
\norm{x_i-y_i}_2 \leq \Delta_i, \quad \forall \,i \in[n].
\label{eq:delta_cond}
\end{align}
Denoting $\mathbf{\Delta} :=(\Delta_1, \ldots, \Delta_n)$, we are interested in the performance
of our protocols for the following two settings: 

{\bf 1. The {\em \known}setting}, where $\Delta_i$ is known to client $i$ and the server; 

{\bf 2. The {\em \unknown}setting}, where $\Delta_i$s are unknown to everyone.

In both these settings, we seek to find efficient $r$-bit quantizers for $x_i$ that will allow accurate sample mean estimation.
In the \known setting, the quantizers of different clients can be chosen using the knowledge of $\mathbf{\Delta}$;
in the \unknown setting, they must be fixed irrespective of $\mathbf{\Delta}$.

In another direction, we distinguish the {\em low-precision} setting of $r\leq d$ from the {\em high-precision} setting of $r>d$.
The former is perhaps of more relevance for federated learning and high-dimensional distributed optimization, while the latter
has received a lot of attention in the information theory literature on rate-distortion theory. Moreover, the distributed estimation problem is a lot more interesting in the low-precision setting. We, therefore, focus more on this regime while also providing extensions of our protocols to the high-precision regime.

As a benchmark, we recall the result for distributed mean estimation with no side-information from \cite{suresh2017distributed}.
When all $x_i$s lie in the Euclidean ball of radius $1$, \cite{suresh2017distributed} showed that  the minmax MSE in the no side-information case is 
\begin{equation}
    \label{eq:opt}
{\Theta}\left(\frac{d}{nr} \right).
\end{equation}
\subsection{Our contributions}\label{s:OurC}
Drawing on ideas from distributed quantization problem 
in information theory ($cf.$~\cite{wyner1976rate}), 
specifically the Wyner-Ziv problem, we present {\em Wyner-Ziv estimators}
for distributed mean estimation.
In the \known setting, for a fixed $\mathbf{\Delta}$,
and the low-precision setting of $r\leq d$,
we propose an {\em $r$-bit SMP protocol} $\pi^*_{\tt k}$ which satisfies\footnote{We denote by $\log(\cdot)$ logarithm to the base $2$ and by $\ln(\cdot)$ logarithm to the base $e$.}
\[
\MSE(\pi^*_{\tt k},  \mathbf{x}, \mathbf{y})={O}\left(\sum_{i=1}^{n}\frac{\Delta_i^2}{n} \cdot \frac{d \log \log n}{nr} \right),
\]
for all $\mathbf{x}$ and $\mathbf{y}$ satisfying~\eqref{eq:delta_cond}. 
 Thus, in the case where all $x_i$s lie in the Euclidean ball of radius $1$, we improve upon the optimal estimator for distributed mean estimation \eqref{eq:opt} in the regime $\sum_{i=1}^{n}\frac{\Delta_i^2 \log \log n}{n}\leq 1$. Our estimator is motivated by the classic Wyner-Ziv problem, and hence, we refer to it as
 the {\em Wyner-Ziv estimator}. The details of the algorithm are given in Section~\ref{s:sRMQ}.
 
Our protocol uses the same (randomized) $r$-bit quantizer for
each client's data and simply uses the sample mean of the quantized vectors as the estimate for $\bar{x}$. 
Furthermore, the common quantizer used by the clients is efficient and has nearly linear time-complexity of $O(d \log d)$.
Our proposed quantizer first applies a random rotation (proposed in \cite{ailon2006approximate}) to the input vectors $x_i$ 
at client $i$
and the side information vector $y_i$ at the server. This ensures that the $\Delta_i$ upper bound on the $\ell_2$ distance of $x_i$ and $y_i$ is converted to roughly a $\Delta_i/\sqrt{d}$ upper bound on the $\ell_{\infty}$ distance between $x_i$ and $y_i$. This then enables us to use efficient one-dimensional  quantizers for each coordinate of the $x_i$, which can now operate with the knowledge that the server knows a $y_i$
with each coordinate within roughly ${\Delta_i}/{\sqrt{d}}$ of $x_i$'s coordinates.

Moreover, we show that this protocol $\pi^*_{\tt k}$ has optimal (worst-case) MSE up to an $O(\log \log n)$ factor. That is, we show that for any other
$r$-bit SMP protocol $\pi$ for $r\leq d$, we can find $\mathbf{x}$ and $\mathbf{y}$ satisfying~\eqref{eq:delta_cond} such that
\begin{align*}
\MSE(\pi, \mathbf{x}, \mathbf{y})={\Omega}\left(\min_{ i \in \{1, \ldots, n\}} \Delta_i^2 \cdot \frac{d }{nr} \right).
\label{e:lower_bound}
\end{align*}

In the \unknown setting, we propose a protocol $\pi^*_{\tt u}$ which adapts to the unknown distance $\Delta_i$ between $x_i$ and $y_i$ and, remarkably,  provides MSE guarantees dependent on $\mathbf{\Delta}$. Specifically, 
for the low-precision setting of $r\leq d$,
the protocol satisfies\footnote{We denote by $\ln^*(a)$ the minimum number of iterated logarithms to the base $e$ that must be applied to $a$ to make it less than $1$.}
 \[
 \MSE(\pi^*_u, \mathbf{x}, \mathbf{y})={O}\left(\sum_{i=1}^{n}\frac{\Delta_i}{n}  \cdot \frac{d \log^*d}{nr} \right),
 \] 
 for all $\mathbf{x}$ and $\mathbf{y}$ in the unit Euclidean ball
 $\B:= \{x \in \R^d: \norm{x}_2 \leq 1\}$ and  satisfying~\eqref{eq:delta_cond}.
 Thus,  we improve upon the  optimal estimator for the no side information counterpart {\eqref{eq:opt}} in the regime $\sum_{i=1}^{n}\frac{\Delta_i \ln^* d}{ n} \leq 1.$
 Once again, the quantizer employed by the protocol is efficient and has nearly linear time-complexity of $O(d \log d)$. At the heart of our proposed quantizer is the technique of correlated sampling from~\cite{Holenstein07} which enables to derive a $\mathbf{\Delta}$ dependent MSE bound. 
 
Furthermore, both our quantizers can be extended to the high-precision regime of  $r>d$. The quantizer for the \known setting directly extends by using $r/d$ bits per dimension. The MSE of the SMP protocol using this quantizer for all the clients is only a factor of $\log n + r/d$ from the lower bound derived in \cite{davies2020distributed} for the high-precision regime. The quantizer for the \unknown setting can be extended by sending the ``type'' of the communication vector, 
 following an idea proposed in~\cite{mayekar2020limits}.  The MSE of the SMP protocol using this quantizer for all the clients falls as $2^{-r/d\ln^*d}$ as opposed to $d/r$ 
that can be obtained using naive extensions of our quantizer.
 
As remarked at the outset, mean estimator is a basic primitive that can be used in
 problems related to decentralized optimization. Indeed, we apply our Wyner-Ziv estimator to a 
 basic communication-constrained optimization problem and show that it leads to much faster algorithms for communication-constrained distributed optimization. Our first algorithm {\tt WZ-SGD} significantly improves over the baseline Parallel SGD algorithm and is almost optimal for a large number of remote clients. We also propose a universal distributed optimization algorithm {\tt UWZ-SGD}, where the remote clients can operate without the knowledge of the stochastic gradient's variance. {\tt UWZ-SGD}, too, improves the performance of the baseline Parallel SGD algorithm for large enough remote clients.

Finally, in a different direction, we revisit the classic Gaussian rate-distortion problem ($cf.$~\cite{Oohama97}) in information theory. 
In this problem, the encoder observing an Gaussian vector $X$ wants to send it to a 
decoder observing a correlated Gaussian vector $Y$ using $r$ bits. 
Using the quantizer developed in the \known setting, we obtain an efficient scheme for this classic problem which requires a minuscule excess rate over the optimal asymptotic rate.
Our scheme for this classic problem is  interesting for two reasons: The first that
it gives almost optimal result while using ``covering'' for each coordinate separately and hence is computationally efficient. All the existing schemes rely on high-dimensional covering constructed using structured codes and  most of them are computationally inefficient.
The second reason is that we do not require the distribution to be exactly Gaussian and subgaussianity suffices.

\subsection{Prior work}
The \known setting described above was first considered in \cite{davies2020distributed}.   The scheme of~\cite{davies2020distributed}
 relies on 
lattice quantizers with information theoretically optimal covering radius. Explicit lattices to be used and computationally efficient
decoding is not provided. 

In contrast, we provide explicit computationally efficient protocols
for both low- and high-precision settings. Also, we
establish lower bounds showing the optimality of our quantizer upto a multiplicative factor of $\log \log n$
in the low-precision regime of $r \leq d$ .  In comparison, the scheme of \cite{davies2020distributed} is off by a factor of $\frac{d}{r}$ from this lower bound. Thus, when $r \ll d$, 
our scheme performs significantly better than 
that in \cite{davies2020distributed}.
We remark that the \unknown setting, which is perhaps more important in certain applications where estimating the distance of
side information of each client is infeasible, has not been considered before. 

In the classic information theoretic setting, related problems of quantization with side information at the decoder have been considered
in rate-distortion theory starting with the seminal work of Wyner and Ziv~\cite{wyner1976rate}. Practical codes
for settings where the observations are generated from known distributions have been constructed
using channel codes; see, for instance,~\cite{Zamir02, Pradhan03, KoradaUrbanke10, LingGaoBelfiore12, LiuL15a}. However, these codes are computationally too expensive for our setting, cannot
be directly used for our distribution-free setup, and are designed for the high-precision setting of $r>d$. We remark that the scheme
proposed in~\cite{davies2020distributed} is similar to lattice schemes in~\cite{Zamir02, LingGaoBelfiore12, LiuL15a}.

The version of the distributed mean estimation problem with no side information at the server has been extensively studied. For any protocol in this setting operating with a precision constraint of $r \leq d$  bits per client, using a strong data processing inequality from \cite{duchi2014optimality}, \cite{suresh2017distributed} shows a lower bound on MSE of  $\displaystyle{\Omega\left(\frac{d}{nr}\right),}$ when all $x_i$s lie in the Euclidean ball of radius one.  \cite{suresh2017distributed} propose a rotation based  uniform quantization scheme which matches this lower bound up to a factor of $\log \log d$ for any precision constraint $r$. This upper bound is further improved by a random rotation based adaptive quantizer in \cite{mayekar2020ratq} to a much tighter $\log \log^*d$ factor.  For a precision constraint of $r =\Theta(d)$, the variable-length quantizers proposed in \cite{suresh2017distributed}, \cite{alistarh2017qsgd}, \cite{ramezani2019nuqsgd} as well as the fixed-length quantizers in \cite{mayekar2020limits}, \cite{gandikota2019vqsgd} are order-wise optimal.

A recent work on distributed mean estimation \cite{Jhunjhunwala21}, which came after the conference version of our paper \cite{mayekar2021wyner}, proposed two different schemes for distributed mean estimation. The first scheme improves the performance of the standard Rand-k ($cf.$ \cite{Sparse_SGD}) estimator when data across the clients are correlated. The second scheme uses previous gradient updates to improve the performance of the standard scheme. Using previous gradient updates can be seen as a special case of our setup when we use a historical gradient as side information. Interestingly, the second scheme in \cite{Jhunjhunwala21} uses the idea of centering the gradient estimate around the side information \cite[ Equation 12]{Jhunjhunwala21}, which is similar to the decoding rule used in our second Wyner-Ziv estimate \eqref{e:s_RDAQ}. A follow-up  work of \cite{mayekar2021wyner}, \cite{ LiangWu2021} also proposed using correlation amongst clients to improve over standard sample mean estimators. {Another recent work \cite{suresh2022correlated}, which also came up after our conference version \cite{mayekar2021wyner},
 proposed using correlated randomness for stochastic quantization across clients to improve the performance of the standard scheme. }

\cite{liang2021wynerziv} and an application considered in  \cite{suresh2022correlated} are closest to the application of communication distributed optimization considered in Section \ref{s:appl}. \cite{liang2021wynerziv} builds on the distributed mean estimation schemes in \cite{mayekar2021wyner} and proposes an algorithm for non-convex distributed optimization. However, unlike our proposed schemes, \cite{liang2021wynerziv} suggests using historical gradients as side information, and its optimality is unclear.  

 \cite{suresh2022correlated} considers the same setting for communication-constrained distributed optimization as considered in this paper. The proposed scheme, too, is similar to  $\tt UWZ-SGD$, one of the schemes proposed in this paper. In more detail, both schemes leverage the fact that the stochastic estimates of the gradients across clients are close to each other to reduce the compression error. Moreover, they do this by using correlated randomness, and the compression can operate without knowing how close the stochastic gradients are across clients.  However, there are crucial differences between the two schemes. At a high level, our scheme is designed for the low precision setting (where per client precision is less than the dimension) and only uses a fixed length code, the scheme in \cite{suresh2022correlated} is designed for the high precision setting and uses a variable length code in this setting.

Our results for the low-precision regime  in \known setting are provided in Section~\ref{s:known}
and in the \unknown setting are provided in Section~\ref{s:unknown}. In Section \ref{s:hpr}, we extend our results to the high-precision regime. In Section \ref{s:appl}, we derive new algorithms for communication-constrained distributed optimization using our distributed mean estimation protocols. In Section  \ref{s:GWZ}, we provide an application of the quantizer developed for the known-setting to the Gaussian Wyner-Ziv problem. Finally, we close with all the proofs in Section \ref{s:proofs}. 
Before presenting these results, we review some preliminaries in the next section.

\section{Preliminaries and the structure of our protocols}\label{s:preliminaries}
While our lower bound for the \known setting holds for an arbitrary SMP protocol,
both the protocols we propose in this paper, for the \known and the \unknown settings, have a common structure.
We use $r$-bit quantizers to form estimates of $x_i$s at the server and then compute the sample mean of
the estimates of $x_i$s. To describe our protocols and facilitate our analysis, we begin by 
concretely defining the distributed quantizers needed for this problem. Further, we present a simple result relating
the performance of the resulting protocol to the parameters of the quantizer.

An $r$-bit quantizer  $Q$ for input vectors in $\X \subset \R^d$ 
and side information $\Y\subset \R^d$
consists of randomized mappings\footnote{We can use public randomness $U$
for randomizing.}
$(\Qenc, \Qdec)$ with the encoder mapping $\Qenc:\X \to\{0,1\}^{r}$  used by the client to quantize
and the decoder mapping $\Qdec: \{0,1\}^r \times \Y \to \X$ used by the server to aggregate quantized vectors. 
The overall quantizer $Q$ is given by the composition mapping $Q(x, y)=\Qdec(
(\Qenc(x), y)$. 

In our protocols, for input $\mathbf{x}$ and side information $\mathbf{y}$,
client $i$ uses the encoder $\Qenc_i$ for the $r$-bit quantizer $Q_i$ to 
send $\Qenc_i(x_i)$. The server uses $\Qenc_i(x_i)$ and $y_i$ to form the estimate 
$\hat{x}_i=Q_i(x_i,y_i)$
of $x_i$. We assume that the randomness used in quantizers $Q_i$ for different
$i$ is independent, whereby $\hat{x}_i$ are independent of each other for different
$i$. Then server finally forms the estimate of the sample mean as
\begin{align}
  \hat{\bar{x}}:=\frac 1 n\sum_{i =1}^n \hat{x}_i.   
  \label{e:estimate}
\end{align}

For any quantizer $Q$, the following two quantities
will determine its performance when used in our distributed mean estimation protocol:
\eq{
\alpha(Q; \Delta) &:= \sup_{x\in \X, y\in \Y : \norm{x-y}_2\leq \Delta}\E{\norm{Q(x, y)-x}_2^2},\\
\beta(Q; \Delta) &:= \sup_{x\in \X, y\in \Y : \norm{x-y}_2\leq \Delta}\norm{\E{Q(x, y)-x}}_2^2,
}
where the expectation is over the randomization of the quantizer. Note that 
$\alpha(Q; \Delta)$ can be interpreted as the worst-case MSE  and $\beta(Q, \Delta)$ the worst-case bias 
over $x\in \X$ and $y\in \Y$ such that $\norm{x-y}_2\leq \Delta$.

The result below will be very handy for our analysis. 
\begin{lem}\label{l:main}
For $\mathbf{x}\in \X^n$ and $\mathbf{y}\in \Y^n$ satisfying~\eqref{eq:delta_cond} and $r$-bit quantizers $Q_i$, $i \in [n]$,
using independent randomness for different $i\in[n]$,
the estimate $\hat{\bar{x}}$ in~\eqref{e:estimate} 
and the sample mean  $\bar{x}$ in~\eqref{e:sample_mean}
satisfy
\[
\E{ \|\hat{\bar{x}} - \bar{x}\|_2^2}
\leq \sum_{i=1}^n\frac{\alpha(Q_i; \Delta_i)}{n^2} +  \sum_{i=1}^n\frac{\beta(Q_i; \Delta_i)}{n}.
\]
\end{lem}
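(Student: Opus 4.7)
The plan is to decompose $\hat{\bar{x}} - \bar{x} = \tfrac{1}{n}\sum_{i=1}^n Z_i$ where $Z_i := \hat{x}_i - x_i = Q_i(x_i,y_i) - x_i$, and then expand the squared norm and take expectations. Writing $b_i := \E{Z_i}$, the definitions of $\alpha$ and $\beta$ immediately give $\E{\|Z_i\|_2^2} \leq \alpha(Q_i;\Delta_i)$ and $\|b_i\|_2^2 \leq \beta(Q_i;\Delta_i)$ under the hypothesis $\|x_i - y_i\|_2 \leq \Delta_i$. The key structural input will be the independence of the $Z_i$ across $i\in[n]$, which is guaranteed by the assumption that the randomness in $Q_i$ is independent for different $i$.

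Next, I would write
\[
\E{\|\hat{\bar{x}} - \bar{x}\|_2^2} = \frac{1}{n^2}\sum_{i=1}^n \E{\|Z_i\|_2^2} + \frac{1}{n^2}\sum_{i\neq j}\E{\langle Z_i, Z_j\rangle},
\]
and use independence to replace $\E{\langle Z_i, Z_j\rangle}$ by $\langle b_i, b_j\rangle$ for $i \neq j$. Completing the square on the bias terms gives
\[
\frac{1}{n^2}\sum_{i\neq j}\langle b_i, b_j\rangle = \frac{1}{n^2}\Bigl\|\sum_{i=1}^n b_i\Bigr\|_2^2 - \frac{1}{n^2}\sum_{i=1}^n \|b_i\|_2^2,
\]
and hence
\[
\E{\|\hat{\bar{x}} - \bar{x}\|_2^2} = \frac{1}{n^2}\sum_{i=1}^n \bigl(\E{\|Z_i\|_2^2} - \|b_i\|_2^2\bigr) + \frac{1}{n^2}\Bigl\|\sum_{i=1}^n b_i\Bigr\|_2^2.
\]

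The final step is to bound each piece. The first sum is at most $\tfrac{1}{n^2}\sum_i \alpha(Q_i;\Delta_i)$ by the $\alpha$-bound on $\E{\|Z_i\|_2^2}$ (dropping the nonnegative $\|b_i\|_2^2$ terms). For the second term, the Cauchy--Schwarz inequality (equivalently, the power-mean inequality) yields
\[
\Bigl\|\sum_{i=1}^n b_i\Bigr\|_2^2 \leq n\sum_{i=1}^n \|b_i\|_2^2 \leq n\sum_{i=1}^n \beta(Q_i;\Delta_i),
\]
which after division by $n^2$ gives the second term in the claimed bound. Combining these two bounds yields the lemma.

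There is no real obstacle here; the only point requiring care is keeping track of the cross terms so that the bias contribution appears with a $1/n$ rather than a $1/n^2$ factor, which is precisely the effect of the norm $\|\sum_i b_i\|_2^2$ (rather than $\sum_i \|b_i\|_2^2$) on the right-hand side. Independence of the $Q_i$ is used exactly once, in replacing $\E{\langle Z_i, Z_j\rangle}$ by $\langle b_i, b_j\rangle$.
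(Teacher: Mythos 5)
Your proof is correct and follows essentially the same route as the paper: expand the mean-squared error into diagonal and cross terms, use independence across clients to replace cross-expectations $\E{\langle Z_i,Z_j\rangle}$ by $\langle b_i,b_j\rangle$, and then bound the resulting bias contribution by a single application of Cauchy--Schwarz (Jensen). The only cosmetic difference is that you keep the exact identity $\sum_{i\neq j}\langle b_i,b_j\rangle=\bigl\|\sum_i b_i\bigr\|_2^2-\sum_i\|b_i\|_2^2$ and then bound the norm of the vector sum, whereas the paper's writeup first passes to $(\tfrac1n\sum_i\|b_i\|_2)^2$ and obtains the slightly sharper factor $(n-1)/n^2$ before relaxing it to $1/n$; both yield the stated bound.
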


\section{Distributed mean estimation with \known}\label{s:known}
In this section, we present our Wyner-Ziv estimator for the \known setting. As described in Section~\ref{s:preliminaries}, we use the the same (randomized) 
quantizer across all the clients and form the estimate of sample mean as in~\eqref{e:estimate}. We only need to define the common quantizer 
used by all the clients, which we do in Section \ref{s:sRMQ}. In Sections \ref{s:MQ} and \ref{s:RMQ}, we provide the basic
building blocks of our final quantizer.
Further, in Section~\ref{s:LB}, we derive a lower bound for the worst-case MSE that  
establishes  the near-optimality of our protocol. 
 Throughout we restrict to the low-precision setting of $r\leq d$.
\subsection{Modulo Quantizer (MQ)}\label{s:MQ}
The first subroutine used by our larger quantizer is the {\em Modulo Quantizer } (MQ).
MQ is a one dimensional distributed quantizer that can be applied to the input $x\in\R$
with side information $y \in \R$. We give an input parameter $\Delta^\prime$
to MQ where $|x-y|\leq \Delta^\prime$. In addition to $\Delta^\prime$,
 MQ also has the resolution parameter $k$ 
and the lattice parameter $\eps$
 as inputs.

For an appropriate
$\eps$ to be specified later, 
we consider the lattice $\Z_{\eps}= \{ \eps z: z \in \Z\}$. 
For a given input $x$, the encoder $\Qenc_{\tt M}$ 
finds the closest points in $\Z_{\eps}$ larger and smaller than $x$.
Then, one of these points is sampled randomly to get an unbiased estimate of $x$. The sampled point will be of the form $\tilde{z} \eps$, where $\tilde{z}$ is in $\Z$.  We note that
the chosen point $\tilde{z}$ satisfies
\begin{align}
\eps\E{\tilde{z}} &= x\,\, \text{and}
\nonumber
\\
|x-\eps \tilde{z}|&< \eps, \quad \text{almost surely}.
    \label{e:tildez_close}
\end{align}
The encoder sends $w=\tilde{z} \mod k$ to the decoder, which requires $\log k$ bits.

Upon receiving this $w$, the decoder $\Qdec$ looks at the set $\Z_{w,\eps}=\{(zk+w )\cdot \eps: z \in \Z\}$ and decodes the point closest to $y$, which we denote by $Q_{\tt M}(x, y)$.  
Note
that declaring $y$ will already give a MSE of less than $\Delta$. 
A useful property of this decoder is that
its output is always within a bounded distance from $y$; namely,
since in Step 1 of Alg.~\ref{a:D_MCQ} we look for the closest 
point to $y$ in the lattice $Z_{w,\eps}:=\{(zk+w)\cdot \eps: z\in \Z\}$,
the output $Q_{{\tt M}}(x,y)$ satisfies 
\begin{align}
|Q_{{\tt M}}(x,y)-y|\leq k\eps, \quad \text{almost surely}.
    \label{e:boundedness_MQ_decoder}
\end{align}

We summarize MQ in Alg.~\ref{a:E_MCQ} and~\ref{a:D_MCQ}.
\begin{figure}[ht]
\centering
\begin{tikzpicture}[scale=1, every node/.style={scale=1}]
\node[draw,text width= 8 cm, text height=,] {%
\begin{varwidth}{\linewidth}
            
            \algrenewcommand\algorithmicindent{0.7em}
\begin{algorithmic}[1]
\vspace{-0.5cm}
\Require Input $ x\in \R$, Parameters 
{$k$, $\Delta^{\prime}$, and $\eps$}

\State Compute $z_u = \ceil{x/\eps}$,  $z_l = \floor{x/\eps}$

 \State  Generate
     $
      {\tilde{z}} =
\begin{cases}
z_u, \quad w.p. ~ x/\eps - z_l
 \\ z_l,  \quad w.p. ~  
  z_u-x/\eps
\end{cases}
$
   
 \State \textbf{Output:} $\Qenc_{\tt M}(x)=\tilde{z} \text{ mod } k$
\end{algorithmic}  
\end{varwidth}};
 \end{tikzpicture}
 
 \renewcommand{\figurename}{Algorithm}
 \caption{Encoder  $\Qenc_{\tt M}(x)$ of MQ}\label{a:E_MCQ}
 \end{figure}
 
 \begin{figure}[ht]
\centering
\begin{tikzpicture}[scale=1, every node/.style={scale=1}]
\node[draw, text width= 8 cm, text height=,] {%
\begin{varwidth}{\linewidth}
            
            \algrenewcommand\algorithmicindent{0.7em}
 \renewcommand{\thealgorithm}{}
\begin{algorithmic}[1]
  \Require Input $w \in \{0, \ldots, k-1\}$, $y \in \R$
   \State Compute $\hat{z}= \arg\min \{ |(zk +w)\cdot \eps -y| \colon z \in \Z\} $
    \State \textbf{Output:} $\Qdec_{\tt M}(w, y)=(\hat{z}k+w)\eps$\label{step:output_coordinate}
\end{algorithmic}
\end{varwidth}};
 \end{tikzpicture}
 
 \renewcommand{\figurename}{Algorithm}
 \caption{Decoder $\Qdec_{\tt M}(w, y)$ of MQ}\label{a:D_MCQ}
 \end{figure}




The result below provides performance guarantees for $Q_{\tt M}$. The key observation is that the output $Q_{\tt M}(x, y)$ of the quantizer equals $\tilde{z}\eps$ with $\tilde{z}$ found at the encoder, if
$\eps$ is set appropriately. 
\begin{lem}\label{t:MQ}
Consider the Modulo Quantizer $Q_{\tt M}$ described in 
Alg.~\ref{a:E_MCQ} and~\ref{a:D_MCQ} with parameter
$\eps$ set to satisfy 
\begin{align}
 k\eps\geq 2(\eps + \Delta^\prime).
 \label{e:parameter_condition}
\end{align}
Then, for every $x, y$ in $\R$ such that $|x-y| \leq \Delta^\prime$,
the output $Q_{\tt M}(x, y)$ of MQ satisfies
 \begin{align*}
\E{Q_{\tt M}(x, y)}&=x\,\,\text{ and }
\\
|Q_{\tt M}(x, y)-x|&\leq \eps,
\quad \text{almost surely.}
 \end{align*}
 In particular, we can set $\eps=2\Delta^\prime/(k-2)$, 
 to get 
 $|Q_{\tt M}(x, y)-x|\leq 2\Delta^\prime/(k-2)$.
Furthermore, the output of $Q_{\tt M}$ can be described in $\log k$ bits.
\end{lem}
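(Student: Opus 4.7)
The plan is to show that, under the parameter condition $k\eps \geq 2(\eps+\Delta')$, the decoder's output $Q_{\tt M}(x,y)$ is exactly equal to $\tilde z\eps$, where $\tilde z$ is the randomized rounding produced at the encoder in Alg.~\ref{a:E_MCQ}. Once this identification is made, the desired conclusions will follow directly from the two properties of $\tilde z$ recorded in \eqref{e:tildez_close}: unbiasedness $\eps\E{\tilde z}=x$ gives $\E{Q_{\tt M}(x,y)}=x$, and the almost-sure bound $|x-\eps\tilde z|\leq \eps$ gives the deterministic bound $|Q_{\tt M}(x,y)-x|\leq \eps$. The bit-count claim is immediate since $w=\tilde z\bmod k\in\{0,\dots,k-1\}$.

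The key step, then, is the decoding correctness. First I would observe that $\tilde z\eps$ itself lies in the lattice coset $Z_{w,\eps}=\{(zk+w)\eps:z\in\Z\}$ considered by the decoder, since $w=\tilde z\bmod k$. The other candidate points in $Z_{w,\eps}$ are of the form $\tilde z\eps + m k\eps$ for nonzero integers $m$, and hence every other candidate is at distance at least $k\eps$ from $\tilde z\eps$. By the triangle inequality,
\[
|\tilde z\eps - y|\leq |\tilde z\eps - x| + |x - y|\leq \eps + \Delta',
\]
whereas for any other $p\in Z_{w,\eps}$,
\[
|p - y|\geq |p - \tilde z\eps| - |\tilde z\eps - y|\geq k\eps - (\eps+\Delta').
\]
The hypothesis $k\eps\geq 2(\eps+\Delta')$ rearranges to $k\eps - (\eps+\Delta')\geq \eps+\Delta'$, so $\tilde z\eps$ is a closest point to $y$ in $Z_{w,\eps}$ and the decoder in Alg.~\ref{a:D_MCQ} returns $\tilde z\eps$ (breaking any tie that occurs at the boundary case consistently). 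This is the only step that is not bookkeeping, and it is the natural ``modulo reduction works because the ambiguity is larger than the correlation'' argument familiar from Wyner--Ziv type constructions.

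With $Q_{\tt M}(x,y)=\tilde z\eps$ established, the claim $\E{Q_{\tt M}(x,y)}=x$ and $|Q_{\tt M}(x,y)-x|\leq \eps$ are just restatements of \eqref{e:tildez_close}. Finally, I would verify the specialized choice $\eps=2\Delta'/(k-2)$: a short algebraic check gives $k\eps=2\eps+2\Delta'=2(\eps+\Delta')$, so \eqref{e:parameter_condition} is satisfied with equality, and hence $|Q_{\tt M}(x,y)-x|\leq \eps=2\Delta'/(k-2)$. The description length of $w$ in $\log k$ bits completes the statement. The only mild subtlety worth flagging in the write-up is the tie-breaking at the boundary when $k\eps=2(\eps+\Delta')$, which can be handled either by deterministic tie-breaking in the $\arg\min$ of Alg.~\ref{a:D_MCQ} or by interpreting the condition as a strict inequality whenever $|x-y|<\Delta'$; either way, the conclusion is unaffected.
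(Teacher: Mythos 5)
Your proof takes exactly the paper's route: show the decoder in Alg.~\ref{a:D_MCQ} recovers $\tilde z\eps$, then read off unbiasedness, the almost-sure error bound, and the bit count directly. The one place your write-up is slightly weaker than it needs to be is the triangle-inequality chain: you bound $|\tilde z\eps - y|\leq \eps + \Delta^\prime$ with a weak inequality, which leaves a potential tie exactly when $k\eps = 2(\eps+\Delta^\prime)$, and your proposed patch of ``treat the condition as strict whenever $|x-y|<\Delta^\prime$'' does not cover the boundary case $|x-y|=\Delta^\prime$, which the lemma explicitly allows. The paper avoids this entirely by invoking the \emph{strict} encoder bound already recorded in~\eqref{e:tildez_close}, namely $|x-\eps\tilde z|<\eps$ almost surely, so that $|\tilde z\eps - y|<\eps+\Delta^\prime$ and hence $k\eps\geq 2(\eps+\Delta^\prime) > 2|\tilde z\eps - y|$; every other coset point is then strictly farther from $y$, and no tie-breaking convention in the $\arg\min$ is needed. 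Tightening that one inequality makes your argument identical to the paper's.
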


We close with a remark that the modulo operation
used in our scheme is the simplest and easily implementable version of 
classic
coset codes obtained using nested lattices used in 
distributed quantization ($cf.$~\cite{Forney88, Zamir02, liu2016polar}) and was used in~\cite{davies2020distributed} as well.

\subsection{Rotated Modulo Quantizer (RMQ)}\label{s:RMQ} We now describe {\em{Rotated Modulo Quantizer} (RMQ)}. RMQ  and the subsequent quantizers in this section will be used to quantize input vector $x$ in $\R^d$ with side information $y$ in $\R^d$, where $\norm{x-y}_2\leq \Delta$. RMQ first preprocesses the input $x$ and side information $y$ by randomly rotating them and then simply applies MQ for each coordinate. For rotation, we multiply both $x$ and $y$ with a matrix $R$
given by
\begin{equation}\label{e:R}
R = \frac{1}{\sqrt{d}}\cdot HD,
\end{equation}
where $H$ is the $d\times d$
Walsh-Hadamard Matrix (see \cite{horadam2012hadamard})\footnote{We
  assume that $d$ is a power of $2$. If it isn't, we can pad the vector by zeros to make it a power of 2; even in the worst-case, this only doubles the required bits.}
  and $D$ is a diagonal matrix with each diagonal entry generated uniformly from $\{-1, +1\}$. Note that we use public randomness\footnote{In practice, this can be implemented by using the same seed for pseudo-random number generator at encoder and decoder.} to generate the same $D$ at both the encoder and the decoder. We formally describe 
  the quantizer in\footnote{We denote by $(e_1, ..., e_d)$   the standard basis of $\R^d$.}  Alg.~\ref{a:E_RMQ} and~\ref{a:D_RMQ}.
  
\begin{rem}\label{r:subg}
We remark that the vector $R\left(x-y\right)$ has zero mean subgaussian coordinates with a variance factor of $\Delta^2/d$. This implies that for all coordinates $i$ in $[d]$, we have \[P\left(|R\left(x-y\right)(i)| \geq \Delta^{\prime} \right) \leq 2e^{-\frac{{\Delta^{\prime}}^{2}d}{2\Delta^2}}\] (see, for instance, \cite[Theorem 2.8]{boucheron2013concentration}). This
observation allows us to use $\Delta^\prime\approx \Delta/\sqrt{d}$
for MQ applied to each coordinate.
\end{rem}

\begin{figure}[ht]
\centering
\begin{tikzpicture}[scale=1, every node/.style={scale=1}]
\node[draw,text width= 15 cm, text height=,] {%
\begin{varwidth}{\linewidth}
\algrenewcommand\algorithmicindent{0.7em}
\begin{algorithmic}[1]
\Require Input $ x\in \R^d$,  Parameters $k$ and $\Delta^\prime$
\State Sample $R$ as in \eqref{e:R} using public randomness
  \State $x^\prime = Rx$
\State \textbf{Output:}  $\Qenc_{{\tt M},R}(x)=[\Qenc_{{\tt M}}(x^\prime(1)), \ldots ,\Qenc_{{\tt M}}(x^\prime(d)]^T$
using parameters $k$, $\eps$, and $\Delta^\prime$
for $\Qenc_{{\tt M}}$ of Alg.~\ref{a:E_MCQ}
\end{algorithmic}  
\end{varwidth}};
 \end{tikzpicture}
 \renewcommand{\figurename}{Algorithm}
 
 \caption{Encoder  $\Qenc_{{\tt M},R}(x)$ of RMQ}\label{a:E_RMQ}
 \end{figure}
 \begin{figure}[ht]
\centering
\begin{tikzpicture}[scale=1, every node/.style={scale=1}]
\node[draw, text width= 15 cm, text height=,] {%
\begin{varwidth}{\linewidth}
            \algrenewcommand\algorithmicindent{0.7em}
 \renewcommand{\thealgorithm}{}
\begin{algorithmic}[1]
  \Require Input $w \in \{0, \ldots, k-1\}^d$, $y \in \R^d$, 
 \Statex ${~~~~~~~~}$ Parameters $k$ and $\Delta^\prime$
 \State Get $R$ from public randomness.
   \State $y^\prime = Ry$ 
    \State \textbf{Output:} $\displaystyle{\Qdec_{{\tt M},R}(w, y)=R^{-1} \sum_{i \in [d]}\Qdec_{{\tt M}}(w(i), y^\prime(i))e_i}$  
     using parameters $k$, $\eps$, and $\Delta^\prime$
for $\Qdec_{{\tt M}}$ of Alg.~\ref{a:D_MCQ}, 

\end{algorithmic}
\end{varwidth}};
 \end{tikzpicture}
 \renewcommand{\figurename}{Algorithm}
 \caption{Decoder $\Qdec_{{\tt M},R}(w, y)$ of RMQ}\label{a:D_RMQ}
 \end{figure}

 \begin{lem}\label{t:RMQ}
Fix $\Delta \geq 0$. Let $Q_{{\tt M}, R}$ be 
RMQ described in 
Alg.~\ref{a:E_RMQ} and~\ref{a:D_RMQ}. Then,  
 for\footnote{In the proof, we provide
 a general bound which holds for all $k$.} $k \geq 4$, $\delta\in(0,\Delta)$, $\Delta^{\prime}=\sqrt{6(\Delta^2/d)\ln (\Delta/\delta)}$
 and the parameter $\eps$ of MQ set to $\eps=2\Delta^\prime/(k-2)$, we have
 for $\X=\Y=\R^d$ that
 \begin{align*}
 &\alpha(Q_{{\tt M}, R}; \Delta) \leq \frac{24\, \Delta^2 }{(k-2)^2} \ln \frac{\Delta}{\delta}   + 154\, \delta^2 
 \quad \text{and}
 \\
 & \beta(Q_{{\tt M}, R}; \Delta) \leq 154\, \delta^2.
 \end{align*}
Furthermore, the output of quantizer {$Q_{{\tt M}, R}$} can be described in $d \log k$ bits.
 \end{lem}

\begin{rem}
The choice of $\Delta^{\prime}$ in the first statement of the Lemma \ref{t:RMQ} is based on Remark \ref{r:subg}.
  We note that $\delta$ is a parameter to control the bias incurred by our quantizer. 
{
By setting $\Delta^\prime =\Delta$ we can get an unbiased quantizer, but it only recovers
the performance obtained by simply using MQ for each coordinate, an algorithm
considered in~\cite{davies2020distributed} as well.
}  
\end{rem}

\subsection{Subsampled RMQ: A Wyner-Ziv quantizer for $\R^d$}\label{s:sRMQ}
 Our final quantizer is a modification of RMQ of previous section 
 where we make the precision less than $r$ bits by randomly sampling a subset of coordinates.
 Specifically, note that $\Qenc_{{\tt M},R}(x)$ sends $d$ binary strings of $\log k$
 bits each. We reduce the resolution by sending only a random subset $S$ of
 these strings. This subset is sampled using shared randomness and is available to
 the decoder, too. Note that $\Qdec_{{\tt M},R}$ applies $\Qdec_{\tt M}$
 to these strings separately; now, we use $\Qdec_{\tt M}$ to decode the entries in $S$ alone.
We describe the overall quantizer in Alg.~\ref{a:E_RCS_RMQ} and~\ref{a:D_RCS_RMQ}. 

\begin{figure}[ht]
\centering
\begin{tikzpicture}[scale=1, every node/.style={scale=1}]
\node[draw,text width= 14cm , text height=,] {%
\begin{varwidth}{\linewidth}
            
            \algrenewcommand\algorithmicindent{0.7em}
\begin{algorithmic}[1]
\vspace{-0.5cm}
\Require Input $x\in \R$, Parameters $k$, $\Delta^{\prime}$, and $\mu$
\State Sample $S\subset[d]$ u.a.r. from all subsets of $[d]$ of
cardinality $\mu d$ and sample $R$ as in~\eqref{e:R} using public randomness

  \State \textbf{Output:} $\Qenc_{\tt WZ}(x)=\{\Qenc_{{\tt M}}(Rx(i)): i \in S \}$
  using parameters $k$, $\eps$, and $\Delta^\prime$
for $\Qenc_{{\tt M}}$ of Alg.~\ref{a:E_MCQ}
\end{algorithmic}  
\end{varwidth}};
 \end{tikzpicture}
 
 \renewcommand{\figurename}{Algorithm}
 \caption{Encoder  $\Qenc_{\tt WZ}(x)$ of subsampled RMQ}\label{a:E_RCS_RMQ}
 \end{figure}
 
 \begin{figure}[ht]
\centering
\begin{tikzpicture}[scale=1, every node/.style={scale=1}]
\node[draw, text width= 14cm, text height=,] {%
\begin{varwidth}{\linewidth}
            
            \algrenewcommand\algorithmicindent{0.7em}
 \renewcommand{\thealgorithm}{}
\begin{algorithmic}[1]
  \Require Input $w \in \{0, \ldots, k-1\}^{\mu d}$, $y \in \R$
   \State Get $S$ and $R$ from public randomness
   \State Compute $\tilde{x}=(\Qdec_{{\tt M}}( w(i), Ry(i)), i\in S)$
   using parameters $k$, $\eps$, and $\Delta^\prime$ for $\Qdec_{{\tt M}}$ of Alg.~\ref{a:D_MCQ} 
   \State $\hat{x}_R =  \frac{1}{\mu}\sum_{i\in S}\left(\tilde{x}(i)-Ry(i) \right)e_i +Ry  $
    \State \textbf{Output:} $\displaystyle{\Qdec_{\tt WZ}(w, y)=R^{-1} \hat{x}_R}$
 
\end{algorithmic}
\end{varwidth}};
 \end{tikzpicture}
 
 \renewcommand{\figurename}{Algorithm}
 \caption{Decoder $\Qdec_{\tt WZ}(w, y)$ of subsampled RMQ}\label{a:D_RCS_RMQ}
 \end{figure}
 \begin{rem}
We remark that, typically, when implementing random sampling, we set the unsampled components to $0$. 
However, to get $\Delta$ dependent bounds on  MSE, we set the unsampled coordinates
to the corresponding coordinate of side information and center our estimate appropriately to only have small bias.
\end{rem}
The result below relates the performance of 
our final quantizer $Q_{{\tt WZ}}$ to that of $Q_{{\tt M}, R}$, which
was already analysed in
the previous section.
\begin{lem}\label{t:RCS_RAQ_alpha_beta}
Fix $\Delta  > 0$.
Let $Q_{{\tt WZ}}$ and $Q_{{\tt M}, R}$ be the quantizers described in 
Alg.~\ref{a:E_RCS_RMQ} and~\ref{a:D_RCS_RMQ} and Alg.~\ref{a:E_RMQ} and~\ref{a:D_RMQ}, respectively.
Then, for $\mu d \in [d]$,  we have for $\X=\Y=\R^d$  that
\eq{ &\alpha(Q_{{\tt WZ}}; \Delta)\leq
\frac{2\alpha(Q_{{\tt M}, R}; \Delta)}{\mu} + \frac{2\Delta^2 }{\mu}\quad \text{and}
\quad \\& \beta(Q_{{\tt WZ}}; \Delta)=\beta(Q_{{\tt M}, R}; \Delta).}
Furthermore, the output of quantizer $Q_{{\tt WZ}}$ can be described in $\mu d \log k$ bits.
\end{lem}

We are now equipped to prove our first main result. Our protocol $\pi^*_{\tt k}$ uses $Q_{\tt WZ}$ for each client as described
in Section~\ref{s:preliminaries} and forms the estimate $\hat{\bar{x}}$
as in~\eqref{e:estimate}. 
We set the parameters needed for $Q_{\tt WZ}$ in Alg.~\ref{a:E_RCS_RMQ} and~\ref{a:D_RCS_RMQ} as follows: For client $i$, we 
set the parameters of MQ as 
\begin{align}
\delta =\frac{\Delta_i}{\sqrt{n}},
\quad
\log k= \ceil{\log (2+ \sqrt{12\ln n})},
\quad 
\Delta^{\prime}= \sqrt{6(\Delta_i^2/d)\ln (\Delta_i/\delta)},
\quad 
\eps=2\Delta^\prime/(k-2),
\label{e:param_sRMQ1}
\end{align}
and set the parameter $\mu$ as
\begin{align}
 \mu d = \left\lfloor\frac{r}{\log k}\right\rfloor.   
 \label{e:param_sRMQ2}
\end{align}
We characterize the resulting error performance in the next result.
\begin{thm}
For a $n\geq 2$, a fixed $\mathbf{\Delta}=(\Delta_1, ...,\Delta_n)$,
and  $d \geq r \geq 2\ceil{\log (2+ \sqrt{12\ln n})}$,
the protocol $\pi^*_k$ with parameters as set in \eqref{e:param_sRMQ1}
and \eqref{e:param_sRMQ2} is an $r$-bit protocol which 
satisfies 
\[
\MSE(\pi^*_k, \mathbf{x}, \mathbf{y})\leq
(79\, \lceil\log(2+\sqrt{12\ln n})\rceil + 26)\,
\left(\sum_{i=1}^n\frac{\Delta_i^2}{n} \cdot \frac{d }{nr} \right), 
\]
for all $\mathbf{x},\mathbf{y}$ satisfying~\eqref{eq:delta_cond}.
\end{thm}
\begin{proof}
Denoting by $Q_{i}$ the quantizer $Q_{\tt WZ}$
with parameters set for user $i$,
by Lemmas~\ref{l:main} and~\ref{t:RCS_RAQ_alpha_beta}, we get
\begin{align*}
{\E{ \|\hat{\bar{x}} - \bar{x}\|_2^2}}
&\leq \sum_{i=1}^n\frac{\alpha(Q_i; \Delta_i)}{n^2} +  \sum_{i=1}^n\frac{\beta(Q_i; \Delta_i)}{n}
\\
&\leq 
\frac 1{\mu n^2}\sum_{i=1}^n(\alpha(Q_{{\tt M}, R,i}; \Delta_i)
+\Delta_i^2) + 
\sum_{i=1}^n\frac{\beta(Q_{{\tt M}, R,i}; \Delta_i)}{n},
\end{align*}
where $Q_{{\tt M}, R,i}$ denotes RMQ with parameters set for user $i$.
Further, since $k \geq 4$ holds when $n\geq 2$ for our choice of parameters,
by using Lemma~\ref{t:RMQ} and substituting $\delta^2=\Delta_i^2/n$, we get
\begin{align*}
    \alpha(Q_{{\tt M}, R,i}; \Delta_i)
    &\leq \frac{12 \Delta_i^2 \ln n}{(k-2)^2} + \frac{ 154\Delta_i^2}{n},
    \\
    \beta(Q_{{\tt M}, R,i}; \Delta_i)&\leq \frac{ 154\Delta_i^2}{n},
\end{align*}
which with the previous bound gives
\begin{align*}
{\E{ \|\hat{\bar{x}} - \bar{x}\|_2^2}}
&\leq 
\frac 1{\mu d}\left(
\frac{12 \ln n}{(k-2)^2} + \frac{154}{n}
+1+ 154\mu
\right)
\sum_{i=1}^n\frac{d\Delta_i^2}{n^2}
\\
&\leq 
\frac {79\lceil\log(2+\sqrt{12\ln n})\rceil + 26}{r}
\sum_{i=1}^n\frac{d\Delta_i^2}{n^2},
\end{align*}
where in the final bound
we used our choice of $k$, the assumption that $n\geq 2$ (which
implies that $d\geq r\geq 6$), 
and the fact that $\lceil r/\log k\rceil\geq
r/2$ if $r\geq 2\log k$.
\end{proof}

\begin{rem} \label{r:DMQ_lp}
We note that by using MQ for each coordinate without rotating
(or even with rotation using $R$ as above)
and with $\Delta^\prime=\Delta_i$ yields MSE less than
\[
{O}\left(\sum_{i=1}^n\frac{\Delta_i^2}{n} \cdot \frac{d \log d}{nr}\right),
\]
for $r\leq d$. Thus, our approach above allows us to remove the
$\log d$ factor at the cost of a (milder for large $d$) $\log \log n$ factor.

\end{rem}

Thus, as can be seen from the lower bound presented in Theorem  \ref{t:lb_k} below, our  Wyner-Ziv estimator $\pi^*_k$ is nearly optimal. Finally, $Q_{\tt WZ}$ can be efficiently implemented as both the encoding and decoding procedures have nearly-linear time complexity\footnote{ The most expensive operation at  both the encoder and decoder of this estimator is the Hadamard matrix multiplication operation, which requires $d \log d$ real operations.} of $O(d \log d)$.

\subsection{Lower bound}\label{s:LB}
We now prove a lower bound on the MSE incurred by any SMP protocol using $r$ bits per client. 
The proof relies on the strong data processing inequality in \cite{duchi2014optimality} and is similar in structure to the lower bound for distributed mean estimation without side-information in \cite{suresh2017distributed}.
\begin{thm}\label{t:lb_k} 
{
Fix $\mathbf{\Delta}=(\Delta_1, \ldots, \Delta_n)$.
There exists
a universal constant $c<1$ 
such that for any $r$-bit SMP protocol $\pi $,
with $r \leq  c d$,
there exists input $(\mathbf{x}, \mathbf{y})\in \R^{2d}$ 
satisfying \eqref{eq:delta_cond} and such that 
\[\MSE(\pi, \mathbf{x}, \mathbf{y}) \geq  c \min_{i \in [d]}\Delta_{i}^2\cdot \frac{d}{nr}.
\]
}
\end{thm}

\section{Distributed mean estimation for \unknown}\label{s:unknown}
  Finally, we present our Wyner-Ziv estimator for the \unknown setting. 
  We first, in Section \ref{s:CSI}, describe the idea of correlated sampling from \cite{Holenstein07}, which will serve as an essential building block for all our quantizers in this section. We then build towards our final quantizer, described in \ref{s:sRDAQ}, by first describing its simpler versions in Section \ref{s:DAQ} and \ref{s:RDAQ}. 
Once again, we restrict to the low-precision setting of $r\leq d$.

\subsection{The correlated sampling idea}\label{s:CSI}
Suppose we have two numbers $x$ and $y$ lying in $[0,1]$.
A $1$-bit unbiased estimator for $x$ is the random variable $\indic{\{U \leq x\}},$ where $U$ is a uniform random variable in $[0, 1]$. The variance of such an estimator is $x-x^2$. We 
consider a variant of this estimator given by: 
\begin{align}
\hat{X}=\indic{\{U \leq x\}} - \indic{\{U \leq y\}} +y,
\label{e:correlated_primitive}
\end{align}
where, like before, $U$ is a uniform random variable in $[0, 1].$ Such an estimator still uses only $1$-bit of information related to $x$. 
It is easy to check that this estimator
unbiased estimator of $x$, namely $\E{\hat X}=x$.
The variance of this estimator is given by
\[
\mathtt{Var}(\hat X) = \E{(\hat X - x)^2} = |x-y|-(x-y)^2,
\]
which is lower than that of the former quantizer when $x$ is close to $y$.
We build-on this basic primitive to obtain a quantizer with MSE
bounded above by a $\mathbf{\Delta}$-dependent expression, without requiring  the knowledge of 
$\mathbf{\Delta}$.

\subsection{Distance Adaptive Quantizer (DAQ)} \label{s:DAQ}
DAQ and subsequent quantizers in this Section will be described for input $x$ and side information $y$ lying in $\R^d$.
The first component of our quantizer, DAQ, which uses~\eqref{e:correlated_primitive} 
and incorporates the correlated sampling idea discussed earlier.
Both the encoder and the decoder of DAQ use the same  $d$  uniform random variables $\{U(i)\}_{i=1}^{d}$ between $[-1, 1]$, which are generated using public randomness.
At the encoder, each coordinate of vector $x$ is encoded to the bit $\indic{\{U(i) \leq x(i)\}}$. At the decoder, using the bits received from the encoder, side information $y$, and the public randomness $\{U(i)\}_{i=1}^{d}$, we first compute bits $\indic{\{U(i) \leq y(i)\}}$ 
for each $i\in[d]$. Then, the estimate of $x$ is formed as  follows:
\[
Q_{\tt D}(x, y)  =  \sum_{i=1}^{d} \left( \indic{\{U(i) \leq x(i)\}} - \indic{\{U(i) \leq y(i)\}}\right)e_i  +y. 
\]
We formally describe the quantizer in Alg. \ref{a:E_DAQ} and \ref{a:D_DAQ}.

\begin{figure}[ht]

\centering
\begin{tikzpicture}[scale=1, every node/.style={scale=1}]
\node[draw,text width= 12 cm, text height=,] {%
\begin{varwidth}{\linewidth}
            
            \algrenewcommand\algorithmicindent{0.7em}
\begin{algorithmic}[1]

\Require Input $ x\in \R^d$
\State Sample  $U(i) \sim Unif[-1, 1], \forall i \in [d]$
 \State  
     $
      \tilde{x} = \sum_{i=1}^{d}\indic{\{U(i) \leq x(i)\}} \cdot e_i 
$
   
 \State \textbf{Output:} $\Qenc_{\tt D}(x)=\tilde{x}$, where $\tilde{x}$ is
 viewed as binary vector of length $d$
\end{algorithmic}  
\end{varwidth}};
 \end{tikzpicture}
 
 \renewcommand{\figurename}{Algorithm}
 \caption{Encoder  $\Qenc_{\tt D}(x)$ of DAQ}\label{a:E_DAQ}
 \end{figure}
 
 \begin{figure}[ht]
\centering
\begin{tikzpicture}[scale=1, every node/.style={scale=1}]
\node[draw, text width= 12 cm, text height=,] {%
\begin{varwidth}{\linewidth}
            
\algrenewcommand\algorithmicindent{0.7em}
 \renewcommand{\thealgorithm}{}
\begin{algorithmic}[1]
  \Require Input $w \in \{0,1\}^d$, $y \in \R^d$, 
 \State Get $U(i), \forall i \in [d],$ using public randomness
   \State Set $\tilde{y}= \sum_{i=1}^{d}\indic{\{U(i) \leq y(i)\}} \cdot e_i $
    \State \textbf{Output:} $\Qdec_{\tt D}(w, y)= 2(w -\tilde{y})+y$, where $w$ is viewed as a vector
    in $\R^d$
    \label{step:output_coordinate}
\end{algorithmic}
\end{varwidth}};
 \end{tikzpicture}
 
 \renewcommand{\figurename}{Algorithm}
 \caption{Decoder $\Qdec_{\tt D}(w, y)$ of DAQ}\label{a:D_DAQ}
   
 \end{figure}

\noindent The next result characterizes the performance for DAQ.
\begin{lem}\label{t:DAQ}
Let $Q_{\tt D}$ 
denote DAQ described in Algorithms \ref{a:E_DAQ} and \ref{a:D_DAQ}. Then, for $\X=\Y=\B$ and every $\Delta > 0$, we have
\[
\alpha(Q_{\tt D}; \Delta) \leq 2\Delta\sqrt{d}\,\,\text{ and }\,\,\beta(Q_{\tt D}; \Delta)= 0.
\]
Furthermore, the output of quantizer $Q_{\tt D}$ can be described in $ d $ bits.
\end{lem}




\subsection{Rotated Distance Adaptive Quantizer (RDAQ)} \label{s:RDAQ} 
Next, we proceed as for the \known setting and add
a preprocessing step of rotating $x$ and $y$
using random
matrix $R$ of~\eqref{e:R}, which is sampled using shared randomness.
We remark that here random rotation is used to exploit the subgaussianity of the rotated $x$ and $y$, whereas in RMQ of previous section it was used to exploit the subgaussianity of  $x-y$.
After this rotation step, we proceed with a quantizer similar to DAQ, but
we quantize each coordinate at multiple ``scales.'' 
We describe this step in detail below.


\paragraph{Using multiple scales.}
In DAQ,
 we considered each coordinate $x$ to be anywhere between
 $[-1,1]$ and used one uniform random variable
 for each coordinate.  Now, 
 we will use $h$ independent uniform random variables
 for each coordinate,
 each corresponding to a different scale $[-M_j, M_j]$, $j\in \{0,1,2,\ldots, h-1\}$.
 For convenience, we abbreviate $[h]_0:=\{0,1,2,\ldots, h-1\}$.
 
 Specifically, let $U(i, j)$ be distributed uniformly over $[-M_j, M_j]$,
 independently for different  $i\in [d]$
  and different $j\in[h]_0$.
 The values $M_j$s correspond to different 
 scales and are set, along with $h$, as follows:
For all  $j \in[h]_0$,
\begin{align}
M_{j}^2:=  \frac{6}{d} \cdot e^{*j}, \quad
  \log h :=\ceil{\log(1+ \ln^*(d/6))},
  \label{e:levels}
  \end{align}
 where $e^{*j}$ denotes the $j$th iteration of $e$ given by
 $\displaystyle{e^{*0}:=1, \quad e^{*1}:=e, \quad e^{*j}:=e^{e^{*(j-1)}}}$.
  All the $dh$ uniform random variables are generated using public randomness and are available to both the encoder and the decoder.

The intervals $[-M_j, M_j]$ are designed to minimize the MSE of our quantizer 
by tuning its ``resolution'' to the ``scale'' of the input,
and while still ensuring unbiased estimates.
This idea of using multiple intervals $[-M_j, M_j]$  for quantizing the randomly rotated vector is from \cite{mayekar2020ratq}, where
it was used for the case with no side information.

\paragraph{Multiscale DAQ.}
After rotation, we proceed as in DAQ, except that 
we use different scale $M_j$ for different coordinates. 
Ideally, for the $i$th coordinate,  
we would like to use $M_{z^*(i)}$, where $z^*(i)$ is the smallest index such that  both $Rx(i)$ and $Ry(i)$ lie in $[-M_{z^*(i)}, M_{z^*(i)}]$.
However, since $y$ is not available to the encoder,
we simply resort to 
sending the smallest value $z(i)$ which is the smallest index such that  $ Rx(i) 
\in [-M_{z(i)}, M_{z(i)}]$ and apply the encoder of DAQ 
$h$ times to compress
$x$ at all scales, $i.e.$, we send $h$ bits
 $(\indic{\{U(i, j) \leq Rx(i)\}}, j\in [h]_0)$.

Thus, the overall number of bits used by RDAQ's encoder is $d \cdot( h+\ceil{\log h})$.
At RDAQ's decoder, using $z(i)$,
we compute the smallest index $z^*(i)$ containing both  $Rx(i)$ and $Ry(i)$. In effect, the decoder emulates the decoder for DAQ applied
to $Ry$, but for scale $M_{z^*(i)}$.
The encoding and decoding algorithm of RDAQ are described in Alg. \ref{a:E_RDAQ}  and \ref{a:D_RDAQ}, respectively. 


\begin{figure}[ht]
\centering
\begin{tikzpicture}[scale=1, every node/.style={scale=1}]
\node[draw, text width= 14 cm, text height=,] {%
\begin{varwidth}{\linewidth}
            
            \algrenewcommand\algorithmicindent{0.7em}
 \renewcommand{\thealgorithm}{}
\begin{algorithmic}[1]
\Require Input $x \in \B$
\State Sample $U(i, j) \sim Unif[-M_j, M_j]$, $i \in [d], j\in[h]_0$, and sample $R$ as in\eqref{e:R} using public randomness.

\State $x_R=Rx$
\For{$i \in [d]$} 
  \Statex \hspace{1cm} $z(i)= \min \{j \in [h]_0: |x_R(i)| \leq M_j\}$
  \EndFor

\For{$j \in [h]_0$} 
\Statex \hspace{1cm} $\tilde{x}_j= \sum_{i=1}^{d} \indic{\{U(i,j)\leq x_R(i)\}}e_i$
\EndFor

   \State \textbf{Output:} 
   $\Qenc_{{\tt D}, R}(x) = \left( [\tilde{x}_0, \ldots, \tilde{x}_{h-1}], z\right)$, where we view $\tilde{x}_j$s as binary vectors
\end{algorithmic}
\end{varwidth}};
 \end{tikzpicture}
 
 \renewcommand{\figurename}{Algorithm}
 \caption{Encoder $\Qenc_{{\tt D}, R}(x)$ at for RDAQ}\label{a:E_RDAQ}
 \end{figure}

\begin{figure}[ht]
\centering
\vspace{-0.1cm}
\begin{tikzpicture}[scale=1, every node/.style={scale=1}]
\node[draw, text width= 14 cm, text height=,] {%
\begin{varwidth}{\linewidth}
            
            \algrenewcommand\algorithmicindent{0.7em}
 \renewcommand{\thealgorithm}{}
\begin{algorithmic}[1]
\Require   Input $ (w, z) \in \{0, 1\}^{d \times h} \times [h]_0^d$
and $y\in \B$

  \State\label{step:1} Get $U(i, j)$, $i \in [d]$, $j\in[h]_0$, and $R$ using public randomness.

  \State $y_R=Ry$
  \For{$i \in [d]$} 
  \Statex \hspace{1cm} $z^{\prime}(i)= \min \{j \in \{[h]_0\}: |y_R(i)| \leq M_j\}$
  \Statex \hspace{1cm} $z^*(i)= \max\{z(i), z^{\prime}(i)\}$
  
  \EndFor
  
 \State\label{step:4} $w^{\prime}= \sum_{i =1}^d 2M_{z^*(i)}\left(w(i, z^*(i) )  - \indic{\{U(i, z^*(i))\leq y_R\}} \right)$
\State $\hat{x}_R= w^{\prime}+Ry$
   \State \textbf{Output:}
   $\displaystyle{\Qdec_{{\tt D}, R}(w, y) =R^{-1} \hat{x}_R.}$
  
\end{algorithmic}
\end{varwidth}};
 \end{tikzpicture}
 
 \renewcommand{\figurename}{Algorithm}
 \caption{Decoder $\Qdec_{{\tt D}, R}(x)$ for RDAQ }\label{a:D_RDAQ}
 \end{figure}


Then, the quantized output $Q_{{\tt D}, R}$ corresponding to input vector $x$ and side-information $y$ is
\begin{align*}
    Q_{{\tt D}, R}(x, y) = R^{-1} &\Bigg[\sum_{i=1}^d 2M_{z^*(i)}\left( \indic{\{U(i, z^*(i))\leq Rx(i)\}}   -\indic{\{U(i, z^*(i))\leq Ry(i)\}}\right)  +Ry \Bigg].
\end{align*}
We remark that since rotated  coordinates $Rx(i)$ and $Ry(i)$ have subgaussian tails, with very high probability $M_{z^*(i)} $ will be much less than $1$, which  helps in reducing the overall MSE significantly.
The performance of the algorithm is characterized below.
\begin{lem}\label{t:RDAQ}
Let $Q_{{\tt D}, R}$ be RDAQ described in Alg. \ref{a:E_RDAQ} and \ref{a:D_RDAQ}. Then,  for $\X=\Y=\B$ 
 and every $\Delta > 0$,
we have 
\[\alpha(Q_{{\tt D}, R}; \Delta) \leq  16 \sqrt{3} \Delta \quad \text{and} \quad \beta(Q_{{\tt D}, R}; \Delta) =0.\]
Furthermore, the output of quantizer $Q$ can be described in $d( h + \log h)$ bits.
 \end{lem}

\subsection{Subsampled RDAQ: A universal Wyner-Ziv quantizer for unit Euclidean ball}\label{s:sRDAQ}
Finally, we bring down the precision of RDAQ to $r$, as before
for the \known setting,
by retaining the output of RDAQ for only coordinates $i\in S$,
where $S$ 
 is generated uniformly at random from all subsets of $[d]$
of cardinality $\mu d$ using public randomness.
Specifically, we execute 
Alg.~\ref{a:E_RDAQ} and~\ref{a:D_RDAQ} with 
$S$ replacing $[d]$ and multiplying $w^\prime$ in Step 4 of
Alg.~\ref{a:D_RDAQ} by normalization factor of $d/|S|$.
The output of the resulting encoder is given by
\begin{align}\label{e:esRDAQ}
\Qenc_{{\tt WZ}, u}(x)= \{\Qenc_{{\tt D},  R}(x)(i): i \in S \},
\end{align}
where  $ \Qenc_{{\tt D},  R}(x)(i) $ represents the encoded bits 
$([\tilde{x}_0(i), \ldots, \tilde{x}_{h-1}(i)  ], z(i) )$
for the $i${th} coordinate using RDAQ, and
the output of the resulting decoder is given by
\begin{align}
 Q_{{\tt WZ}, u}(x, y) &= R^{-1} \Bigg[   \frac{1} {\mu}\sum_{i \in S}  2M_{z^*(i)}\Big(\indic{\{U(i, z^*(i))\leq Rx(i)\}}
 -\indic{\{U(i, z^*(i))\leq Ry(i)\}}\Big)  +Ry\Bigg].
 \label{e:s_RDAQ}
 \end{align}

\begin{lem}\label{t:sRDAQ}
Let $Q_{{\tt WZ}, u}$ be the quantizers described in \eqref{e:esRDAQ} and
 \eqref{e:s_RDAQ} and $Q_{{\tt D}, R}$ be RDAQ described in
 Alg.  \ref{a:E_RDAQ} and \ref{a:D_RDAQ}.
 Then, for $\mu d \in [d]$, $\X=\Y=\B$, and every $\Delta > 0$, we have 
 \[
 \alpha(Q_{{\tt WZ}, u}; \Delta) \leq  \frac{\alpha(Q_{{\tt D}, R}; \Delta)}{\mu} \quad\text{and} \quad\beta(Q_{{\tt WZ}, u}; \Delta) =0.
 \]
Furthermore, the output of quantizer $Q_{{\tt WZ}, u}$ can be described in $\mu d( h + \log h)$ bits.
 \end{lem}
 
 We are now equipped to prove our second main result. Our protocol $\pi^*_{\tt u}$ uses $Q_{{\tt WZ}, u}$ for each client as described
in Section~\ref{s:preliminaries} and forms the estimate $\hat{\bar{x}}$
as in~\eqref{e:estimate}. 
Unlike for the \known setting, we now use the
same parameters for $Q_{{\tt WZ},u}$
for all clients, given by
\begin{align}\label{e:param_sRDAQ}
&\mu d =  \left\lfloor\frac{r}{h +\log h}\right\rfloor. 
\end{align}

\begin{thm}
For {$d \geq  r \geq 2(h+\log h)$}
and $h$ given in~\eqref{e:levels},
the $r$-bit protocol $\pi^*_u$ with parameters as set in \eqref{e:param_sRDAQ}
satisfies 
\[\MSE(\pi^*_u, \mathbf{x}, \mathbf{y})\leq 
(128\sqrt{3}\, (1+\ln^*(d/6)))
\left(\sum_{i \in [n]}\frac{\Delta_i}{n} \cdot \frac{d }{nr} \right),\]
for all $\mathbf{x},\mathbf{y}$ satisfying~\eqref{eq:delta_cond}, for every $\mathbf{\Delta}=(\Delta_1, ...,\Delta_n)$.
\end{thm}
\begin{proof}
Denote by $\hat{\bar{x}}$ the output of the protocol. Then,
by Lemmas~\ref{l:main} and Lemma~\ref{t:sRDAQ}, we get
\begin{align*}
\E{ \|\hat{\bar{x}} - \bar{x}\|_2^2}
&\leq \frac 1 {n^2\mu} \sum_{i=1}^n\alpha(Q_{{\tt D},R}; \Delta_i)
 \\
 &\leq 
 \frac {16\sqrt{3}} {n^2\mu} \sum_{i=1}^n\Delta_i,
\end{align*}
where the previous inequality is by Lemma~\ref{t:RDAQ}. 
The 
proof is completed by using 
$
\mu\geq \frac{r}{2d(h +\log h)}\geq
\frac{r}{4dh}$,
which follows from
\eqref{e:param_sRDAQ} and the assumption that $r\geq 2(h +\log h)$.
\end{proof}
The Wyner-Ziv estimator $\pi^*_u$ is universal in $\mathbf{\Delta}$: it operates without the knowledge of the distance between the input and the side information and yet gets
MSE depending on $\mathbf{\Delta}$. Moreover, it can be efficiently implemented as both the encoding and the decoding procedures have nearly linear time complexity of $O(d \log d)$. 

\section{Application: Communication constrained distributed optimization} \label{s:appl}
We consider the
problem of minimizing an unknown convex function ${f\colon\X
\to \R}$ over its domain ${\X\subset \R^d}$ 
using the set of $n$ clients who have access
to independent noisy gradients of the
function.
In particular, the server runs an optimization algorithm, which is not directly given access to the function
but can get $n$ different gradient estimates of the function at various points of its choice.
This class of optimization algorithms includes various descent algorithms,
which provide close to optimal convergence rate within the class and are appealing in practice due to their distributed nature.

Owing to our setup, the gradient estimates supplied by the $n$ clients must pass through $r$-bit quantizers, 
chosen 
from a fixed set of quantizers $\Q_r$\footnote{The set of $r$-bit quantizers $\Q_r$ is used to model the communication constraints in a distributed setting. }, and the optimization algorithm $\A$ only has access to the quantized outputs.

Our objective is to select quantizers $Q_{i,t}$, $\forall i \in [n], t \in [T],$ and an optimization algorithm $\A$ to guarantee the minimum worst-case optimization error defined below. In our setting, we allow for {\em adaptive gradient processing,} whereby, the quantizer $Q_{i,t}$ selected in $t$th iteration may depend on all the previous quantized outputs. 
  Specifically, 
  denoting by $\op_{i,t}$ the $i$th client's quantized output at time $t$, which takes values in the output alphabet $\R^d,$
 the \emph{adaptive quantizer selection strategy} $S\eqdef (S_1, \ldots, S_T)$ over $T$ iterations 
consists of mappings $S_t\colon 
\R^{d\times n \times (t-1)}\to \Q_r^n$ that take $\{\op_{i, t^{\prime}}\}_{i \in [n], t^{\prime} \in [t-1]}$ as input and outputs a tuple of $n$ quantizers ${\{Q_{i,t}\}_{i \in [n]}\in\Q_r^n}$. 
We write $\mathcal{S}_{\Q_r, T}$ for the collection of all such quantizer selection strategies.
The entire framework can be summarized as follows:
\begin{enumerate}
\item At iteration $t$, the first-order optimization algorithm $\A$ makes a query for point $x_t$ to  clients $\tC_1, \ldots \tC_n$.
\item Upon receiving the point $x_t\in \X$, the client $\tC_i$, $i \in [n]$, outputs $\hat{g}_i(x_t)$, an unbiased estimate of  $\nabla f(x_t).$
\item The gradient estimate $\hat{g}_i(x_t)$ is passed through a quantizer $Q_{i,t} \in \Q_r$ chosen based on strategy $S$, and the output $Y_{i,t}$ is observed by the first-order optimization algorithm $\A$. The algorithm then uses all the messages $\{\op_{i,t^\prime}(x_{t^\prime})\}_{i \in [n], t^{\prime} \in [t]}$ to further update $x_t$ to $x_{t+1}$.
\end{enumerate}
Denote by $\clientset$ the collection of $n$ clients $(\tC_1, \ldots, \tC_n)$. Let $\cA_T$ be the set of all first-order optimization algorithms that make $T$ queries to $\clientset$ and for the $t$th query $x_t$, get back the outputs $\{Y_{i,t}\}_{i \in [n]}$.
We measure the performance of
an optimization protocol $\A$ and a quantizer selection strategy $S$ for a given function $f$ and clients $\tC_i$, $i \in [n],$ using the metric
$\ep(f,  \clientset, \A, S)$ defined as  {\[\displaystyle{\hspace{1.5cm}
  \ep(f, \clientset, \A, S) = \E{f\left(\bar{x}_T\right)-\min_{x\in \X} f(x)},}\]}\\
where $\bar{x}_T:=\frac{1}{T}\sum_{t\in [T]}x_t$ and the expectation is over the randomness in $\bar{x}_T$.

For a set of various function and client pairs above, denoted by $\oO$, the set of $r$-bit quantizers $\mathcal{Q}_r$ and the number of iterations $T$, we define the \emph{minimax optimization error} as
\begin{align*}
  \ep^\ast(\X , \oO, T, \Q_r) =   \inf_{\A \in \cA_T}\inf_{S \in \mathcal{S}_{\Q_r, T}} \sup_{(f, \clientset)\in\oO}\ep(f, \clientset, \A, S)\,.
\end{align*}

We now define the class of functions and state the assumptions related to the  clients accessible to the algorithm $\A$.
  
\paragraph{Convex and smooth function family}
Throughout, we restrict
 ourselves to convex and $L$-smooth functions over $\X\subset\R^d$, i.e.,
functions  satisfying, $\forall \lambda \in [0, 1], \forall x, y \in \R^d$, 
 \begin{align} 
 f(\lambda x +(1-\lambda)y)&\leq \lambda f(x) +(1-\lambda)f(y), 
 \label{e:convexity}\\
\norm{\nabla f(x) - \nabla f(y)}_2 &\leq L \norm{x-y}_2,\label{e:L_smooth}
\end{align}
where $\nabla f(x)\in \R^d$ denotes the gradient of $f$ at input $x.$

\paragraph{Stochastic gradients}
We assume that the output $\hat{g}_i(x)$ by client $\tC_i, 1\leq i\leq n,$ when a point $x\in \X$ is queried satisfies the following conditions:
\begin{align}\label{e:asmp_unbiasedness}
\E{\hat{g}_i(x)\mid x} &= \nabla f(x), \quad \text{(unbiased estimates)} \\
\label{e:varaince}
\norm{\hat{g}_i(x)-\nabla f(x)}_2^2 &\leq \sigma^2,\, \text{{(maximum deviation bound)}}
\\
\label{e:bounded_est}
\norm{\hat{g}_i(x)}_2^2 &\leq B^2.\qquad  \text{(a.s. bounded estimate)} 
\end{align}
Assumption {\eqref{e:asmp_unbiasedness} is standard in stochastic optimization literature ($cf.$ \cite{nemirovsky1983problem}, \cite{nemirovski1995information}, \cite{bubeck2015convex}. However, it is enough to assume a bound on the variance of stochastic gradients instead of \eqref{e:varaince} to prove convergence guarantees for smooth stochastic optimization without any communication constraints. The stronger assumption made here is to aid a much tighter analysis under communication constraints. In Section \ref{s:extension}, we provide a scheme which can operate under the standard variance bound.}

Denote by $\oO_{\tt sc}$ the set of tuples of function and $n$ clients, $(f, \clientset)$, satisfying \eqref{e:convexity}, \eqref{e:L_smooth}, \eqref{e:asmp_unbiasedness}, \eqref{e:varaince} and \eqref{e:bounded_est}. 
\subsection{Lower bound}
The following bound will serve as a basic benchmark for our problem.
 Let $D>0$
and $\mathbb{X}_2(D) \eqdef \{\X \subseteq \R^d: \max_{x,y\in\X}\norm{x-y}_2\leq
D\}$ be the collection of subsets of $\R^d$ whose $\ell_2$ diameter is
at most $D$.

\begin{thm}\label{t:lb_smooth_coom_constraints}
  There exists an absolute constant $0\leq c_0\leq 1$ such that for $r \leq d $ and $T \geq d/(6nr),$
\vspace{-5pt}
\[
 \sup_{\X \in \mathbb{X}_2(D)} \ep^*(\X , \oO_{{\tt sc}}, T, \Q_{r  }) 
\geq 
\frac{c_0D\sigma}{\sqrt{nT}} \cdot \sqrt{\frac{d}{r}} .\]
\end{thm}

\subsection{A general convergence bound}
We present a general convergence bound based on a non-adaptive channel strategy.  In particular,
we fix same quantization process in every iteration, and
the quantized outputs $\{\op_{i,t}\}_{i\in [n]}$ are passed through a mapping\footnote{For instance, averaging the quantized outputs at the server can possibly be one such mapping. } $\newer{\M: \R^{n \times d}\to \R^d}$ in order to update the query.

We use PSGD as the first-order optimization algorithm; the overall optimization procedure is described in Algorithm \ref{a:SGD_Q}.  PSGD proceeds as SGD, with the additional projection step where it projects the updates back to domain $\X$ using the map $\Gamma_{\X}(y):=\min_{x\in \X}\|x-y\|$, $\forall\, y\in \R^d$.

 \vspace{-0.1cm}
\begin{figure}[h]
 \vspace{-0.1cm}
\centering
\begin{tikzpicture}[scale=1, every node/.style={scale=1}]
\node[draw,text width= 6 cm , text height= ,] {%
\begin{varwidth}{\linewidth}       
            \algrenewcommand\algorithmicindent{0.2em}
\begin{algorithmic}[1]
   \For{$t=0$ to $T-1$}
	\State ~$x_{t+1}{=}\Gamma_{\X} \left(x_{t}-\eta_t \M(\op_{1,t},...,\op_{n,t})\right)$
   \EndFor \State Output $\bar{x}_T=\frac 1 T {\sum_{t=1}^T x_t}$
\end{algorithmic}
\end{varwidth}};
 \end{tikzpicture}
 \vspace{-0.1cm}
 \renewcommand{\figurename}{Algorithm}
\caption{PSGD using clients $\clientset$ }\label{a:SGD_Q}
\vspace{-0.1cm}
\end{figure}
The convergence rate of Algorithm~\ref{a:SGD_Q} is controlled by the worst-case $L_2$-norm $\alpha^\prime(\M)$ and the worst-case bias $\beta^\prime(\M)$ defined as
\vspace{-0.25cm}
\begin{align}
&\alpha^\prime(\M):=\sup_{\substack{\{\forall x,i\in [n],\hat{g}_i \in \R^d:\\ {\|\hat{g}_i-\nabla f(x)\|^2}\leq \sigma^2\}}} \sqrt{\E {\| \M(\op^n)-\nabla f(x)\|^2}}, \label{def_a}\\
&\beta^\prime(\M):=\sup_{\substack{\{\forall x,i\in [n],\hat{g}_i \in \R^d:\\ {\|\hat{g}_i-\nabla f(x)\|^2}\leq \sigma^2\}}}\|\E {(\M(\op^n)-\nabla f(x)}\|, \label{def_b}
\vspace{-0.25cm}
\end{align}
where $\op^n=(\op_{1},...,\op_{n})$ is the communication received at the server.
{Using a slight  modification of the standard proof of convergence for PSGD {in \cite[Theorem 6.3]{bubeck2015convex}
}, we can derive the following lemma.}
\begin{lem}\label{l:general}
For any mapping $\M$ and set of quantizers $\{Q_i\}_{i\in [n]}$ defined above, the output $\bar{x}_T$ of optimization algorithm given in Algorithm \ref{a:SGD_Q} satisfies
\begin{align*}
&\sup_{(f,\clientset)\in \cO}\cE(f, \clientset, \A, S) 
\leq \frac{\sqrt{2}\alpha^\prime(\M)D}{\sqrt{T}}+\beta^\prime(\M)\left(D+\frac{DB}{\alpha^\prime(\M)\sqrt{2T}}\right)+\frac{LD^2}{2T},
\end{align*}
with the learning rate 
$\eta_t = \frac{1}{L+\frac{\alpha^\prime(\M)\sqrt{2T}}{D}},\forall t\in [T]$.
\label{thm1} 
\end{lem}

\subsection{Baseline scheme: {\tt Parallel SGD}}\label{s:blscheme}
We begin by presenting the convergence  result for the baseline scheme in our setup: the {\tt Parallel SGD} algorithm. In {\tt Parallel SGD}, all clients compress their stochastic gradient estimates to $r$ bits using an efficient quantizer for the Euclidean ball and send it to the server, which then takes the average of the quantized gradients for the projected gradient descent step.
We choose subsampled RATQ (\cite{mayekar2020ratq}) for this efficient quantizer. We denote by $Q_{\tt RATQ}$ the subsampled version of RATQ using $r$ bits, which is described in \cite[Section 3.5]{mayekar2020ratq}. \newer{After receiving the quantized outputs $\op_{i,t}=Q_{\tt RATQ}(\hat{g}_i(x_t))~\forall i\in [n],$ from all the $n$ clients, the server takes the mapping $\M$ to be the average of these outputs,} 
i.e., 
\begin{align}\label{e:Q_ParSGD}
\M(\bar{\op}_t)=\frac{1}{n}\sum_{i=1}^{n}Q_{\tt RATQ}(\hat{g}_i(x_t)).
\end{align}
\begin{thm}\label{t:ParSGD}
  Let $S$ be the  quantizer selection strategy which fixes the quantizer to be $Q_{\mathtt{RATQ}}$ for all clients at all iterations. \newnewest{Let $\A$ be the optimization algorithm described in Algorithm \ref{a:SGD_Q} where $\M$ as described in \eqref{e:Q_ParSGD} is used to make the PSGD step after the $t$-th query and the learning rate $\eta_t = \frac{1}{L+\frac{\alpha^\prime(\M)\sqrt{2T}}{D}} $, where $\alpha^\prime(\M)= c_0\sqrt{\frac{\sigma^2}{ n} +\frac{c_2 d B^2 \log \log^\ast d}{n r} }$ for some positive universal constant $c_0$. } Then,
  for positive universal constants $c_1$ and $c_2$ and 
 $r$ such that ${d \geq r \geq c_1 \log \log^\ast d}$, we have
\begin{align*}
&\cE(f, \clientset, \A, S) \leq  \frac{c_2D}{\sqrt{nT}}\sqrt{\sigma^2 +\frac{c_2 d B^2 \log \log^\ast d}{r} }
+ \frac{L D^2}{2T}.
\end{align*}
\end{thm}

We note that the term  $\frac{ dB^2 \log \log^\ast d}{r}$ illustrates the slowdown in convergence due to quantization error. This is nearly the best rate which can be achieved when one uses  $r$-bit quantizers without any side information\footnote{Similar convergence bounds (upto $\log \log d$ factor) for parallel SGD can be achieved by using subsampled version of rotated quantizer in \cite{suresh2017distributed} or the subsampled version of uniform quantizer after preprocessing due to  Kashin's representation ($cf.$~\cite{kashin1977section}, \cite{lyubarskii2010uncertainty}).}. Note that in cases in which $B$ is large relative to $\sigma^2$,  the slowdown due to this term can be significant, and the algorithm maybe far away from our lower bound in Theorem \ref{t:lb_smooth_coom_constraints}.

\subsection{ {\tt WZ-SGD}: An almost optimal algorithm for distributed optimization} 
\setlength{\floatsep}{10pt plus 2pt minus 2pt}
\begin{figure}[t]

\centering
\begin{tikzpicture}[scale=0.9, every node/.style={scale=0.9}]
\node[draw,text width= 8 cm, text height=,] {%
\begin{varwidth}{\linewidth}
            
            \algrenewcommand\algorithmicindent{0.7em}
\begin{algorithmic}[1]

\For{ \textbf{\color{taorange!50!black}Clients} $i \in [n]$} \  \Comment{Setting quantizers}
\If{$i \in \clientset_1$}
$Q_i =Q_{\mathtt{u}}$ 
\Else~
$ Q_i = Q_{{\tt WZ}, i} $ 
\EndIf

\EndFor


\State Initialize $x_0 \in \X$ 
 \For{\newnewest{$t=0$ to $T-1$}}
\For{\textbf{\color{red1}Server}}
\State Broadcast $x_t$ to clients
\EndFor
 \For{\textbf{\color{taorange!50!black}Clients} $i \in [n]$}  \ \Comment{Encoding}
 \State    Compute $\hat{g}_i(x_t)$
 \State Send $\Qenc_i(\hat{g}_i(x_t))$ to server 

\EndFor
\For{\textbf{\color{red1}Server}} \ \Comment{Decoding}
 \For{$i \in \clientset_1$}
 \State  $Q_i(\hat{g}_i(x_t)) = \Qdec_i(\Qenc_i(\hat{g}_i(x_t)))$ 
 
 \EndFor
 \State $Y_t=\frac{2}{n} \sum_{i \in \clientset_1}Q_i(\hat{g}_i(x_t))$ \ \Comment{Side information}
 \For{$i \in \clientset_2$}
 \State  $Q_i(\hat{g}_i(x_t), Y_t) = \Qdec_i(\Qenc_i(\hat{g}_i(x_t)), Y_t)$ 
 \EndFor
\State  $x_{t+1}=\Gamma_{\X}\left(x_t - \eta_t\cdot \frac{2}{n} \sum_{i \in \clientset_2}Q_i(\hat{g}_i(x_t), Y_t)\right) $
\EndFor

\EndFor
\State At \textbf{\color{red1}Server~}\textbf{Output:} $\bar{x}_T=\frac{1}{T} \sum_{t \in [T]} x_{t}$
\end{algorithmic}  
\end{varwidth}};
 \end{tikzpicture}
 \renewcommand{\figurename}{Algorithm}
 \caption{$\tt{WZ}$-$\tt{SGD}$ algorithm}\label{a:WZ-SGD}
\vspace{-0.5cm}  
 \end{figure}

We now present our main algorithm : {\tt WZ-SGD}. \newest{ {\tt WZ-SGD} uses our first Wyner-Ziv estimator (see Section \ref{s:sRMQ})  based on subsampled RMQ as a subroutine to form much more accurate gradient estimates compared to  those formed in ${\tt Parallel SGD}.$} As a result of this, {\tt WZ-SGD} significantly improves over the convergence rate of Theorem \ref{t:ParSGD} and relegates the dependence of convergence rate on $B$ to only second order terms.

At each iteration $t$, {\tt WZ-SGD} uses the clients in $\clientset_1$ to form the side information estimate $Z_t$ at the server and  then uses the clients in $\clientset_2$ to estimate the gradient for performing the descent step, where\footnote{For simplicity, we assume that $n/2$ and $d/r_1$ are integers such that $d/r_1$ divides $n/2$.} $\clientset_1 \colon{=} \{\tC_1, \ldots, \tC_{n/2} \}, 
  \clientset_2 \colon{=}\clientset \setminus \clientset_1. $

\paragraph{The side information estimate $Y_t$.}
The side information is formed as follows. Under the $r$-bit communication constraint, we divide the coordinates into blocks of dimension $r_1$, where $r_1{:=}r/\log \ell_1$, and $\log \ell_1$ denotes the precision bits used by clients to represent each coordinate in the assigned block. This way we have $d/r_1$ 
blocks.  \newer{We also equally partition the set $\clientset_1$ into $d/r_1$ groups. Further, we assign every block of $r_1$ coordinates to every other distinct group of $\frac{n/2}{d/r_1}$ clients.}
To quantize the coordinates within any block, the \newer{group of} clients assigned to that block will use a coordinate-wise uniform quantizer (CUQ). CUQ is an unbiased, uniform quantizer that has appeared recently in many works on gradient quantization.  We denote by $Q_{\tt u}:[-B, B] \to \{-B +2B \cdot (i-1)/(\ell_1-1): i \in [\ell_1]\}$ the \newer{$\ell_1$-level} CUQ quantizer. For a scalar input $x \in [-B, B],$
\begin{align}\label{e:CUQ}
Q_{\tt u}(x)= 
\begin{cases}  
\ceil{\frac{x(\ell_1-1)}{2B}} \cdot \frac{2B}{\ell_1-1}, \quad  \text{w.p.} \quad \frac{x - \floor{\frac{x(\ell_1-1)}{2B}} }{ \frac{2B}{\ell_1-1}},  \\
 \floor{\frac{x(\ell_1-1)}{2B}} \cdot \frac{2B}{\ell_1-1}, \quad  \text{w.p.} \quad \frac{ \ceil{\frac{x(\ell_1-1)}{2B}} -x}{ \frac{2B}{\ell_1-1}}.
\end{cases}
\end{align}

Each client uses 
{an $\ell_1$-level CUQ to quantize the associated block of coordinates separately.} Thus, the overall communication by each client is $r_1\cdot \log \ell_1 = r$ {and satisfies the communication} constraint.

For each block, we then form the side information by taking the average of the quantized outputs from all its associated clients. 
Denote by $Y_t$ the side information formed at the server by using the clients in $\clientset_1$ at iteration $t$. Then, from the  description of our scheme, for all coordinates $i \in \{r_1(j-1)+1, \ldots, r_1j\}$ and for all $j \in [d/r_1]$ we have 
\[
Y_t(i)= \frac{2d }{nr_1} \sum_{k \in \mathcal{S}_j}Q_{\tt u}( \hat{g}_{k}(x_t)(i)),
\]
where $\mathcal{S}_j$ denotes the set of $\frac{nr_1}{2d}$ clients assigned to form the side information for the  coordinates $\{r_1(j-1)+1, \ldots, r_1j\}$, i.e.,  
\begin{align}\label{e:SI_WZ}
\displaystyle{\mathcal{S}_j = \{ \tC_{(nr_1/(2d)) \cdot (j-1)+1}, \ldots,  \tC_{(nr_1/(2d)) \cdot j} \}.}
\end{align}
We remark that to decode each quantized gradient estimate sent by clients in $\clientset_2,$ we will use $Y_t$ as side information. However, $Y_t$ will not be used as is but a version which is rotated\footnote{For decoding each quantized gradient sent by clients in $\clientset_2$, $Y_t$ will be rotated using independent and identical versions of matrix $R$.} using a random matrix \eqref{e:R} will be used.

\paragraph{The Wyner-Ziv gradient estimate {$Q_{{\tt WZ }}$}.}
We use the clients in $\clientset_2$ to form the actual gradient estimate. The clients encode the stochastic gradients using a subsampled RMQ quantizer (see  Section \ref{s:sRMQ} for details). Therefore, for stochastic gradient $\hat{g}_j(x_t),$ the output encoded by client $\tC_j$ using subsampled RMQ is described as follows:
\[\displaystyle{ \Qenc_{{\tt WZ}, j}( \hat{g}_j(x_t))=\{\Qenc_{{\tt M}}(R_j\hat{g}_j(x_t)(i)): i \in \mathcal{D}_j \} }.\]

At the server, the communication for all $\tC_j \in \clientset_2$ is decoded as follows:
\vspace{-0.2cm}
\eq{ Q_{{\tt WZ}, j}( \hat{g}_j(x_t), Y_t){=}R_j^{-1}\left({\frac{d}{r_2}}\sum_{i\in \mathcal{D}_j}\left(\tilde{g}_j -R_jY_t(i) \right)e_i  +R_jY_t\right)}
where $ \tilde{g}_j(i)=Q_{\tt M}(R_j\hat{g}_j(x_t)(i), R_jY_t(i)).$ Finally, the server averages over all the quantized gradient estimates of clients in $\clientset_2$ to get (see, line 17 in Algorithm \ref{a:WZ-SGD})
\begin{align}\label{e:descent_WZ_SGD}
\M(Q_{{\tt WZ}, 1},\dots, Q_{{\tt WZ}, n})=\frac{2}{n}\sum_{j=n/2+1}^{n}Q_{{\tt WZ}, j}( \hat{g}_j(x_t), Y_t)
\end{align}

Next, we present the convergence rate of the proposed $\tt{WZ\text{-}SGD}$ algorithm \newer{for communication constrained distributed optimization}.
\begin{thm} \label{t:conv_WZ_SGD}
  Let $S$ be the communication protocol which uses the CUQ quantizer for clients $\clientset_1$ and the subsampled RMQ quantizer for clients in $\clientset_2$. Let $\A$ be the optimization algorithm described in 
  \newnewest{Algorithm \ref{a:WZ-SGD} with the learning rate $\eta_t = \frac{1}{L+\frac{\alpha^\prime(\M)\sqrt{2T}}{D}}$, where $\alpha^\prime(\M)= c_0\sqrt{\frac{d \sigma^2 \log \log nT}{n r} }$ for some positive universal constant $c_0$.} Then, for positive
universal constants $c_1, c_2,$ and $c_3$ and
  $r, n$ such that $d \geq r \geq c_1\max\{\log \log n T,\log (B/\sigma)\}$ and $nr \geq c_2 d^2 \log (B/\sigma)$, we have
\begin{align*}
\cE(f, \clientset, \A, S)  \leq \frac{c_3D\sigma}{\sqrt{nT}}\cdot \sqrt{\frac{d \log \log nT}{r} } +\frac{LD^2}{2T}.
\end{align*}

\end{thm}
\begin{rem}
The condition on $nr$ is needed to remove any $B$ dependence from the MSE upper bound.
\end{rem}
Thus, in the setting where the number of clients $n$ is large, we match the lower bound in Theorem \ref{t:lb_smooth_coom_constraints} upto a $\log \log nT$ factor.

\newest{\subsection{{\tt UWZ-SGD:} A universal Wyner-Ziv algorithm for distributed optimization}\label{s:extension}
\setlength{\floatsep}{10pt plus 2pt minus 2pt}
\begin{figure}[t]

\centering
\begin{tikzpicture}[scale=0.9, every node/.style={scale=0.9}]
\node[draw,text width= 8 cm, text height=,] {%
\begin{varwidth}{\linewidth}
            
            \algrenewcommand\algorithmicindent{0.7em}
\begin{algorithmic}[1]

\For{ \textbf{\color{taorange!50!black}Clients} $i \in [n]$} \ \Comment{Setting quantizers}
\If{$i \in \clientset_1$}
$Q_i =Q_{\mathtt{RATQ}}$ 
\Else~
$ Q_i = Q_{\tt WZ,u} $ 
\EndIf

\EndFor


\State Initialize $x_0 \in \X$ 
 \For{\newnewest{$t=0$ to $T-1$}}
\For{\textbf{\color{red1}Server}}
\State Broadcast $x_t$ to clients
\EndFor
 \For{\textbf{\color{taorange!50!black}Clients} $i \in [n]$}  \ \Comment{Encoding}
 \State    Compute $\hat{g}_i(x_t)$
 \State Send $\Qenc_i(\hat{g}_i(x_t))$ to server 

\EndFor
\For{\textbf{\color{red1}Server}} \ \Comment{Decoding}
 \For{$i \in \clientset_1$}
 \State  $Q_i(\hat{g}_i(x_t)) = \Qdec_i(\Qenc_i(\hat{g}_i(x_t)))$ 
 
 \EndFor
 \State $Y_t=\frac{2}{n} \sum_{i \in \clientset_1}Q_i(\hat{g}_i(x_t))$ \ \Comment{Side information}
 \For{$i \in \clientset_2$}
 \State  $Q_i(\hat{g}_i(x_t), Y_t) = \Qdec_i(\Qenc_i(\hat{g}_i(x_t)), Y_t)$ 
 \EndFor
\State  $x_{t+1}=\Gamma_{\X}\left(x_t - \eta_t\cdot \frac{2}{n} \sum_{i \in \clientset_2}Q_i(\hat{g}_i(x_t), Y_t)\right) $
\EndFor

\EndFor
\State At \textbf{\color{red1}Server~}\textbf{Output:} $\bar{x}_T=\frac{1}{T} \sum_{t \in [T]} x_{t}$
\end{algorithmic}  
\end{varwidth}};
 \end{tikzpicture}
 \renewcommand{\figurename}{Algorithm}
 \caption{$\tt{UWZ}$-$\tt{SGD}$ algorithm}\label{a:UWZ-SGD}
\vspace{-0.5cm}  
 \end{figure}

We now relax the almost sure \eqref{e:varaince} assumption on the gradients estimated by clients and present an {\em universal} algorithm $\tt UWZ$-$\tt SGD$, \newnewest{where the compression at the clients doesn't need the knowledge of $\sigma$ and only the server needs to know $\sigma$ to set the learning rate in Algorithm \ref{a:SGD_Q}. Specifically,  we assume that for all clients $i \in [n]$, 
\begin{align}\label{e:var}
\E{\norm{\hat{g}_i(x) - \nabla f(x)}^2} \leq \sigma^2. \  \text{(m.s.  deviation bound)}
\end{align}
The other assumptions \eqref{e:asmp_unbiasedness} and \eqref{e:bounded_est} about the estimated gradients still hold\footnote{Note that the lower bound in Theorem \ref{t:lb_smooth_coom_constraints} under the almost sure assumption \eqref{e:varaince} holds for the relaxed mean-squared assumption \eqref{e:var} too. }. We show how the dependence of $B$ in the naive scheme, presented in Theorem~\ref{t:ParSGD}, can be reduced using subsampled RDAQ.}

At every iteration, the client indexed by $\clientset_1$ use  \newnewest{subsampled RATQ to compress their gradient estimates. The side information is then formed by taking sample average of the decoded estimates, similar to \eqref{e:SI_WZ} (see line 14, in Algorithm \ref{a:UWZ-SGD}).

On the other hand, the clients in $\clientset_2$ use the subsampled RDAQ quantizer $Q_{\tt WZ,u}$ from section \ref{s:sRDAQ}. Note that the
subsampled RDAQ decoder \eqref{e:s_RDAQ} uses the side information constructed by $\clientset_1$. Finally, the server takes the sample average of the decoded values estimated by the $\clientset_2$ (see, line 17 in Algorithm \ref{a:UWZ-SGD}) to form the mapping $\M$. 
\begin{thm}\label{c:PSGD+}
Let $S$ be the communication protocol which uses the subsampled RATQ quantizer for clients $\clientset_1$ and the subsampled RDAQ quantizer for clients in $\clientset_2.$ Let $\A$ be the optimization algorithm described in Algorithm \ref{a:UWZ-SGD} with the learning rate $\eta_t=\frac{1}{L+\frac{\alpha^\prime(\M)\sqrt{2T}}{D}}$, where $\alpha^\prime(\M)=\sqrt{\frac{2\sigma^2}{n}+\frac{2\rho(B,\sigma,r,n)}{n}}$ with 
${\rho{=}\sqrt{\sigma^2{+}\frac{2\sigma^2}{n}{+}\frac{2dB^2}{n\left(\frac{r}{ 3+ \ceil{\log(1+ \ln^\ast(d/3))}} -1\right)}}\frac{16\sqrt{3}dB}{\frac{r}{\ceil{h+\log h}}-1}}$ and ${h=1+ \ln^\ast(d/6)}$.
Further,  suppose that the gradient estimated by all the clients satisfy the assumptions \eqref{e:asmp_unbiasedness}, \eqref{e:var}  and \eqref{e:bounded_est}. Then, for ${d \geq r \geq}$ $\max\{h+\log h,  3+ \ceil{\log(1+ \ln^\ast(d/3))}\}$, we have
\begin{align*}
\cE(f, \clientset, \A, S)
&\leq  \frac{2D}{\sqrt{nT}}\sqrt{\sigma^2 +\rho(B, \sigma, r, n)}+\frac{LD^2}{2T}.
\end{align*}
\end{thm}}

\begin{rem}
We remark that under the relaxed assumption of mean-square bounded deviation in \eqref{e:var}, for $nr\geq (B^2/\sigma^2)d\log(1+\ln^\ast(d/3))$,  the slowdown in the convergence rate is illustrated by $\rho\approx \frac{16\sqrt{3}dB\sigma\ln^\ast d}{r}$,
 and the universal scheme surpasses the performance of parallel SGD presented in Section \ref{s:blscheme}. 
\end{rem}}

\newest{We end this section by pointing out limitations of a natural scheme for distributed optimization.
\begin{rem}[Limitations of Centering Based Scheme]
We note that our framework allows for quantization schemes were previously quantized gradients are used for  gradient compression at the current iteration. For instance, we can use average of the compressed gradients at the previous iteration to center the current compression. That is, the server broadcasts the average to all the clients and the clients only need to compress the difference between the current stochastic gradient and this communicated average.

If the query points $x_{t-1}$ and $x_t$ do not deviate by much, then such type of compression schemes which are centered around the average of previous quantized gradients may turn out to be very efficient. Also, note that the typical learning rate for smooth optimization is $O(\sqrt{1/T})$, which means that the  difference between the points $x_t$ and $x_{t-1}$ is not very large. Moreover, the smoothness assumption \eqref{e:L_smooth} allows to control the deviation between the true gradients at successive iterations in terms of the points queried at the two iterations. All this hints at the fact that such a scheme where each client uses optimal quantizers for quantizing the difference vector without any side-information may turn out to be optimal. But note that for a very large value of smoothness constant, $L\geq \sigma \sqrt{T}/ D$,  even with small deviation between successive query points, the deviation between the gradients will be large. This would in turn lead to variance of the quantized gradients having a dependence on the maximum gradient norm $\sqrt{B^2-\sigma^2}$, which would in turn lead to the leading term, in terms of $n$ and $T$, in convergence rate depending on $\sqrt{B^2-\sigma^2}$.
\end{rem}}

\section{The Gaussian Wyner-Ziv problem}\label{s:GWZ}
Consider the random vectors $X$ and $Y$,  where the coordinates $\{X(i), Y(i)\}_{i=1}^{d}$ form an i.i.d. sequence. Furthermore, for all $i \in [d]$, let
\[
X(i)=Y(i)+Z(i),
\]
where $Y(i)$ and $Z(i)$ are independent and zero-mean Gaussian random variables with variances $\sigma_y^2$ and $\sigma_z^2$, respectively. The encoder has access to the sequence $X=\{X(i)\}_{i=1}^{d}$, which it quantizes and sends to the decoder. The decoder, on the other hand, has access to $Y$ (note that encoder does not have access to $Y$) and can use it to decode $X$. A pair $(R, D)$ of non-negative numbers is an achievable rate-distortion pair if we can find a quantizer $Q_d$ of precision $dR$
and with mean square error $\E{\norm{Q_d(X, Y)-X}_2^2}\leq dD$.  For $D \geq 0$, denote by $R(D)$ the infimum over all $R$ such that $(R, D)$ constitute an achievable rate-distortion pair for all $d$ sufficiently large. From\footnote{The model considered in \cite{wyner1976rate} and perhaps the more popular Wyner-Ziv model is $Y=X+Z$. Nevertheless, through MMSE rescaling this model can be converted to $X=Y^{\prime}+Z^{\prime}$ (see, for instance, \cite{liu2016polar}).} \cite{wyner1976rate}, $R(D)$ can be characterized as follow:
\eq{
R(D) = \begin{cases} \frac{1}{2}\log \frac{\sigma_z^2}{D} \quad &\text{if} \quad D\leq \sigma_z^2,\\
0 \quad &\text{if} \quad D > \sigma_z^2.
\end{cases}
}
Several constructions that involve computational heavy methods such as error correcting codes and lattice encoding attain the rate-distortion function, asymptotically for large $d$.
 In this section, we show that modulo quantizer with parameters set appropriately attains a rate very close to the rate-distortion function $R(D)$. Moreover, we will show that this rate can be  achieved for arbitrary $Y$ and $Z$, as long as $Z$ is a zero mean subgaussian random variable with variance factor $\sigma_z^2$. Our proposed quantizer $Q_d(X, Y)$ uses the modulo quantizer to quantize $X(i)$ with side information $Y(i)$ at the decoder and the parameter $k, \Delta^{\prime}$ set as follows:
 \begin{align}\label{e:WZ_MQ}
 \nonumber
&\delta = \sqrt{D/308}, 
\quad
\log k= \ceil{\log \left(2+ \sqrt{
\frac{24\sigma_z^2}{D} \ln \frac{308\sigma_z^2}{D}} \right)}\\
&
\Delta^{\prime}= \sqrt{6(\sigma_z^2)\ln (\sigma_z/\delta)},
\quad 
\eps=2\Delta^\prime/(k-2),
 \end{align}
 
 \begin{thm}\label{t:G_WZ}
 Consider random vectors $X, Y$ in $\R^d$
 with $X(i)=Y(i)+Z(i)$ and 
 $Z(i)$ independent of $Y(i)$ being a centered subgaussian random variable with variance factor of $\sigma_z^2$, for all coordinates $i \in \{1, \ldots, d\}$.  
 Then, 
 for $D \leq ({\sigma_z^2}/{308})$,
 the quantizer $Q_d(X, Y)$ described above
  has MSE less than $dD$ and has rate $R$ satisfying
 \[R \leq \frac{1}{2}\log \frac{\sigma_z^2}{D}+O\left(\log \log \frac{\sigma_z^2}{D} \right).\]
 \end{thm}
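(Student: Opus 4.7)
The plan is to reduce the problem to a per-coordinate analysis of the Modulo Quantizer, exploiting the subgaussian structure of $Z(i) = X(i) - Y(i)$ directly (no random rotation is needed, since we are already in a distribution that gives subgaussian tails on each coordinate). Since $Q_d$ applies MQ independently to each coordinate, and $\{(X(i), Y(i))\}_{i \in [d]}$ are i.i.d., the total MSE is exactly $d$ times the per-coordinate MSE, so it suffices to prove that each coordinate contributes at most $D$.

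First, I would fix a single coordinate and write the per-coordinate MSE using a good-event/bad-event split around the event $\{|Z(i)| \leq \Delta'\}$, with $\Delta'=\sqrt{6\sigma_z^2\ln(\sigma_z/\delta)}$ as in \eqref{e:WZ_MQ}. On the good event, condition \eqref{e:parameter_condition} of Lemma \ref{t:MQ} is satisfied with the chosen $\eps = 2\Delta'/(k-2)$ (since $|X(i)-Y(i)|\leq \Delta'$), so MQ gives $|Q_{\tt M}(X(i),Y(i))-X(i)|\leq \eps$ almost surely; this contributes at most $\eps^2 = 4\Delta'^2/(k-2)^2 = 24\sigma_z^2\ln(\sigma_z/\delta)/(k-2)^2$. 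On the bad event, I would use the unconditional decoder boundedness \eqref{e:boundedness_MQ_decoder}, which gives $|Q_{\tt M}(X(i),Y(i))-X(i)|\leq k\eps + |Z(i)|$, hence a squared-error bound of the form $2k^2\eps^2 + 2Z(i)^2$. The expectation of this quantity restricted to the bad event is then controlled by the subgaussian tail bound $P(|Z(i)|>\Delta')\leq 2e^{-3\ln(\sigma_z/\delta)} = 2\delta^3/\sigma_z^3$ and a standard subgaussian truncated second moment bound $\E{Z(i)^2\mathbbm{1}\{|Z(i)|>\Delta'\}} \lesssim \delta^2 \cdot \text{polylog}(\sigma_z/\delta)$. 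This is essentially the same argument used to establish Lemma \ref{t:RMQ}, just applied directly at the coordinate level without the intermediate rotation step.

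Combining both contributions, the per-coordinate MSE is bounded by
\begin{equation*}
\frac{24\sigma_z^2\ln(\sigma_z/\delta)}{(k-2)^2} + 154\delta^2,
\end{equation*}
which is the one-dimensional analogue of the bound in Lemma \ref{t:RMQ} (so it may even be extracted directly from the proof of that lemma). Substituting $\delta^2 = D/308$ makes the second term equal to $D/2$, and the choice $\log k = \lceil\log(2+\sqrt{(24\sigma_z^2/D)\ln(308\sigma_z^2/D)})\rceil$ ensures that $(k-2)^2 \geq (24\sigma_z^2/D)\ln(308\sigma_z^2/D) \geq (24\sigma_z^2/D)\ln(\sigma_z/\delta)$, making the first term at most $D/2$. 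Hence the per-coordinate MSE is at most $D$, and the total MSE is at most $dD$.

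Finally, each coordinate is encoded in $\log k$ bits, so the rate is $R = \log k$. Expanding,
\begin{equation*}
R \leq 1 + \tfrac{1}{2}\log\!\bigl(24\sigma_z^2/D\bigr) + \tfrac{1}{2}\log\log\!\bigl(308\sigma_z^2/D\bigr) + O(1) = \tfrac{1}{2}\log(\sigma_z^2/D) + O\!\bigl(\log\log(\sigma_z^2/D)\bigr),
\end{equation*}
as claimed. The main technical step is the careful subgaussian tail bookkeeping on the bad event; but since the subgaussian calculus is already invoked in Lemma \ref{t:RMQ} (via Remark \ref{r:subg}), I expect to be able to reuse those estimates directly and keep the new work essentially to parameter-tuning.
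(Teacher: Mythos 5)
Your proposal matches the paper's argument essentially step for step: the paper likewise works per coordinate (without rotation, since $Z(i)=X(i)-Y(i)$ is already subgaussian with variance factor $\sigma_z^2$), splits on the event $\{|X(i)-Y(i)|\leq\Delta'\}$, bounds the good event via Lemma~\ref{t:MQ} and the bad event via the decoder-boundedness property~\eqref{e:boundedness_MQ_decoder} plus the subgaussian tail / truncated-moment estimates of Lemma~\ref{l:variance_tail}, arriving at the same bound $24\sigma_z^2\ln(\sigma_z/\delta)/(k-2)^2+154\delta^2$ and the same parameter substitution. Your derivation of $\log k$ even reproduces the choice made in~\eqref{e:WZ_MQ}, so this is the paper's proof with correct bookkeeping.
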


\section{The high-precision regime}\label{s:hpr}
\subsection{RMQ in the high-precision regime.}
For the \known setting, our quantizer RMQ described in Alg. 4 and 5
remains
valid even for $r>d$. We will assume $r=m d$ for integer $m \geq 2$. For each client $i$, we set
\begin{align}
\delta = \frac{\Delta_i}{n^{\frac 12}(2^{r/d}-2)}, \quad \log k = \frac{r}{d}, \quad 
\Delta^\prime &= \sqrt{6(\Delta_i^2/d)\ln \Delta_i/\delta}, \quad \eps = \frac{2\Delta^{\prime}}{k-2}.
 \label{e:param_high_precision_known}
 \end{align}

The performance of 
protocol $\pi^*_k$  using RMQ with parameters set as in \eqref{e:param_high_precision_known} for each client can be characterized as follows.
\begin{thm}
For a fixed $\mathbf{\Delta}=(\Delta_1, ...,\Delta_n)$
and  $r=md$ for integer $m\geq 2$,
the protocol $\pi^*_k$ with parameters set as in  \eqref{e:param_high_precision_known}
satisfies 
\[
\MSE(\pi^*_k, \mathbf{x}, \mathbf{y})\leq \left(
12 \ln n+ \frac{24r}{d}+ {154}/{n}+
166
\right)
\left(\sum_{i \in
[n]}\frac{\Delta_i^2}{n} \cdot \frac{1}{n(2^{r/d}-2)^2} \right), 
\]
for all $\mathbf{x},\mathbf{y}$ satisfying~\eqref{eq:delta_cond}.
\end{thm}
\begin{proof}
Denoting by $Q_{i}$ the quantizer $Q_{{\tt M},R}$
with parameters set for client $i$,
by Lemmas~\ref{l:main} and~\ref{t:RMQ}, we get
\begin{align*}
{\E{ \|\hat{\bar{x}} - \bar{x}\|_2^2}}
&\leq \sum_{i=1}^n\frac{\alpha(Q_i; \Delta_i)}{n^2} +  \sum_{i=1}^n\frac{\beta(Q_i; \Delta_i)}{n}
\end{align*}
Further, since $k\geq 4$ holds when $r\geq 2d$ for our choice of parameters,
by using Lemma~\ref{t:RMQ} and substituting $\delta^2=\Delta_i^2/n(2^{r/d}-2)^2$, we get
  \begin{align*}
\alpha(Q_{i}; \Delta_i)&\leq \frac{12 \Delta_i^2\ln (n(2^{r/d}-2)^2)}{{(2^{{r}/{d}}-2)}^2}
+ \frac{ 154\Delta_i^2}{n(2^{{r}/{d}}-2)^2},
\\
\beta(Q_{i}; \Delta_i)&\leq  \frac{154\Delta_i^2}{n(2^{{r}/{d}}-2)^2}.
\end{align*}

which with the previous bound gives
\begin{align*}
{\E{ \|\hat{\bar{x}} - \bar{x}\|_2^2}}
&\leq 
\left(
12 \ln n+\frac{24r}{d}+ \frac{154}{n}+
154
\right)
\sum_{i=1}^n\frac{\Delta_i^2}{n^2(2^{r/d}-2)^2},
\end{align*}
where use the inequality $ \ln x \leq x$, $\forall x \geq 0$, to bound $\ln (2^{r/d}-2)^2/(2^{r/d}-2)^2$ by $1$.

\end{proof}

\begin{rem}
 Similar to Remark \ref{r:DMQ_lp}, we note that using MQ for each coordinate without rotating (or even with rotation using $R$ as above) with $\Delta^{\prime}=\Delta_i$ yields MSE less than 
 \[ O\left( \sum_{i=1}^{n}\frac{\Delta_i^2}{n}\cdot \frac{d}{n2^{2r/d}}\right),\]
 for $r \geq d$. Thus our approach above allows us to remove the $d$ factor at the cost of a (milder for large $d$) $\log n +r/d$ factor.

\end{rem}

\subsection{Boosted RDAQ: RDAQ in the high-precision regime.}
Moving to the \unknown setting, we describe an update to RDAQ
described in Alg. 10 and 11 for the high-precision setting.
For brevity, we denote by $m:=r/d$  the number of bits per dimension.
A straight-forward scheme to make use of the
high precision is to independently implement the RDAQ quantizer
approximately $ \floor{m/\ln^*d}$ times and use the average of the
quantized estimates as the final estimate.
We will see that the MSE incurred by such an
estimator is $O(\Delta \ln^*d/m)$.
We will show that this naive
implementation can be significantly improved and an exponential decay
in MSE with respect to $m$ can be achieved.

We boost RDAQs performance as follows.
Simply speaking, instead of sending the bits produced by multiple
instances of the encoder of RDAQ, we send the ``type'' of each sequence.
A similar idea appeared in~\cite{mayekar2020limits} for the case without
any side information.
At the encoding stage of RDAG
given in Alg. 10 and 11, after random rotation and computing
$z$ in Steps $1$ to $3$ of Alg. 10, we repeat
Step $4$
$N$ times with independent randomness each time and store
only the total number of ones seen for each coordinate $i$ and scale $j$.
Specifically, let $U_t(i, j)$ be an independent uniform random variable in
$[-M_j,M_j]$, for all $i \in [d], j\in [h]_0$, and $t \in [N]$, which are generated using
public randomness between the encoder and the decoder. Using this
randomness, we compute  
$\tilde{x}_{j,t}= \sum_{i=1}^{d} \indic{\{U_t(i,j)\leq x_R(i)\}}e_i$
for all $j \in [h]_0$. Then, instead of storing  
$\tilde{x}_{j,t}$ for each $j$ and $t$, we store the sum
$\sum_{t=1}^n\tilde{x}_{j,t}$ for each $j\in [h]_0$.
Since each coordinate of the sum can be stored in $\log (N+1)$ bits,
the new encoder's output can be stored in  $d(h\log (N+1) +\log h)$.
Thus, we can implement this scheme by using $m=(h\log (N+1) +\log h)$
bits per dimension.

At the decoding stage, we rotate $y$ and compute $z^*$ in precisely
the same manner as done in Steps 1 to 3
of the decoding Alg. 11 of RDAQ. Then,
  using the encoded input received, the side-information $y$, the same
  random variables $U_t(i,j)$ and random matrix $R$ used by the
  encoder, the final estimate $Q(x)$ is 
  \begin{align}\label{e:boosted_RDAQ}
    &Q(x)=R^{-1} \left( \frac{1}{N}\cdot \sum_{i \in [d]}\sum_{t \in
  [N]} \left( B^t_{i, Rx} - B^t_{i, Ry} \right)  e_i +Ry\right), 
  \end{align}
  where $B^t_{i, v} = \indic{\{U_t(i,z^*(i) ) \leq v(i)\}}$ for $v$ in
  $\R^d$. 

The result below characterizes the performance of our quantizer Boosted RDAQ
$Q$.
  \begin{lem}\label{t:bRDAQ}
Let $Q$ be Boosted RDAQ described above.
Then,  we have for $\X=\Y=\R^d$ and every $\Delta > 0$,
we have 
\[
\alpha_u(Q; \Delta) \leq  \frac{ 16 \sqrt{3}\Delta}{N}\,\,
\text{ and }\,\,
    \beta_u(Q; \Delta) =0.
    \]
    Furthermore, the output of the quantizer can be described in
    $d(h \log N + \log h)$ bits. 
  \end{lem}
  Thus, when we have a total precision budget of $r=dm$ bits using the
    Boosted RDAQ algorithm with  number of repetitions $N=
    2^{\floor{(m -\log h)/h}}-1$, we get an exponential decay in MSE
    with respect to $m$. 

We consider the protocol $\pi^*_u$ that uses the $Q$ above for each
client with $M_j$ and $h$ set as in \eqref{e:levels}, $i.e.$, with
\begin{align}
N=2^{\floor{(m -\log h)/h}}-1,\,\, M_j^2= \frac{6e^{*j}}{d}, j\in[h]_0, ~~ \log h =\lceil \log (1+\ln^*(d/6) )\rceil.
\label{e:param_high_precision_unknown}
\end{align}
Therefore, by the previous lemma and Lemma \ref{l:main}, we 
get the following result.

\begin{thm}
For $r=dm$ with integer $m\geq h + \log h$,
the protocol $\pi^*_u$ with parameters as set in \eqref{e:param_high_precision_unknown}
satisfies
\[
\MSE(\pi^*_u, \mathbf{x}, \mathbf{y})\leq \sum_{i \in [n]}\frac{\Delta_i}{n} \cdot \frac{64\sqrt{3}}{n2^{{r}/(d(2+2\ln^*(d/6)) )}}
,
\]
for all {$\mathbf{x},\mathbf{y}$} satisfying \eqref{eq:delta_cond},
for every $\mathbf{\Delta}=(\Delta_1, ...,\Delta_n)$.
\end{thm}
\begin{proof}
Denote by $\hat{\bar{x}}$ the output of the protocol. Then,
by Lemmas~\ref{l:main} and Lemma~\ref{t:bRDAQ}, we get
\begin{align*}
\E{ \|\hat{\bar{x}} - \bar{x}\|_2^2}
&\leq \frac 1 {n^2} \sum_{i=1}^n\alpha(Q; \Delta_i)
 \\
 &\leq 
 \frac {16\sqrt{3}} {n^2 N } \sum_{i=1}^n\Delta_i,
\end{align*}
where the previous inequality is by Lemma~\ref{t:bRDAQ}. 
The 
proof is completed by using 
\[
N\geq
\frac{2^{m/h}}{2^{1+(\log h)/h}}\geq \frac{2^{m/h}}{4} \geq \frac{2^{m/(2+2\ln^*(d/6))}}{4},\]
where the first inequality follows from using $\lfloor x\rfloor \geq x-1$ for the floor function in the value of $N$ in \eqref{e:param_high_precision_unknown}, the second follows from  the fact that $\log x \leq x, \forall x \geq 0$, and the third follows from $\lceil x\rceil \leq x+1$  for the ceil function in the value of $h$ in \eqref{e:param_high_precision_unknown}.
\end{proof}

\section{Numerical Experiments}\label{s:exp}

\newest{We empirically demonstrate the performance of our proposed quantizers on the following mean estimation task.

Each client $i$ has a $d$-dimensional vector $x_i=\mu + U^{c}_i,$ where $\mu$ in $[0, 1]^d$ is constant mean vector and $U^{c}_i$ is a random vector whose each coordinate is a Uniform random variable in $[-\Delta^{\prime}/2, \Delta^{\prime}/2].$ The server has side information $y_i$ corresponding to  $x_i$, where $y_i=\mu + U^{s}_i$, and $U^{s}$, too, is a random vector whose each coordinate is a Uniform random variable in $[-\Delta^{\prime}/2, \Delta^{\prime}/2].$ Note that the distance between each coordinate of $x_i$ and $y_i$ is bounded by $\Delta^\prime$.

We compare three different mean estimation protocols. The first protocol is our first Wyner-Ziv estimator that uses  RMQ for all the clients. Note that this protocol uses the knowledge of $\Delta^\prime$ to set the values of RMQ.  The second protocol is our universal Wyner-Ziv estimator which uses RDAQ for all the clients.  Here, instead of vanilla RDAQ, we will use boosted RDAQ to make use of all the available precision. Recall that this particular protocol operates without the knowledge of $\Delta^{\prime}.$ Our third protocol uses RATQ for all the clients, an efficient quantizer for the $\ell_2$ ball \cite{mayekar2020ratq}. Note that this protocol neither uses the side information $y_i$ nor the distance between side information and the input vectors and will serve as a baseline. We evaluate  the performance of our protocols by root mean square error (RMSE) between  $\bar{x},$ the sample average of $x_i$s, and its estimate formed by the server $\hat{\bar{x}}$.

We fix the number of clients $n=10$. We conduct the experiments at  dimensions  $d=512$ and $d=1024$, and at two different precision levels: $6$ bits per dimension and $10$ bits per dimension.   For all these four experiments we track the performance of our three quantization protocols by changing $\Delta^{\prime}$. All the experiments are averaged over ten runs for statistical consistency. Our implementation is available online at GitHub\footnote{\url{https://github.com/shubhamjha-46/WZ\_estimators}.}.

We use the following parameters for all the quantizers. For RMQ, we set $\epsilon = \frac{2\Delta^\prime}{31}$ and $\frac{2\Delta^\prime}{511}$ for precision $6$ bits and $10$ bits, respectively. For RDAQ and RATQ, we first normalize the vectors $\{x_i, y_i\}_{i\in [n]}$ using an the bound $\sqrt{d}(1+\Delta^\prime/2)$ on their $\ell_2$-norm. Then, we set\footnote{For RATQ too, we set $h=4$ (see \cite{mayekar2020ratq} for more details).} $h=4$ to compute the different scales $M_j$s in \eqref{e:param_high_precision_unknown} for dimensions $d=512, 1024$. In addition, we choose $N=1$ and $N=3$ for implementing $6$ bit and $10$ bit Boosted RDAQ, respectively. 
The final estimate is obtained by multiplying back the decoded output with $\sqrt{d}(1+\Delta^\prime/2)$.

\setcounter{figure}{0}   
  \pgfplotsset{compat=1.12, every axis/.append style={
  		line width=1pt, tick style={line width=0.6pt}}, width=7.2cm,height=5.5cm, tick label style={font=\tiny},
  	label style={font=\small},
  	legend style={font=\tiny}}
We see in Figures \ref{fig1}, \ref{fig2}, \ref{fig3}, and \ref{fig4} that RMQ comfortably outperforms the other two quantizers at possible parameter choices. This is expected, since the RMSE of RMQ is directly proportional to $\Delta^{\prime}$, which is very small in our experiments. Another consequence of this relation to $\Delta^{\prime}$ is that RMSE increases at a much faster rate with increase $\Delta^{\prime}$ for RMQ than any other protocol. In other words, the performance of RMQ will degrade at a much faster rate than RDAQ as the accuracy of side information degrades.

As can be seen in Figures \ref{fig1}, \ref{fig2}, RDAQ outperforms RATQ at $6$ bits per dimension and both values of dimension. At precision level of $10$ bits per dimension, however, RDAQ is better than RATQ at lower values of $\Delta^{\prime}.$

In other direction, we note that for all our protocols there is slight increase in  RMSE for the same $\Delta^{\prime}$ and bit precision as the dimension increases from $512$ to $1024$. This is because $\ell_2$ norm of the input and the $\ell_2$ distance between input and side information depend on the dimension for our example, and our MSE upper bounds for all the quantizers depend on either one or both of these quantities.

\newnewest{Finally, we end with a remark on our choice of precision levels of $6$ bits and $10$ bits per dimension for this experiment. Notice that similar trends can be observed for precision levels lesser than dimension $d$. However,  setting close to optimal parameters for these quantizers would have been much more tedious at precision levels lesser than the dimension. Since our experiment aimed to study the impact of side information on the accuracy of distributed mean estimation, we chose not to experiment with precision levels lesser than the dimension. The reason for not experimenting at $1$ or $2$ bits per dimension is that RDAQ is not operational below $6$ bits per coordinate for the current dimension.}

\begin{figure}[t]
\centering
\begin{minipage}{0.45\textwidth}
\centering
\begin{tikzpicture}
\begin{axis}[xmode = log, ymode=log, ytick={.0001, 0.001, 0.01, 0.1, 1, 10}, xtick = {0.0625e-02, 0.00125, 0.0025, 0.005, 0.01, 0.02, 0.04, 0.08},
ytick style={draw=none}, xtick style={draw=none}, legend style={nodes={scale=0.6}, font=\footnotesize},
 	xlabel= $\Delta^\prime$,
 	ylabel= RMSE,
    legend pos= south east,
 	xmin=0.000625,
 	xmax=0.08,
 	ymin=0,
 	ymax=10,  grid=major, grid style={dotted, gray}
 	]
\addplot [mark=square, color=teal]
table
[x expr=0.000625*(2^\coordindex), 
y expr=\thisrowno{0}
] {RMQ/RMSE_RMQ_6_512.dat}; 	
 	
\addplot [mark=otimes*, color=ta2gray]
table
[x expr=0.000625*(2^\coordindex), 
y expr=\thisrowno{0}
] {B_RDAQ/RMSE_B_RDAQ_6_512.dat}; 	

\addplot [mark=triangle, color=ta2orange]
table
[x expr=0.000625*(2^\coordindex), 
y expr=\thisrowno{0}
] {RATQ/RMSE_RATQ_6_512.dat}; 

\legend{ RMQ\\RDAQ\\RATQ\\}
\end{axis}
\end{tikzpicture}
\caption{ Comparison of RMQ, RDAQ, and RATQ at per coordinate precision of $6$ bits and $d=512$.}	
\label{fig1}
\end{minipage}
\hspace{0.05\textwidth}
\begin{minipage}{0.45\textwidth}
\centering
\begin{tikzpicture}
\begin{axis}[xmode = log, ymode=log, ytick={.0001, 0.001, 0.01, 0.1, 1, 10}, xtick = {0.0625e-02, 0.00125, 0.0025, 0.005, 0.01, 0.02, 0.04, 0.08}, ytick style={draw=none}, xtick style={draw=none}, legend style={nodes={scale=0.6}, font=\footnotesize},
 	xlabel= $\Delta^\prime$,
 	ylabel= RMSE,
    legend pos= south east,
 	xmin=0.000625,
 	xmax=0.08,
 	ymin=0,
 	ymax=10,  grid=major, grid style={dotted, gray}
 	]
\addplot [mark=square, color=teal]
table
[x expr=0.000625*(2^\coordindex), 
y expr=\thisrowno{0}
] {RMQ/RMSE_RMQ_6_1024.dat}; 	
 	
\addplot [mark=otimes*, color=ta2gray]
table
[x expr=0.000625*(2^\coordindex), 
y expr=\thisrowno{0}
] {B_RDAQ/RMSE_B_RDAQ_6_1024.dat}; 	

\addplot [mark=triangle, color=ta2orange]
table
[x expr=0.000625*(2^\coordindex), 
y expr=\thisrowno{0}
] {RATQ/RMSE_RATQ_6_1024.dat}; 

\legend{ RMQ\\ RDAQ\\ RATQ\\}
\end{axis}
\end{tikzpicture}
\caption{ Comparison of RMQ, RDAQ, and RATQ at per coordinate precision of $6$ bits and $d=1024$.}	
\label{fig2}
\end{minipage}
\end{figure}

\begin{figure}[t]
\centering
\begin{minipage}{0.45\textwidth}
\centering
\begin{tikzpicture}
\begin{axis}[xmode = log, ymode=log, ytick={0.000001, 0.00001, 0.0001, 0.001, 0.01, 0.1, 1, 10}, xtick = {0.00015625, 0.0003125, 0.000625, 0.00125, 0.0025, 0.005, 0.01, 0.02}, ytick style={draw=none}, xtick style={draw=none}, legend style={nodes={scale=0.6}, font=\footnotesize},
 	xlabel= $\Delta^\prime$,
 	ylabel= RMSE,
    legend pos= south east,
 	xmin=0.00015625,
 	xmax=0.02,
 	ymin=0,
 	ymax=1, grid= major, grid style={dotted, gray}
 	]
\addplot [mark=square, color=teal]
table
[x expr=0.00015625*(2^\coordindex), 
y expr=\thisrowno{0}
] {RMQ/RMSE_RMQ_10_512.dat}; 	
 	
\addplot [mark=otimes*, color=ta2gray]
table
[x expr=0.00015625*(2^\coordindex), 
y expr=\thisrowno{0}
] {B_RDAQ/RMSE_B_RDAQ_10_512.dat}; 	

\addplot [mark=triangle, color=ta2orange]
table
[x expr=0.00015625*(2^\coordindex), 
y expr=\thisrowno{0}
] {RATQ/RMSE_RATQ_10_512.dat}; 

\legend{ RMQ\\ RDAQ\\ RATQ\\}
\end{axis}
\end{tikzpicture}
\caption{Comparison of RMQ, RDAQ, and RATQ at per coordinate precision of $10$ bits and $d=512$.}
\label{fig3}
\end{minipage}
\hspace{0.05\textwidth}
\begin{minipage}{0.45\textwidth}
\centering
\begin{tikzpicture}
\begin{axis}[xmode = log, ymode=log, ytick={0.000001, 0.00001, .0001, 0.001, 0.01, 0.1, 1, 10}, xtick = {0.00015625, 0.0003125, 0.000625, 0.00125, 0.0025, 0.005, 0.01, 0.02}, ytick style={draw=none}, xtick style={draw=none}, legend style={nodes={scale=0.6}, font=\footnotesize},
 	xlabel= $\Delta^\prime$,
 	ylabel= RMSE,
    legend pos= south east,
 	xmin=0.00015625,
 	xmax=0.02,
 	ymin=0,
 	ymax=1, grid= major, grid style={dotted, gray}
 	]
\addplot [mark=square, color=teal]
table
[x expr=0.00015625*(2^\coordindex), 
y expr=\thisrowno{0}
] {RMQ/RMSE_RMQ_10_1024.dat}; 	
 	
\addplot [mark=otimes*, color=ta2gray]
table
[x expr=0.00015625*(2^\coordindex), 
y expr=\thisrowno{0}
] {B_RDAQ/RMSE_B_RDAQ_10_1024.dat}; 	

\addplot [mark=triangle, color=ta2orange]
table
[x expr=0.00015625*(2^\coordindex), 
y expr=\thisrowno{0}
] {RATQ/RMSE_RATQ_10_1024.dat}; 

\legend{ RMQ\\ RDAQ\\ RATQ\\}
\end{axis}
\end{tikzpicture}
	\caption{Comparison of RMQ, RDAQ, and RATQ at per coordinate precision of $10$ bits and $d=1024$.}
\label{fig4}
\end{minipage}
\end{figure}
}

\section{Proofs}\label{s:proofs}

\subsection{Proof of Lemma \ref{l:main}}

For the estimator $\hat{\bar {x}}$ in~\eqref{e:estimate},
with $\hat{x}_i=Q_i(x_i, y_i)$, we have

\eq{
&\E{\left\|\frac{1}{n}\cdot\sum_{i \in [n]}Q_i(x_i,y_i)- \frac{1}{n}\cdot \sum_{i \in [n]} x_i \right\|_2^2}\\
&= \frac{1}{n^2} \cdot \sum_{i \in [n]} \E{\norm{Q_i(x_i,y_i)-x_i}_2^2} + \frac{1}{n^2} \cdot \sum_{i \neq j}  \E{\langle Q_i(x_i,y_i)-x_i, Q_j(x_j,y_j)-x_j\rangle}\\
&= \frac{1}{n^2} \cdot \sum_{i \in [n]} \E{\norm{Q_i(x_i,y_i)-x_i}_2^2} + \frac{1}{n^2} \cdot \sum_{i \neq j}  {\langle \E{Q_i(x_i,y_i)}-x_i, \E{Q_j(x_j,y_j)}-x_j\rangle}\\
&= \frac{1}{n^2} \cdot \sum_{i \in [n]} \E{\norm{Q_i(x_i,y_i)-x_i}_2^2}+ \left(\frac{1}{n} \cdot \sum_{i} \|\E{Q_i(x_i,y_i)}-x_i\|_2\right)^2
\\
&\hspace{2cm}
- \frac{1}{n^2} \cdot \sum_{i}\|\E{Q_i(x_i,y_i)}-x_i\|_2^2
\\
&\leq \frac{1}{n^2} \cdot \sum_{i \in [n]} \E{\norm{Q_i(x_i,y_i)-x_i}_2^2}+ \frac{(n-1)}{n^2} \cdot \sum_{i}\|\E{Q_i(x_i,y_i)}-x_i\|_2^2,
}
where the second identity uses the independence of $Q_i(x_i,y_i)$ for different $i$
and the final step uses Jensen's inequality.
The result follows by bound each term using 
the fact that $\mathbf{x}$ and $\mathbf{y}$ satisfy (2)
and the 
definitions of $\alpha(Q_i,\Delta_i)$ and $\beta(Q_i,\Delta_i)$, for $i\in[n]$.
\qed

\subsection{Proof of Lemma \ref{t:MQ}}
As mentioned in~\eqref{e:tildez_close},
 the integer $\tilde{z}$ found in Alg.~\ref{a:E_MCQ}
satisfies 
$\E{\tilde{z}\eps}=x$ and $|x-\tilde{z}\eps|< \eps$.
Therefore, 
it suffices to
show that the output of the quantizer satisfies $Q_{\tt M}(x,y)=\tilde{z}\eps$.

To see that $Q_{\tt M}(x,y)=\tilde{z}\eps$, 
denote 
the lattice used in decoding Alg.~\ref{a:D_MCQ}
as $\Z_{w,\eps}:=\{(zk+w )\cdot \eps: z \in \Z\}$. 
The decoding algorithm finds the point in $\Z_{w,\eps}$
that is closest to $y$. Note that
$w=\tilde{z}\mod k$, whereby $\tilde{z}\eps$ is a point in this lattice.
Further, 
for any other point $\lambda\neq \tilde{z}\eps$ in the lattice,  we must have 
\[
|\lambda-\tilde{z}\eps|\geq k\eps,
\]
and so, by triangular inequality, 
that 
\[
|\lambda -y|\geq |\lambda-\tilde{z}\eps|-|\tilde{z}\eps- y| \geq k\eps 
-|\tilde{z}\eps- y|.
\]
Thus, $\tilde{z}\eps$ is closer to $y$ than $\lambda$
if 
\begin{align}
k\eps> 2|\tilde{z}\eps- y|.
\label{e:closest}
\end{align}
Next, by using~\eqref{e:tildez_close} once again, we have
\[
|\tilde{z}\eps- y|\leq |\tilde{z}\eps- x| +|x- y|  < \eps+\Delta^\prime,
\]
which by condition~\eqref{e:parameter_condition} in the lemma 
implies that~\eqref{e:closest} holds. 
It follows that $|\lambda -y|>|\tilde{z}\eps- y|$ for every 
$\lambda\in \Z_{w,\eps}$, which shows that  
$Q_{\tt M}(x,y)=\tilde{z}\eps$ and completes the proof.
\qed
\subsection{Proof of Lemma  \ref{t:RMQ}}
Recall from Remark~\ref{r:subg} that
for the random matrix $R$ given in~\eqref{e:R}, for every vector
$z\in \R^d$, the random variables $Rz(i)$, $i\in [d]$, are sub-Gaussian with variance 
parameter $\|z\|_2^2/d$. Furthermore, we need the following bound
for ``truncated moments'' of sub-Gaussian random variables.
\begin{lem}\label{l:variance_tail}
For a sub-Gaussian random $Z$ with variance factor $\sigma^2$
and every $t\geq 0$, we have 
\[
\E{Z^2\indic{\{|Z|>t\}}}\leq 
2(2\sigma^2+t^2)e^{-t^2/2\sigma^2}.
\]
\end{lem}
\begin{proof}
Note that for any nonnegative random variable $U$, 
it can be verified that
\[
\E{U\indic{\{U>x\}}}= xP(U>x) + \int_{x}^\infty P(U>u)\,du.
\]
Upon substituting $U=Z^2$ and $x=t^2$, along with
the fact that $Z$ is sub-Gaussian with variance parameter $\sigma^2$,
we get 
\begin{align*}
    \E{Z^2\indic{\{Z^2>t^2\}}}&= t^2P(Z^2>t^2) + 
\int_{t^2}^\infty P(Z^2>u)\,du
\\
&\leq 2t^2e^{-t^2/2\sigma^2}+ 
2\int_{t^2}^\infty e^{-u/2\sigma^2}\,du
\\
&\leq 2(t^2+2\sigma^2)e^{-t^2/2\sigma^2},
\end{align*}
which completes the proof.
\end{proof}

We now handle the MSE $\alpha(Q)$ and bias $\beta(Q)$ separately below.
 \paragraph{Bound for MSE $\alpha(Q)$:}
 Denote by 
 $Q_{{\tt M}, R}(x, y)$ the final quantized value
 of the quantizer RMQ. For convenience, we abbreviate 
 \[
 \hat{x}_R :=
 R\, Q_{{\tt M}, R}(x, y). 
 \]
 Observe that $\hat{x}_R =\sum_{i \in [d]}
 Q_{\tt M}(Rx(i), Ry(i))e_i$, where   $ Q_{\tt M}$ is the MQ of Alg.~\ref{a:E_MCQ} and~\ref{a:D_MCQ} with parameters $k\geq$ and $\Delta^{\prime}$ set as in the statement of the lemma.  
 Since $R$ is a unitary transform, we have
\begin{align}
\E{\norm{Q_{{\tt M},R}(x,y)-x}_2^2}
&=\E{\norm{\hat{x}_{R}-Rx}_2^2}
\nonumber
\\
&=\sum_{i=1}^{d} \E{(\hat{x}_R(i) -Rx(i))^2}
\nonumber
\\
&= \sum_{i=1}^{d} \E{(\hat{x}_R(i)-Rx(i))^2 \indic{\{|R\left(x-y\right)(i)| \leq \Delta^{\prime}\}}}
\nonumber
\\
&\hspace{2cm}+ \sum_{i=1}^{d} \E{(\hat{x}_R(i)-Rx(i))^2 \indic{\{|R\left(x-y\right)(i)| \geq \Delta^{\prime}\}}}
\label{e:error_split}
\end{align}
We consider each error term on the right-side above separately. 
We can view the first term as the error corresponding to
MQ, when the input lies in its ``acceptance range.'' 
Specifically, under the event $\{|R\left(x-y\right)(i)| \leq
   \Delta^{\prime}$\}, we get
   by Lemma~\ref{t:MQ}
   that
   \[
   |\hat{x}_R(i)
   -Rx(i)| \leq \eps=
   \frac{2\Delta^{\prime}}{k-2}, \quad \text{{almost surely}},
   \]
   whereby 
  \begin{align}
  \sum_{i=1}^{d} \E{(\hat{x}_R(i)-Rx(i))^2 \indic{|R\left(x-y\right)(i)| \leq \Delta^{\prime}}}\leq d\,\eps^2.
  \label{e:error_term1}
  \end{align}
The second term on the right-side of~\eqref{e:error_split} corresponds
to the error due to ``overflow'' and is handled using concentration
bounds for the rotated vectors. Specifically, 
 we get
\begin{align}
\lefteqn{\sum_{i=1}^{d} \E{(\hat{x}_R(i)-Rx(i))^2 \indic{\{|R\left(x-y\right)(i)| \geq \Delta^{\prime}\}}}}
\nonumber
\\
&\leq 
2\sum_{i=1}^{d} \left[\E{(\hat{x}_R(i)-Ry(i))^2 \indic{\{|R\left(x-y\right)(i)| \geq \Delta^{\prime}\}}}
+\E{(Rx(i)-Ry(i))^2 \indic{\{|R\left(x-y\right)(i)| \geq \Delta^{\prime}\}}}
\right]
\nonumber 
\\
&\leq 2k^2\eps^2
\sum_{i=1}^{d}P(|R\left(x-y\right)(i)| \geq \Delta^{\prime})
+
2\sum_{i=1}^{d} 
\E{(Rx(i)-Ry(i))^2 \indic{\{|R\left(x-y\right)(i)| \geq \Delta^{\prime}\}}}
\nonumber
\\
&\leq 4dk^2\eps^2e^{-{d{\Delta^\prime}^2}/{2\Delta^2}}
+
2\sum_{i=1}^{d} 
\E{(Rx(i)-Ry(i))^2 \indic{\{|R\left(x-y\right)(i)| \geq \Delta^{\prime}\}}}
\nonumber
\\
&\leq 4dk^2\eps^2 e^{-{d{\Delta^\prime}^2}/{2\Delta^2}}+
4(2\Delta^2+d\Delta^{\prime 2})e^{-\frac{d{\Delta^\prime}^2}{2\Delta^2}},
\label{e:error_term2}
\end{align}
where the second inequality follows upon noting that from the description decoder of MQ in
Alg.~\ref{a:D_MCQ} that 
$|\hat{x}_R(i)-Ry(i)|\leq \eps k$ almost surely for
each $i\in [d]$; the third inequality uses
the fact that $R(x-y)(i)$ is sub-Gaussian with variance parameter
$\|x-y\|_2^2/d\leq \Delta^2/d$; and fourth inequality
is by Lemma~\ref{l:variance_tail}.


Upon combining~\eqref{e:error_split},~\eqref{e:error_term1}, and~\eqref{e:error_term2}, and substituting
$\eps=2\Delta^{\prime}/(k-2)$ and 
${\Delta^\prime}^2=6(\Delta^2/d) \log \Delta/\delta$, 
we obtain
\begin{align}
\label{e:break}
    \E{\norm{Q_{{\tt M},R}(x,y)-x}_2^2}
    &\leq 
    d\,\eps^2+ 4dk^2\eps^2 e^{-\frac{d{\Delta^\prime}^2}{2\Delta^2}}+
4(2\Delta^2+d\Delta^{\prime 2})e^{-\frac{d{\Delta^\prime}^2}{2\Delta^2}}
\\
\nonumber
&= 24\, \frac{\Delta^2}{(k-2)^2}\ln \frac \Delta \delta 
+96\delta^2 \left(\frac{k}{k-2}\right)^2\cdot\frac{\ln (\Delta/\delta)}{(\Delta/\delta)}
+ 8\delta^2 \cdot \frac{1+3\ln (\Delta/\delta)}{(\Delta/\delta)}
\\
\nonumber
&\leq 24\,\frac{\Delta^2}{(k-2)^2}\ln \frac \Delta \delta + \left(\frac{96}{e} \left(\frac{k}{k-2}\right)^2+\frac{24}{e^{2/3}}\right)\cdot \delta^2,
\end{align}
where we used $(1+3\ln u)/u \leq 3/e^{2/3}$ 
and $(\ln u)/u\leq 1/e$
for every $u>0$. We conclude by noting that
for $k\geq 4$, 
\[
\left(\frac{96}{e}\left(\frac{k}{k-2}\right)^2+\frac{24}{e^{2/3}}\right)\leq 154.
\]

 \paragraph{Bias $\beta(Q)$:}
 The calculation for the bias is similar to that we 
 used to bound the second term on the right-side of
 \eqref{e:error_split}. Using the notation 
 $\hat{x}_R$ introduced above, 
 we have 
\eq{
\lefteqn{\norm{\E{Q_{{\tt M}, R}}-x}_2}
\\
&=\norm{\E{R^{-1}\left(\hat{x}_R-Rx\right)}}_2
\\
&= \norm{R\E{R^{-1}\left(\hat{x}_R-Rx\right)}}_2 \\ &=\norm{\E{RR^{-1}\left(\hat{x}_R-Rx\right)}}_2 \\ &=\norm{\E{\hat{x}_R-Rx}}_2,
}
where the second identity holds since $R$ is a unitary matrix.

Further, since $Q_{{\tt M}}(x,y)$ is an unbiased estimate
of $x$ when $|x-y|\leq \Delta^\prime$ (see Lemma~\ref{t:MQ}), 
  by~\eqref{e:error_term1} and~\eqref{e:error_term2} we obtain
\eq{
\norm{\E{\hat{x}_R-Rx}}_2^2 &\leq \sum_{i=1}^{d} \E{\left(\hat{x}_R(i)-Rx(i)\right)\indic{|R\left(x-y\right)_i| \geq \Delta^{\prime})} }^2\\
&\leq  \sum_{i=1}^{d}\E{\left(\hat{x}_R(i)-Rx(i)\right)^2\indic{|R\left(x-y\right)(i)| \geq \Delta^{\prime})} }\\
&\leq 154\, \delta^2,
}
which completes the proof.
\qed

\subsection{Proof of Lemma  \ref{t:RCS_RAQ_alpha_beta}}
 \paragraph{Mean Square Error $\alpha(Q_{S,R})$:}

From the description of Algorithms \ref{a:E_RCS_RMQ} and \ref{a:D_RCS_RMQ}, we know that the quantized output of subsampled RMQ $Q_{\tt WZ}$ for an input $x$ is
\eq{
&Q_{\tt WZ}(x) =   R^{-1}\hat{x}_R \text{, where}\\
&\hat{x}_R=\frac{1}{\mu}\sum_{i\in
  [d]}\left( Q_{\tt M}( Rx(i), Ry(i)) -Ry(i) \right) \indic{\{i \in S\}}\, e_i +Ry,}
  and $Q_{\tt M}( Rx(i), Ry(i)) $ denotes the quantized output of the modulo quantizer for an input $Rx(i)$ and side-information $Ry(i)$.
  Use the shorthand $Q(Rx(i))$ for $Q_{\tt M}( Rx(i), Ry(i))$, we have
  \begin{align} \nonumber
&\E{ \norm{Q_{\tt WZ}(x) -x}_2^2}\\ \nonumber &= \sum_{i \in [d]} \E{\left( \frac{1}{\mu}\left( Q(Rx(i)) -Ry(i) \right) \indic{\{i \in S\}}- (Rx(i)-Ry(i)) \right)^2 }\\ \nonumber
&\leq 2\sum_{i \in [d]} \E{ \frac{1}{\mu^2}\left( Q(Rx(i)) -Rx(i)\right)^2 \indic{\{i \in S\}}}
\\ \nonumber
&\hspace{2cm}+2\sum_{i \in [d]}\E{\left(\frac{1}{\mu}\left(Rx(i) -Ry(i)\right)\indic{\{i \in S\}}   - (Rx(i)-Ry(i)) \right)^2 } \\
&= \sum_{i \in [d]}\frac{2}{\mu} \E{\left( Q(Rx(i)) -Rx(i) \right)^2  }
+ 2\sum_{i \in [d]}
\E{\left(Rx(i) -Ry(i)\right)^2}\cdot
\E{\left(\frac{1}{\mu}\indic{\{i \in S\}}   - 1 \right)^2 }  \nonumber
\\ 
&= \sum_{i \in [d]}\frac{2}{\mu} \E{\left( Q(Rx(i)) -Rx(i) \right)^2  } + 2\sum_{i \in [d]}
\E{\left(Rx(i) -Ry(i)\right)^2}\cdot \frac{1-\mu}{\mu}  \label{e:intermed_subsampling}\\ \nonumber
&\leq \frac{2\alpha(Q_{{\tt M}, R})}{\mu}+ \frac{2\Delta^2}{\mu},
  \end{align}
  where we used the inequality: $(a+b)^2\leq 2(a^2+b^2)$, the independence of $S$ and $R$ in the second identity
  and used the fact that $R$ is unitary in the final step.

  \paragraph{ Bias $\beta(Q_{S, R})$:} 
 This follows upon noting that the conditional expectation (over $S$) of the output of
 subsampled RMQ given $R$ 
 is the vector $R^{-1} \sum_{i \in [d]}Q_{\tt M}(Rx(i), Ry(i))e_i$, which, in turn, is equivalent in distribution to the output of RMQ.
\qed

\subsection{Proof of Theorem \ref{t:lb_k}}
We denote $\Delta_{min} =  \min_{i \in [d]}\Delta_{i}$
and set $y_i$s to be $0$.
Let $x_1,...,x_n$ be an $iid$ sequence with common
distribution 
such that for all $j \in [d]$ we have 
\[
x_1(j) =\begin{cases}  \frac{\Delta_{min}}{\sqrt{d}}
\quad \text{w.p.} \frac{1+\alpha(j)\delta}{2}\\
- \frac{\Delta_{min}}{\sqrt{d}}
\quad \text{w.p.} \frac{1-\alpha(j)\delta}{2},
\end{cases}
\]
where $\alpha \in \{-1, 1\}^d$ is generated uniformly at random.
We have the following Lemma for such $x_i$s, which 
provides a 
lower bound for the MSE of any estimator of the mean of the distribution of $x_i$s.

\begin{lem}\label{l:lowbound}
{For $x_1, ..., x_n$ generated as above and}
 any estimator $\hat{\bar{x}}$ of the mean formed using only $r$-bit
quantized version of $x_i$s, we have\footnote{Note that the side information $y_i$s are all set to $0$.}
\[\E{ \left\|\hat{\bar{x}}- \frac{\delta \Delta_{min}}{\sqrt{d}} \alpha\right\|_2^2} \geq c^\prime \cdot   \frac{d\Delta_{min}^2}{nr},\]
where $c^\prime <1$ is a universal constant. 
\end{lem}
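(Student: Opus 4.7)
The plan is to follow the Assouad-plus-strong-data-processing-inequality template that is standard in distributed statistical estimation (\emph{cf.}~Suresh et al., 2017 and Duchi et al., 2014). The construction is built so that recovering the mean of the product Bernoulli distribution is equivalent to identifying a hidden binary string $\alpha$, and the communication constraint limits how much information each client can reveal about $\alpha$.

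First I would set up Assouad. Write $\theta_\alpha := (\delta\Delta_{\min}/\sqrt{d})\alpha$, which is the mean of $x_1$. For any estimator $\hat{\bar x}$, define the coordinate-wise test $\hat\alpha_j := \mathrm{sign}(\hat{\bar x}(j))$; then a standard computation gives
\[
\|\hat{\bar x}-\theta_\alpha\|_2^2 \geq \frac{\delta^2\Delta_{\min}^2}{d}\sum_{j=1}^d \mathbbm{1}\{\hat\alpha_j\neq\alpha_j\}.
\]
Taking expectations over the uniform prior on $\alpha$ and over the protocol randomness, it suffices to show that a constant fraction of the $\alpha_j$ are misidentified in expectation.

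Next I would reduce the per-coordinate testing error to a mutual information bound. Let $P^{+j}$ and $P^{-j}$ denote the conditional laws of the transcript $C^n=(C_1,\dots,C_n)$ given $\alpha_j=+1$ and $\alpha_j=-1$ respectively (averaging over the remaining coordinates of $\alpha$). By Le~Cam's two-point lemma and Pinsker's inequality,
\[
\Pr[\hat\alpha_j\neq\alpha_j] \;\geq\; \tfrac{1}{2}\bigl(1 - d_{TV}(P^{+j},P^{-j})\bigr),\qquad d_{TV}(P^{+j},P^{-j})^2 \leq c\, I(\alpha_j;C^n).
\]
Thus the lemma reduces to bounding $\sum_{j=1}^d I(\alpha_j;C^n)$ from above.

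Third, I would invoke the communication-constrained strong data processing inequality of Duchi-Jordan-Wainwright (2014), exactly as in Suresh et al.\ (2017). Conditional on $\alpha_j$, only the $j$-th coordinate of each $x_i$ changes its distribution; its law is Bernoulli on $\{\pm\Delta_{\min}/\sqrt{d}\}$ with bias differing by $\delta$, whose KL is $O(\delta^2)$. Because every $C_i$ carries at most $r$ bits and the quantizers across clients use independent randomness, the SDPI yields
\[
\sum_{j=1}^d I(\alpha_j;C_i) \;\leq\; c\,r\,\delta^2 \qquad\text{(for $r\leq c_0 d$),}
\]
and summing over $i\in[n]$ gives $\sum_{j=1}^d I(\alpha_j;C^n)\leq c\,n\,r\,\delta^2$. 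In particular the average of $I(\alpha_j;C^n)$ over $j\in[d]$ is at most $c\,n\,r\,\delta^2/d$.

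Finally I would optimize $\delta$. Choosing $\delta^2$ to be a sufficiently small constant times $d/(nr)$ forces the average mutual information below, say, $1/8$, so by the previous step a constant fraction of coordinates are misidentified on average. Plugging back into Assouad's bound yields
\[
\mathbb{E}\bigl\|\hat{\bar x}-\theta_\alpha\bigr\|_2^2 \;\geq\; \frac{\delta^2\Delta_{\min}^2}{d}\cdot\Omega(d) \;=\; \Omega\!\left(\frac{d\,\Delta_{\min}^2}{nr}\right),
\]
which is the desired lower bound. The principal obstacle is invoking the SDPI in the right form: the communication-budget version (linear in $r$, not $2^r$) and additivity across clients are exactly what is needed to absorb the $r$ and $n$ dependencies. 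The choice $y_i=0$ and the $1/\sqrt{d}$ coordinate scaling of $x_i$ are tailored so that $\|x_i-y_i\|_2=\Delta_{\min}\leq \Delta_i$ (so \eqref{eq:delta_cond} holds) while the per-coordinate Bernoulli KL that enters the SDPI is $O(\delta^2)$, which after optimization of $\delta$ produces the $d/(nr)$ factor.
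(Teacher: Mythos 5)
Your proof is correct and follows exactly the Assouad-plus-strong-data-processing-inequality template that underlies the two results the paper cites for this lemma (Proposition~2 of Duchi--Jordan--Wainwright and Theorem~11 of Acharya--Canonne--Tyagi); the paper simply outsources the argument to those references rather than unrolling it. You also correctly surface the implicit choice $\delta^2 \asymp d/(nr)$ (capped below $1$), which the lemma statement leaves unstated but which is exactly what is needed both here and in the subsequent proof of the lower-bound theorem.
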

\noindent Proof of Lemma \ref{l:lowbound} follows from either \cite[Proposition 2]{duchi2014optimality} or \cite[Theorem 11]{acharya2020general}.

The proof of Theorem~\ref{t:lb_k} is completed by using this claim. Specifically, using $2a^2+2b^2 \geq (a+b)^2$, we have
\eq{
2\E{ \norm{\hat{\bar{x}}- \bar{x} }_2^2} + 2\E{\norm{ \bar{x}- \frac{\delta \Delta_{min}}{\sqrt{d}} \alpha}_2^2}  \geq \E{ \norm{\hat{\bar{x}}- \frac{\delta \Delta_{min}}{\sqrt{d}} \alpha}_2^2},
}
which, along with the observation that
\[\E{\norm{ \bar{x}- \frac{\delta \Delta_{min}}{\sqrt{d}} \alpha}_2^2} \leq \frac{\Delta_{min}^2}{n},
\]
gives
\eq{
\E{ \norm{\hat{\bar{x}}- \bar{x} }_2^2} \geq & \frac{c^\prime d\Delta^2_{min}}{2nr} -\frac{\Delta^2_{min}}{n}\\
\geq & \frac{c^\prime \Delta^2_{min}d}{4nr},
}
when $(d/r)\geq 4/c^\prime$. The proof is completed by setting
$c=c^\prime/4$.
\qed

\begin{rem}
Since the lower bound in \cite{acharya2020general} holds for sequentially interactive protocols, 
if we allow interactive protocols for mean estimation where client $i$ gets to see the messages transmitted by the clients
$j$ in $[i-1]$, 
and can design its quantizers based on these previous messages, even then the lower bound above will hold. 
\end{rem}

\subsection{Proof of Lemma  \ref{t:DAQ}}

We will prove a general result which will not only prove Lemma \ref{t:DAQ} but will also be useful in the proof of Lemma \ref{t:RDAQ}. Consider $x$ and $y$ in $\R^d$ such that each coordinate of both $x$ and $y$ lies in $[-M, M]$. Also, consider the following generalization of DAQ:
\[
Q_{\tt D}(x, y)  =  \sum_{i=1}^{d}2 M\left( \indic{\{U(i) \leq x(i)\}} - \indic{\{U(i) \leq y(i)\}}\right)e_i  +y, 
\]
where $\{U_i\}_{i \in [d]}$ are $iid$ uniform random variables in $[-M,M]$. We will show that 
\begin{align}\label{e:intermediate}
\E{Q_{\tt D}(x, y)}=x \quad \text{and} \quad  \E{\norm{Q_{\tt D}(x, y)-x}_2^2} \leq 2M \norm{x-y}_1,
\end{align}
which upon setting $M=1$ proves Lemma \ref{t:DAQ}.
 
Towards proving $\eqref{e:intermediate}$, note that
from the estimate formed by $Q_{\tt D}$, it is easy to see that $\E{Q_{\tt D}(x, y)}=x$.
The MSE can be bounded as follows:
\eq{\E{\norm{Q_{\tt D}(x, y)-x}_2^2}&=\sum_{i=1}^{d} \E{   (2M\left(
    \indic{\{U_i \leq x(i)\}}  -\indic{\{U_i \leq y(i)\}} \right)- \left(x(i) -y(i)\right))^2  }  
\\
&= \sum_{i=1}^{d}  4M^2 \frac{|x(i)-y(i)|}{2M}-\norm{x-y}_2^2 
\\
&=2 M\norm{x-y}_1 - \norm{x-y}_2^2,
}
where we used the observations that $ 2M \left( \indic{\{U_i \leq x(i)\}}  -\indic{\{U_i \leq y(i)\}} \right)$ is an unbiased estimate of $\left(x(i) -y(i)\right)$ and that  $\left( \indic{\{U_i \leq x(i)\}}  -\indic{\{U_i \leq y(i)\}} \right)^2$ equals one if and only if exactly one of the indicators is one, which in turn happens with probability $\frac{|x(i)-y(i)|}{2M}$.\qed.

\subsection{Proof of Lemma  \ref{t:RDAQ}}
\paragraph{Worst-case bias $\beta(Q_{{\tt D}, R} \Delta)$:}
Since the final interval $[-M_{h-1}, M_{h-1}]$ contains $[-1, 1]$, we can see that $\E{Q_{{\tt D}, R} (x, y)} = x$.

~\paragraph{Worst-case MSE $\alpha(Q_{{\tt D}, R} ; \Delta)$:}
We denote by  $B^{x}_{ij}$ and $B^{y}_{ij}$ the bits
\eq{
B^{x}_{ij} = \indic{\{U(i,j) \leq Rx(i)\}} ~\text{ and}~
B^{y}_{ij} = \indic{\{U(i, j) \leq Ry(i)\}}.
}
Then, the final quantized value of the quantizer RDAQ  can be expressed as
$Q_{{\tt D}, R}(X) =  R^{-1} \hat{x}_R$ 
where, with 
$z^*(i)$  denoting the smallest $M_j$ such that
the interval $[-M_j, M_j]$ contains $Rx(i)$ and $Ry(i)$ and $[h]_0=\{0, \ldots, h-1 \}$,
\begin{align*}
\nonumber \hat{x}_R :=\sum_{i \in \{1, \ldots, d\}} \left( \sum_{j \in [h]_0} 2M_j \cdot \left( B^{x}_{ij}  - B^{y}_{ij} \right)  +Ry(i) \right)\indic {\{z^*(i)=j\}}  e_i.
\end{align*}

Since $R$ is a unitary transform, 
we get

\eq{
\E{\norm{Q_{{\tt D}, R}(x)-x}_2^2}
&=
\E{\norm{RQ_{{\tt D}, R}(x)-Rx}_2^2}
\\
&=\E{\norm{\hat{x}_R-Rx}_2^2} 
\\
&=\sum_{i \in [d]} \E{ (\hat{x}_R(i) -Rx(i))^2  }\\
&=\sum_{i \in [d]} \E{ \left(  \sum_{j \in [h]_0}(2M_j \cdot \left( B^{x}_{ij}  - B^{y}_{ij} \right)  +Ry(i) -Rx(i))\indic {\{z^*(i)=j\}}\right)^2  }\\
&=\sum_{i \in [d]}\sum_{j \in [h]_0}   \E{ \left( 2M_j \left( B^{x}_{ij}  - B^{y}_{ij} \right) +Ry(i) -Rx(i) \right)^2  \indic {\{z^*(i)=j\}} ,}}
where the last identity uses $\indic{\{z^*(i)=j_1\}} \indic{\{z^*(i)=j_2\}}=0$ for all $j_1 \neq j_2, $ to cancel the cross-terms in the expansion of  $(\hat{x}_R(i) -Rx(i))^2$. Conditioning on $R$
and using the independence 
of $\indic{\{z^*(i)=j\}}$ from the randomness used in MQ, we get

\begin{align}\E{\norm{Q_{{\tt D}, R}(x)-x}_2^2}
\nonumber
&=\sum_{i \in [d]}\sum_{j \in [h]_0}   \E{ \E{\left( 2M_j \left( B^{x}_{ij}  - B^{y}_{ij} \right) +Ry(i) -Rx(i) \right)^2\mid R } \indic {\{z^*(i)=j\}} }\\
\nonumber
&\leq \sum_{i \in [d]}\sum_{j \in [h]_0}   \E{ 2M_j |Rx(i) -Ry(i)| \indic{\{z^*(i)=j\}} },\\
\nonumber
&\leq \sum_{i \in [d]} \E{ 2M_0 |Rx(i) -Ry(i)|\indic{\{z^*(i)=0\}} }\\& \nonumber ~~~+ \sum_{i \in [d]}\sum_{j \in [h-1]}   \E{ 2M_j |Rx(i) -Ry(i)|\indic{\{z^*(i)=j\}} },\\
\nonumber
&\leq \sum_{i \in [d]} \E{ 2M_0 |Rx(i) -Ry(i)| }\\& ~~~+ \sum_{i \in [d]}\sum_{j \in [h-1]}   \E{ 2M_j |Rx(i) -Ry(i)|\indic{\{z^*(i)=j\}} },
\label{e:overallterm}
\end{align}

where the  first inequality follows from \eqref{e:intermediate} in the proof of Lemma \ref{t:DAQ}.

Next, noting that 
\[
\indic{\{z^*(i)=j\}} \leq  \indic{\{|RX(i)| \geq
  M_{j-1}\}}+ \indic{\{|RY(i)| \geq M_{j-1}\}} \quad \text{ almost surely}, 
  \]
  an
application of the Cauchy-Schwarz inequality yields
\begin{align}
\nonumber &\E{ 2M_j |Rx(i) -Ry(i)|\indic{\{z^*(i)=j\}} }\\ 
\nonumber
&\leq 2M_j \E{(Rx(i)-Ry(i))^2}^{1/2} \E{(\indic{\{|RX(i)| \geq
  M_{j-1}\}}+ \indic{\{|RY(i)| \geq M_{j-1}\}})^2}^{1/2}\\
 \nonumber &\leq 2M_j \E{(Rx(i)-Ry(i))^2}^{1/2} \left(2P(|Rx(i) |\geq M_{j-1} )+ 2P(|Ry(i)| \geq M_{j-1} )\right)^{1/2}\\
  &\leq 2M_j \E{(Rx(i)-Ry(i))^2}^{1/2} \left(8 e^{\frac{-d M^2_{j-1}}{2}}\right)^{1/2},
  \label{e:term_2}
  \end{align}
where the second inequality uses $(a+b)^2 \leq 2a^2+2b^2$ and the third uses subgaussianity of $Rx(i)$ and $Ry(i)$.

Substituting the upper bound in \eqref{e:term_2} for the second term  in the RHS of \eqref{e:overallterm} and using $\E{X} \leq \E{X^2}^{1/2}$ for the first term, we get
\eq{
\E{\norm{Q_{{\tt D}, R}(x)-x}_2^2}
&\leq \sum_{i \in [d]}  \E{  |Rx(i) -Ry(i)|^2}^{1/2} \left(2M_0+\sum_{j \in [h-1]} 2M_j \cdot \left(8e^{-\frac{dM_{j-1}^2}{2}}\right)^{1/2}\right)\\
&\leq  \sqrt{d \cdot \E{ \norm{Rx -Ry}_2^2}} \left(2M_0+\sum_{j \in [h-1]} 2M_j \cdot \left(8e^{-\frac{dM_{j-1}^2}{2}}\right)^{1/2}\right)\\
&= \sqrt{d \cdot \norm{x -y}_2^2} \left(2M_0+\sum_{j \in [h-1]} 2M_j \cdot \left(8e^{-\frac{dM_{j-1}^2}{2}}\right)^{1/2}\right)\\
&=\sqrt{d \cdot \norm{x -y}_2^2} \left(2\sqrt{\frac{6}{d}}+\sum_{j \in [h-1]} 2 \sqrt{\frac{6e^{*j}}{d}} \cdot \left(8e^{-1.5e^{*(j-1)}}\right)\right)\\
&= 8\sqrt{3} \cdot \sqrt{ \norm{x -y}_2^2} \left( 1 + \sum_{j\in[h-1]} e^{-0.5 e^*(j-1)} \right)\\
& \leq 16 \sqrt{3}\cdot \sqrt{ \norm{x -y}_2^2},
\\}
where the second inequality uses the fact that $\sum_{i} \norm{a}_1 \leq \sqrt{d}\norm{a}_2$, the first and second identities follow from the fact that $R$ is unitary transform and substituting for $M_i$s, the final inequality follows from the bound of $1$ for $\sum_{j=1}^{\infty}e^{-0.5 e^*(j-1)}$, which, in turn, can seen as follows
\eq{
e^{-0.5 e^*(j-1)} &={e^{-0.5}}+  {e^{-0.5e}}+ {e^{-0.5e^e}}
+\sum_{j=3}^{\infty}e^{-0.5{e^{*(j)}}}
\\
&\leq {e^{-0.5}}+  {e^{-0.5e}}+ {e^{-0.5e^e}}
+\sum_{j=3}^{\infty}e^{-0.5{je^e}}
\\
&\leq {e^{-0.5}}+  {e^{-0.5e}}+ {e^{-0.5e^e}}+\frac1{e^{e^e}-1}
\\
&\leq 1.
}
\qed

\subsection{Proof of Lemma \ref{t:sRDAQ}}\label{s:proof_RDAQ}

~\paragraph{Worst-case bias $\beta(Q_{{\tt WZ}, u}; \Delta)$:}
It is straightforward to see that $\E{Q_{{\tt WZ}, u}(x)} = x$.

~\paragraph{Worst-case MSE $\alpha(Q_{{\tt WZ}, u}; \Delta)$:}

We denote by  $B^{x}_{ij}$ and $B^{y}_{ij}$ the bits
\eq{
B^{x}_{ij} = \indic{\{U(i,j) \leq Rx(i)\}} ~\text{ and}~
B^{y}_{ij} = \indic{\{U(i, j) \leq Ry(i)\}}.
}
Then,
the quantized output can be stated as follows:
{noting that $Q_{{\tt WZ}, u}(x) =  R^{-1} \hat{x}_R$
where,  with 
$z^*(i)$  denoting the smallest $M_j$ such that
the interval $[-M_j, M_j]$ contains $Rx(i)$ and $Ry(i)$,}
\[
\hat{x}_R :=\left(\sum_{i \in \{1, \ldots, d\}}\sum_{j \in \{0, \ldots, h-1\}} 2M_j \cdot \left( B^{x}_{ij}  - B^{y}_{ij} \right) \indic {\{z^*(i)=j\}} \indic { \{ i \in S\}}\cdot e_i +Ry\right),
\]

Since $R$ is a unitary transform, 
the mean square error between
$Q_{{\tt WZ}, u}(x)$ and $x$ 
can be bounded as
in the proof of Lemma~\ref{t:RDAQ} as follows:
\eq{
\E{\norm{Q_{{\tt WZ}, u}(x)-x}_2^2}
&=\E{\norm{\hat{x}_R-Rx}_2^2}
\\
&=\E{\norm{\hat{x}_R-Rx}_2^2} 
\\
&=\sum_{i \in [d]} \E{ \hat{x}_R(i) -Rx(i))^2  }\\
&=\sum_{i \in [d]}\sum_{j \in [h]}   \E{ \left( 2M_j \left( B^{x}_{ij}  - B^{y}_{ij} \right)\indic { \{ i \in S\}} +Ry(i) -Rx(i) \right)^2  \indic {\{z^*(i)=j\}} }\\
&=\sum_{i \in [d]}\sum_{j \in [h]}   \E{ \E{\left( 2M_j \left( B^{x}_{ij}  - B^{y}_{ij} \right)\indic { \{ i \in S\}} +Ry(i) -Rx(i) \right)^2\mid R } \indic {\{z^*(i)=j\}} }\\
&\leq \sum_{i \in [d]}\sum_{j \in [h]}   \E{ \frac{2M_j}{\mu} \cdot |Rx(i) -Ry(i)|\cdot \indic{\{z^*(i)=j\}} },
}
where the inequality follows from similar calculations in the proof of Lemma \ref{t:DAQ}.
The rest of the analysis proceeds as that in the proof of Lemma \ref{t:RDAQ}.
\qed

\subsection{ Proof of Theorem \ref{t:lb_smooth_coom_constraints}  }\label{p:lb}
 \newest{
 Note that affine functions are $0$-smooth and admitted in the class of $L$ smooth functions. We use affine functions as difficult functions  and 
 follow the general recipe of \cite{ACMT21informationconstrained}, which in turn builds on \cite[Section 4.5]{acharya2021nips} and \cite{acharya2020general},  to show the lower bounds for convex, Lipschitz optimization under  communication constraints.
The difficult functions we construct are the same as in
many existing lower bounds for convex functions such as \cite{agarwal2012information}. We consider the domain $\X=\{x \in \R^d:\norm{x}_\infty\leq D/(2\sqrt{d} \}$, and consider the following class of functions on $\X$: For $v \in \{-1, 1 \}^d$, let
\begin{equation}
  \label{eq:def:gv:convex:p2}
  f_{v}(x) \eqdef \frac{2\sigma\delta}{\sqrt{d}} \sum_{i=1}^{d} \mleft|x(i) - \frac{v(i) D}{2 \sqrt{d}}\mright|, \quad \forall\, x\in \X,
\end{equation}
and $x_v^\ast$ be its minimizer.
Note that the gradient $g_v(x)$ of $f_v$ at $x\in \X$ is equal to $-2\sigma\delta v /\sqrt{d}$, i.e., constant $\forall x$. 
For each $f_v$ in \eqref{eq:def:gv:convex:p2}, consider a sequence of $n$ clients $\clientset$ that output $d$-dimensional gradient vectors
$\{\hat{g}_i(x_t)\}_{i\in [n]}$, each of whose coordinates takes value
$-\sigma/\sqrt{d}$ or $\sigma/\sqrt{d}$ independently with probabilities $(1+2\delta v(i))/2$ and $(1-2\delta v(i))/2$, respectively. The parameter $\delta >0$ is to be chosen later.
Note that the above client construction satisfies the set of assumptions in \eqref{e:asmp_unbiasedness}, \eqref{e:varaince} and \eqref{e:bounded_est}.

Draw ${V \sim \mathtt{Unif}\{-1, 1\}^d}$. With respect to the associated random function $f_{V},$ each client $\tC_i$ chooses a quantizer ${Q_{i,t}}$ to generate output $Q_{i,t}(\hat{g}_i(x_t)).$
Denote by $Q^{nT}{=}(\{Q_{i,1}, \ldots, Q_{i,T}\}_{i\in [n]})$ the vector of quantized outputs observed at the server.
The following lower bound can be established by using results from
\cite[Lemma 3, 4]{ACMT21informationconstrained}:
\begin{equation}\label{e:C_P2}
  \E{f_{V}(\bar{x}_T)-f_{V}(x_V^\ast)} \geq  {\frac{D\sigma\delta}{3}}\Bigg[1 -  \sqrt{
\frac{2}{d}
      \sum_{j=1}^d \mutualinfo{V(j)}{Q^{nT}}}\Bigg].
\end{equation}
It remains to bound the mutual-information term for which one can use the independence across the clients and derive the following data-processing inequality based on the other techniques from \cite{ACMT21informationconstrained}:
\begin{align*}
\sum_{j=1}^d \mutualinfo{V(j)}{Q^{nT}}\leq 29nT\delta^2(d\wedge r),
\end{align*}
where ${\delta\in (0, 1/6)}.$
Combining this with \eqref{e:C_P2} and setting $\delta=\sqrt{d/(232(d\wedge r)nT)},$ we finally get 
\[\E{f_{V}(\bar{x}_T)-f_{V}(x_V^\ast)}\geq \frac{1}{12\sqrt{58}}\frac{D\sigma}{\sqrt{nT}}\sqrt{\frac{d}{d\wedge r}},\]
where we need $T\geq d/(6nr)$ in order to enforce $\delta\leq 1/6.$  
The proof is completed by noting that
  $\cE^\ast(\X, \oO_{\tt sc}, T, \Q_r) \geq  \E{f_{V}(\bar{x}_T)-f_{V}(x_V^\ast)}.$

\subsection{Proof of Lemma \ref{l:general}}
Define $x^\ast=\text{argmin}_{x\in \X}f(x)$.
We have that 
\begin{align}\label{e:sum}
\E{f(x_{t+1})-x^\ast}=\E{f(x_{t+1})-f(x_t)} + \E{f(x_{t})-x^\ast}.
\end{align}
By smoothness, 
\begin{align*}
\E{f(x_{t+1})-f(x_t)|x_t}&\leq \nabla f(x_t)^\top\E{x_{t+1}-x_t|x_t}+\frac{L}{2}\E{\|x_{t+1}-x_t\|^2|x_t}\\
&=-\eta \nabla f(x_t)^\top\E{\M(\op^n_t)|x_t}+\frac{L\eta^2}{2}\E{\|\M(\op^n_t)\|^2|x_t}\\
&\leq -\eta \nabla f(x_t)^\top\E{\M(\op^n_t)|x_t}+\frac{\eta}{2}\E{\|\M(\op^n_t)\|^2|x_t}\\
&=-\frac{\eta}{2}\|\nabla f(x_t)\|^2+\frac{\eta}{2}\E{\|\M(\op^n_t)-\nabla f(x_t)\|^2|x_t},
\end{align*}
which further using the definition of $\alpha^\prime$ in $\eqref{def_a}$ and the law of total expectation imply
\begin{align}\label{e:c1}
\E{f(x_{t+1})-f(x_t)}\leq -\frac{\eta}{2}\E{\|\nabla f(x_t)\|^2}+\frac{\eta}{2}\alpha^{\prime 2}(\M).
\end{align}
By convexity, 
\begin{align*}
\E{f(x_{t})-x^\ast}
&\leq \E{\nabla f(x_t)^\top(x_t-x^\ast)}\\
&=\E{(\nabla f(x_t)-\M(\op^n_t))^\top (x_t-x^\ast)}+\E{\M(\op^n_t)^\top(x_t-x^\ast)}\\
&=\E{(\nabla f(x_t)-\M(\op^n_t))^\top (x_t-x^\ast)}+\frac{1}{2\eta}\E{\eta^2\|\M(\op^n_t)\|^2}\\
& \qquad \qquad +\E{\|x_t-x^\ast\|^2-\|x_{t}-\eta \M(\op^n_t)-x^\ast\|^2}\\
&\leq\E{(\nabla f(x_t)-\M(\op^n_t))^\top (x_t-x^\ast)}+\frac{1}{2\eta}\E{\eta_t^2\|\M(\op^n_t)\|^2}\\
&\qquad \qquad +\E{\|x_t-x^\ast\|^2-\|\Gamma_{\X}(x_{t}-\eta \M(\op^n_t))-x^\ast\|^2}\\
&\leq \beta^\prime(\M)\cdot D+\frac{\eta}{2}\alpha^{\prime 2}(\M)+\frac{\eta}{2}\E{\|\nabla f(x_t)\|^2}+\eta B\cdot \beta^\prime(\M)\\
&\qquad \qquad+\frac{1}{2\eta }\E{\|x_t-x^\ast\|^2-\|x_{t+1}-x^\ast\|^2}, \numberthis \label{e:c2}
\end{align*}
where second inequality is due to a well known property of the projection operator $\Gamma_{\X}$ (see, for instance, Lemma 3.1, \cite{bubeck2015convex}),  third inequality follows from Cauchy-Schwarz inequality and using the definitions in \eqref{def_a} and \eqref{def_b}.
Plugging \eqref{e:c1} and \eqref{e:c2} in \eqref{e:sum}, we have
\begin{align*}
\E{f(x_{t+1})-x^\ast}
&\leq  \beta^\prime(\M)\cdot (D+\eta B)+\eta\cdot \alpha^{\prime 2}(\M)+
\frac{1}{2\eta}\E{\|x_t-x^\ast\|^2-\|x_{t+1}-x^\ast\|^2}.
\end{align*}
Summing from $t=0$ to $T-1$, dividing by $T$,  using 
the assumption that the domain $\X$ has diameter at most $D$,  and setting $\eta$ as provided,
the proof is completed.
This general convergence bound will be used in our upper bound proofs below.
}

\subsection{Proof of Theorem \ref{t:ParSGD}}
From \cite[Theorem 3.7]{mayekar2020ratq}, we use the following result.
\begin{lem}\label{l:subsampled_RATQ}
Let $Q_{\mathtt{RATQ}}$ be the subsampled version of RATQ  using $r \geq 3+ \ceil{\log(1+ \ln^\ast(d/3))}$ bits. Then for $Y$ such that $\norm{Y}_2 \leq B^2$, we have
\begin{align*}
\E{Q_{\mathtt{RATQ}}(Y)\mid Y}&=Y\qquad \text{  and   }\qquad  \E{\norm{Q_{\mathtt{RATQ}}(Y)-Y}^2} \leq \frac{dB^2}{\frac{r}{3+ \ceil{\log(1+ \ln^\ast(d/3)}}) -1}.
\end{align*}
\end{lem}
\noindent Further, for ${t \in [T]}$, we have ${\M(\op^n_t)=\frac{1}{n}\sum\limits_{i\in [n]} Q_{\mathtt{RATQ}}(\hat{g}_i(x_t))}$ as in \eqref{e:Q_ParSGD}. Thus we have,
\eq{ \alpha^{\prime 2}(\M)
&\leq  \E{\left\|\frac{1}{n} \sum_{i \in [n]} Q_{\mathtt{RATQ}}(\hat{g}_i(x_t)) -\frac{1}{n} \sum_{i \in [n]} \hat{g}_i(x_t)\right\|^2}+\E{\left\| \frac{1}{n}\sum_{i \in [n]} \hat{g}_i(x_t)- \nabla f(x_t)\right\|^2}.
}
Since $\M(\op^n_t)$ is  an unbiased estimate,  $\beta^{\prime}(\M)=0$. The proof is completed by bounding the two terms in the right-side above followed by using Lemma \ref{l:general}, which we do as follows.
From Lemma \ref{l:subsampled_RATQ}, it follows that for $t \in [T]$, we have
\begin{align*}
&\E{\left\|\frac{1}{n} \sum_{i \in [n]} Q_{\mathtt{RATQ}}(\hat{g}_i(x_t))-\frac{1}{n} \sum_{i \in [n]} \hat{g}_i(x_t)\right\|^2} \leq \frac{ dB^2}{n\left(\frac{r}{ 3+ \ceil{\log(1+ \ln^\ast(d/3))}} -1\right)}. 
\end{align*}
From \eqref{e:var}, we have
\[\E{\left\| \frac{1}{n}\sum_{i \in [n]} \hat{g}_i(x_t)- \nabla f(x_t)\right\|^2} \leq \frac{\sigma^2}{n}.\]

\subsection{Proof of Theorem \ref{t:conv_WZ_SGD}}\label{s:scheme}

\paragraph{Subgaussian and subexponential norms.}
For our analysis, it will be convenient to recall the definition of subgaussian \footnote{$\norm{\cdot}_{\psi_2}$ is indeed a norm.}  and subexponential norms of a random variable. 
\begin{defn}[{\cite{vershynin2018high}}]\label{d:sg_rv}
A subgaussian norm of a subgaussian random variable $X$, denoted $\norm{X}_{\psi_2},$ is defined as $\norm{X}_{\psi_2} := \inf\{t>0: \E{e^{X^2/t^2}}\leq 2  \}.$ It follows that for a centered subgaussian random variable $X,$ 
$\Pr(|X|)\geq t) \leq 2 e^{-\frac{t^2 }{\norm{X}_{\psi_2}^2}}.$
\end{defn}
\begin{defn}[{\cite[Def. 2.7.5]{vershynin2018high}}]
A subexponential norm of a  subexponential random variable $X$, denoted $\norm{X}_{\psi_1},$ is defined as $\norm{X}_{\psi_1} := \inf\{t>0: \E{e^{\frac{|X|}{t}}}\leq 2  \}.$ It follows that for a centered subexponential  random variable $X,$ $\Pr(|X|)\geq t) \leq 2 e^{-\frac{t }{\norm{X}_{\psi_1}}}.$
\end{defn}

\paragraph{Side information is close to gradient estimates.}
We begin by noting that side-information $Y$\footnote{For convenience, we drop the iteration subscript $t$ in this Section.} is close to the stochastic gradient estimates computed by clients in $\clientset_2$.
Specifically, setting the parameters as
$\log \ell_1= \ceil{\log \frac{2B}{\sigma}+1}$ and  $r_1=r/\ceil{\log \frac{2B}{\sigma}+1} $
for clients in $\clientset_1$, we get the following.
\begin{lem}\label{l:conc_SI_grad}
For all $x \in \R^d$,  $j \in \clientset_2, i \in [d],$ and a universal constant $c_3>0,$ we have
\begin{align*}
 \Pr(|{R}\hat{g}_j(x)(i)-{R}Y(i)|\geq t   ) &\leq   2 e^{- c_3\min \lbrace \frac{t^2}{\sigma^{\prime 2}}, \frac{t\sqrt{d}}{\sigma^{\prime}}\rbrace}{+}2e^{- c_3\frac{t^2d}{ \sigma^{ 2}}},
\end{align*}
where $R$ is a {random Hadamard matrix \eqref{e:R} and for another universal constant $c_4>0$,}
\begin{align}\label{e:sigma_prime}
\sigma^{\prime 2}=\frac{c_48d\sigma^2 \ceil{\log (2B/\sigma+1)}}{nr}.
\end{align}
\end{lem}

\begin{rem}
  In the analysis for RMQ presented in Section \ref{s:RMQ}, the difference between the coordinates of the rotated input and rotated side information had subgaussian tails. However, note that in Lemma \ref{l:conc_SI_grad}, we can only prove a slightly weaker concentration result.
\end{rem}

Towards proving Lemma \ref{l:conc_SI_grad}, we begin by showing
the following result which holds from the subgaussian properties of
uniform quantizer error and 
standard properties of subgaussian random variables.
\begin{lem} \label{l:SGnormSI}
For all $x \in \R^d$ and  $i \in [d]$ we have\[\displaystyle{\hspace{2.5cm}\norm{Y(i) - \nabla f (x)(i)}_{\psi_2}^2\leq \sigma^{\prime 2}.}\]
\end{lem}
 \begin{proof}
 We will prove the theorem for $Y(1)$ since the argument remains the same for all $Y(i)$s. 
 From the description of CUQ, we note that $Q_{\tt u}( \hat{g}_{j}(x)(1))$ satisfies 
 \[\displaystyle{\norm{Q_{\tt u}( \hat{g}_{j}(x)(1)) - \hat{g}_{j}(x)(1)}_{\psi_2}^2 \leq \frac{4c_4B^2}{(\ell_1-1)^2}, \quad \forall j \in S_1},\]
for some universal constant $c_4> 0.$ 
  Also, from \eqref{e:varaince}, we have
 $\displaystyle{\norm{\hat{g}_{j}(x)(1) -\nabla f (x)(1)}_{\psi_2}^2 \leq  c_4\sigma^2}$ for the same constant $c_4$ above.  Further, using the triangle inequality and the fact that $(a+b)^2 \leq 2a^2+2b^2$,  \[\displaystyle{\norm{Q_{\tt u}( \hat{g}_{j}(x)(1)) - \nabla f (x)(i)}_{\psi_2}^2 \leq   \frac{c_48B^2}{(\ell_1-1)^2}+ 2c_4\sigma^2.}\]
 The proof is completed upon noting that the average of $N$ $iid$  zero mean subgaussian random variables $\{X_i\}_{i \in [N]}$ has a subgaussian norm square equal to $\norm{X_1}_{\psi_2}^2/N$  and the fact that we use $N=nr_1/(2d)$ samples to form $Y(1).$
 \end{proof}

\begin{rem}
  \newest{In order to quantize a $d$-dimensional gradient to $r \leq d$ bits, the technique of uniform sampling has been used in recent papers on distributed optimization ($cf.$ \cite{suresh2017distributed}, \cite{mayekar2020ratq}).  However, notice that these works merely required the quantized gradient estimate to be close to the true gradient in mean square sense. In our case, in order to leverage our Wyner-Ziv compression algorithms,  we need side-information to be close to the true gradient in s much stronger sense. Therefore, we refrain from using uniform sampling and instead use the clients to quantize separate, smaller blocks of coordinates.}
\end{rem}

\paragraph{Rotation of side-information is  close to the rotation of true gradient.}
Using standard properties of subgaussian random variables
(see~\cite[Lemma 2.7.7 and Theorem 2.8.1]{vershynin2018high}), we can show the following.
\begin{lem}\label{l:conc_SI_mean} For all $x \in \R^d$ and $i \in [d]$ we have for a universal constant $c_5>0$\\
\[{\hspace{0.3cm}\Pr(|{R}Y(i)-{R}\nabla f(x)(i)|\geq t) \leq 2 e^{\left(- c_5\min \lbrace t^2/ \sigma^{\prime 2}, t\sqrt{d}/\sigma^{\prime} \rbrace \right)}.}\]
\end{lem}
 \begin{proof} 
   The proof follows from combining two facts. First, note that for a sequence $\{X_i\}_{i \in [N]}$ of zero mean, $iid$ subexponential random variables, we have from \cite[Theorem 2.8.1]{vershynin2018high}
 \[\Pr\bigg\{\bigg|\sum_{i=1}^N X_i\bigg| \geq t\bigg\} \leq 2 e^{- c_5\min \lbrace \frac{t^2}{N\norm{X_1}_{\psi_1}^2}, \frac{t}{ \norm{X_1}_{\psi_1}} \rbrace},\] for some universal constant $c_5>0.$

 Also, note that the product of two subgaussian random variables $X$ and $Y$ is subexponential random variable with subexponential norm bounded as follows (see \cite[Lemma 2.7.7]{vershynin2018high}):
 $  \norm{XY}_{\psi_1} \leq \norm{X}_{\psi_2}\norm{Y}_{\psi_2}.$
 Notice that $|{R}Y(i)-{R}\nabla f(x)(i)| =\sum_{i \in [d]} U(i) V(i)$, where $U(i)$, $V(i)$ are zero mean, $iid$, subgaussian random variables with subgaussian norms as $1/\sqrt{d}$ and $\sigma^{\prime}$ ($cf.$ Lemma \ref{l:SGnormSI}). 
 \end{proof}
Finally,
proceeding in the same manner as in \cite[Lemma 5.8]{mayekar2020ratq},
\newer{we can show that the coordinates of the rotated stochastic gradient are close to the coordinates of the rotated true gradient.} 
\begin{lem} \label{l:conc_grad_mean}
For all $x \in \R^d$ and the universal constant $c_4>0$ as in Lemma \ref{l:SGnormSI}, we have
\[{\hspace{2cm}\norm{{R}\hat{g}_j(x)(i) -{R}\nabla f(x)(i)}_{\psi_2}^2 \leq c_1 \sigma^2/d.}\]
\end{lem}
Thus,  random rotation allows us to convert the $\ell_2$ norm bound in assumption \eqref{e:varaince} to a $\ell_{\infty}$ bound.

\noindent \newest{We now choose $c_3=\min \{c_4,c_5\}.$ Using the inequality: $\max(a,b)\leq a+b$ and the property of subgaussian random variable in Definition \ref{d:sg_rv},  Lemma \ref{l:conc_SI_grad} follows from combining
Lemmas \ref{l:conc_SI_mean} and \ref{l:conc_grad_mean}.}
\newest{Next, we present a lemma similar to Lemma \ref{l:variance_tail} towards evaluating bounds on $\alpha^{\prime}(\M)$ and $\beta^{\prime}(\M)$.}
\begin{lem}\label{l:variance_tail2}
For a random variable $Z$ such that 
\[  \Pr( |Z| \geq t)  \leq   2 e^{- \frac{c_3t^2}{\sigma^{\prime 2}}} +2e^{-\frac{c_3t\sqrt{d}}{\sigma^{\prime}}} +2e^{- \frac{c_3 t^2d}{ \sigma^2}},\] where $c_3>0$ is some universal constant, we have
\[
\E{Z^2\indic{\{|Z|>t\}}}\leq 
 2\left(\frac{\sigma^{\prime 2}}{c_3}+t^2\right)e^{- \frac{c_3 t^2}{\sigma^{\prime 2}}}+ 2\left(\frac{\sigma^{ 2}}{dc_3}+t^2\right)e^{- \frac{c_3 t^2 d}{\sigma^{2}}} +2\left(\frac{2\sigma^{\prime 2}}{dc_3^2}+\frac{2 \sigma^{\prime}t}{c_3\sqrt{d}} +t^2\right)e^{-\frac{c_3t\sqrt{d}}{\sigma^{\prime}}}.
\]
\end{lem}

\begin{proof}

For any nonnegative random variable $U$, 
it can be seen
\[
\E{U\indic{\{U>x\}}}= x\Pr(U>x) + \int_{x}^\infty \Pr(U>u)\,du.
\]
Upon substituting $U=Z^2$ and $x=t^2$, along with
the fact that $Z$ has the tail behaviour described above,
we get 
\begin{align*}
    \E{Z^2\indic{\{Z^2>t^2\}}}&= t^2\Pr(Z^2>t^2) + 
\int_{t^2}^\infty \Pr(Z^2>u)\,du
\\
&\leq 2t^2 \left(e^{- \frac{c_3 t^2}{\sigma^{\prime 2}}} +e^{-\frac{c_3t\sqrt{d}}{\sigma^{\prime}}} +e^{- \frac{c_3 t^2 d}{\sigma^{2}}}\right) 
+ 2\int_{t^2}^\infty e^{-\frac{c_3u}{\sigma^{\prime 2}}}\,du  + 
2\int_{t^2}^\infty e^{-\frac{c_3ud}{\sigma^2}}\,du \\
&\qquad +2\int_{t^2}^\infty e^{-\frac{c_3\sqrt{ud}}{\sigma^\prime}}\,du  \\
&\leq 2\left(\frac{\sigma^{\prime 2}}{c_3}+t^2\right)e^{- \frac{c_3 t^2}{\sigma^{\prime 2}}}+ 2\left(\frac{\sigma^{ 2}}{dc_3}+t^2\right)e^{- \frac{c_3 t^2 d}{\sigma^{2}}} +2\left(\frac{2\sigma^{\prime 2}}{dc_3^2}+\frac{2 \sigma^{\prime}t}{c_3\sqrt{d}} +t^2\right)e^{-\frac{c_3 t\sqrt{d}}{\sigma^{\prime}}}.
\end{align*}
\end{proof}

\paragraph{Bounds on $\alpha^\prime(\M)$ and $\beta^\prime(\M)$} 


\newest{Recall that $Q_{{\tt RM}, j}$ denotes the rotated modulo quantizer without any subsampling {for client $j\in \clientset_2$}. From the description of RMQ in Algorithm \ref{a:D_RMQ}, we have
\vspace{-0.25cm}
\eq{ Q_{{\tt M}, R_j, j}( \hat{g}_j(x), Y){=}R_j^{-1}\left(\sum_{i\in [d]}Q_{\tt M}(R_j\hat{g}_j(x)(i), R_jY(i)) \cdot e_i\right).}
}
The key step of the proof is bounding MSE and bias of RMQ. Towards that, we have the following lemma.

\newest{\begin{rem}
The calculation for MSE and bias are different in  the proof of Lemma \ref{l:conc_SI_grad} compared to those in Proof of Lemma \ref{t:RMQ}. This is because of the weaker concentration results available in this case.
\end{rem}
}

\begin{lem} Under the condition that $nr \geq c_4 d^2 \log (B/\sigma)$, we have for all $x \in \R^d$, $j \in \clientset_2$, and for some parameter $\delta \in (0, \sigma/\sqrt{c_3} )$ that
\eq{
& \E{\norm{ Q_{{\tt M}, R_j, j}( \hat{g}_j(x), Y)- \hat{g}_j(x)}_2^2}\leq  \frac{36\sigma^2}{c_3(\ell_2-2)^2}\left(\ln \frac{\sigma}{ \sqrt{c_3}\delta}\right)^2+237\delta^2, \\
&\norm{\E{ Q_{{\tt M}, R_j, j}( \hat{g}_j(x), Y)}-\hat{g}_j(x)}_2^2  \leq 237\delta^2,
}
where $c_3$ and $c_4$ are the universal constants same as in Lemma \ref{l:conc_SI_grad}.
\end{lem}
\begin{proof}
By considering events $\{|R_j(\hat{g}_j(x)-Y)(i)| \leq \Delta^{\prime}\}$ and $\{|R_j(\hat{g}_j(x)-Y)(i)| \geq \Delta^{\prime}\}$, and then using the facts for modulo quantizer, we have
\eq{
\E{\norm{ Q_{{\tt M}, R_j, j}( \hat{g}_j(x), Y)- \hat{g}_j(x)}_2^2}
&\leq d\eps^2+ \sum_{i=1}^{d}\E{{( Q_{{\tt M}, R_j, j}( \hat{g}_j(x), Y)-\hat{g}_j(x))(i)}^2\indic{\{|R_j(\hat{g}_j(x)-Y)(i)| \geq \Delta^{\prime}\}}}\\
&\leq d\eps^2+2\ell_2^2\eps^2\sum_{i=1}^{d}\Pr( |R_j(\hat{g}_j(x)-Y)(i)| \geq \Delta^{\prime})\\&+2\sum_{i=1}^{d}\E{(R_j(\hat{g}_j(x)-Y)(i))^2\indic{\{|R_j(\hat{g}_j(x)-Z)(i)| \geq \Delta^{\prime}\}}}\\
&\leq d\eps^2+4\ell_2^2\eps^2 d\left(e^{- \frac{c_3\Delta^{\prime 2}}{\sigma^{\prime 2}}} +e^{-\frac{c\Delta^\prime \sqrt{d}}{\sigma^{\prime}}} +e^{- \frac{c_3 \Delta^{\prime 2}d}{ \sigma^2}}\right)
\\&+ 4d\left(\frac{\sigma^{\prime 2}}{c_3}+\Delta^{\prime 2}\right)e^{- \frac{c_3 \Delta^{\prime 2}}{\sigma^{\prime 2}}}\\
& + 4d\left(\frac{\sigma^{ 2}}{dc_3}+\Delta^{\prime 2}\right)e^{- \frac{c_3 \Delta^{\prime 2} d}{\sigma^{2}}} +4d\left(\frac{2\sigma^{\prime 2}}{dc_3^2}+\frac{2 \sigma^{\prime}\Delta^{\prime}}{c_3\sqrt{d}} +\Delta^{\prime 2}\right)e^{-\frac{c_3\Delta^{\prime}\sqrt{d}}{\sigma^{\prime}}}.\numberthis \label{e:c}
}
\newer{For some parameter $\delta\in (0, \sqrt{d}\Delta)$,} we substitute the other parameters as
$\eps=2\Delta^{\prime}/(\ell_2-2)$ and 
${\Delta^\prime}^2=9\Delta^2 (\ln (\sqrt{d} \Delta/\delta))^2$,  where $\Delta^2 = \max\{ \frac{\sigma^{\prime 2}}{c_3},  \frac{\sigma^{\prime 2}}{dc_3^2}, \frac{\sigma^{2}}{dc_3}\}$. That gives
\newer{\begin{align*}
4\ell_2^2\eps^2 d\left(e^{- \frac{c_3\Delta^{\prime 2}}{\sigma^{\prime 2}}} +e^{-\frac{c_3\Delta^\prime \sqrt{d}}{\sigma^{\prime}}} +e^{- \frac{c_3 \Delta^{\prime 2}d}{ \sigma^2}}\right)
&=
\left(\frac{12\ell_2}{\ell_2-2}\right)^2d\Delta^2\left(\ln \frac{\sqrt{d}\Delta}{\delta}\right)^2\left(\frac{2}{(\sqrt{d}\Delta/\delta)^9 } +\frac{1}{(\sqrt{d}\Delta/\delta)^3}\right )\\
&\leq \left(\frac{12\ell_2}{\ell_2-2}\right)^2d\Delta^2\left(\ln \frac{\sqrt{d}\Delta}{\delta}\right)^2\left(\frac{3}{(\sqrt{d}\Delta/\delta)^3 }\right) \\
&= \left(\frac{12\ell_2}{\ell_2-2}\right)^2\frac{\left(\ln \frac{\sqrt{d}\Delta}{\delta}\right)^2}{\sqrt{d}\Delta/\delta }\delta^2 \\
&\leq \left(\frac{24\ell_2}{e(\ell_2-2)}\right)^2\delta^2 \\
&\leq 139\delta^2, \numberthis \label{e:c1}
\end{align*}}
\noindent where the first inequality uses the fact that $\delta \in (0, \sqrt{d}\Delta)$,  the second inequality uses $\ln x \leq 2 \sqrt{x}/e$, and {the final inequality uses the assumption that $n \geq 8.$} For the last three terms in \eqref{e:c}, we have
\begin{align*}
&4d\left(\frac{\sigma^{\prime 2}}{c_3}+\Delta^{\prime 2}\right)e^{- \frac{c_3 \Delta^{\prime 2}}{\sigma^{\prime 2}}}+4d\left(\frac{\sigma^{ 2}}{dc_3}+\Delta^{\prime 2}\right)e^{- \frac{c_3 \Delta^{\prime 2} d}{\sigma^{2}}}+4d\left(\frac{2\sigma^{\prime 2}}{dc_3^2}+\frac{2 \sigma^{\prime}\Delta^{\prime}}{c_3\sqrt{d}} +\Delta^{\prime 2}\right)e^{-\frac{c_3\Delta^{\prime}\sqrt{d}}{\sigma^{\prime}}}\\
&\leq 8d\left(\Delta^2+ \Delta^{\prime 2}\right)e^{-\frac{\Delta^{\prime 2}}{\Delta^2}} +4d\left(2\Delta^{ 2}+2\Delta\Delta^{\prime}+\Delta^{\prime 2}\right)e^{-\frac{\Delta^{\prime }}{\Delta}}\\
&\leq 8d\left(\Delta^2+ \Delta^{\prime 2}\right)e^{-\frac{\Delta^{\prime 2}}{\Delta^2}} +4d\left(3\Delta^{ 2}+2\Delta^{\prime 2}\right)e^{-\frac{\Delta^{\prime }}{\Delta}}\\
&= \frac{8d\left(\Delta^2+ 9\Delta^{ 2} (\ln (\sqrt{d}\Delta/\delta))^2\right)}{ (\sqrt{d}\Delta/\delta)^9}+ \frac{ 4d\left(3\Delta^{ 2}+18\Delta^{ 2}(\ln (\sqrt{d}\Delta/\delta))^2\right) }{(\sqrt{d}\Delta/\delta)^3}\\
&\leq \frac{4d\left(5\Delta^2+ 36\Delta^{ 2} (\ln (\sqrt{d}\Delta/\delta))^2\right)}{ (\sqrt{d}\Delta/\delta)^3} \\
&\leq \frac{4d\left(5\Delta^2+ 144\Delta^{ 2} \sqrt{d}\Delta/(e^2\delta)\right)}{ (\sqrt{d}\Delta/\delta)^3} \\
&= \frac{20\delta^2}{(\sqrt{d}\Delta/\delta)} +\frac{576\delta^2}{e^2}\\
&\leq 98\delta^2, \numberthis \label{e:c2}
\end{align*}
where the first inequality is due to choice of $\Delta,$ the second is AM-GM inequality, the third one uses the fact: $\delta \in (0, \sqrt{d}\Delta)$,  and the fourth one uses $\ln x \leq 2 \sqrt{x}/e.$
\newest{Substituting \eqref{e:c1} and \eqref{e:c2} in \eqref{e:c}, we get
\begin{align*}
\E{\norm{ Q_{{\tt M}, R_j, j}( \hat{g}_j(x), Y)- \hat{g}_j(x)}_2^2}
\leq \frac{36d\Delta^{2}}{(\ell_2-2)^2}\left(\ln \frac{\sqrt{d}\Delta}{\delta}\right)^2+237\delta^2.
\end{align*}
Finally, we note that whenever $nr \geq c_4 d^2 \log (B/\sigma)$, $\sigma^{\prime 2}\leq \sigma^2/d$ (see \eqref{e:sigma_prime}). 
With our earlier choice of $\Delta^2=\max\{ \frac{\sigma^{\prime 2}}{c_3},  \frac{\sigma^{\prime 2}}{dc_3^2}, \frac{\sigma^{2}}{dc_3}\},$ this further implies $\Delta\leq \frac{\sigma}{\sqrt{dc_3}}$.  Using this fact in the bound above establishes the MSE bound.}
\paragraph*{Bound for Bias.}
Using the fact that for $\{|R_j(\hat{g}_j(x)-Y)(i)| \leq \Delta^{\prime}\}$ the modulo quantizer gives an unbiased estimate and the Jensen's inequality and, we have 
\begin{align*}
\norm{\E{ Q_{{\tt M}, R_j, j}( \hat{g}_j(x), Y)}-\hat{g}_j(x)}_2^2&= \sum_{i=1}^{d} \E{\left(\hat{x}_R(i)-Rx(i)\right)\indic{|R\left(x-y\right)_i|
\geq \Delta^{\prime})} }^2\\
&\leq  \sum_{i=1}^{d} \E{\left(\hat{x}_R(i)-Rx(i)\right)^2\indic{|R\left(x-y\right)_i| \geq \Delta^{\prime})} }\\
&\leq 237\delta^2.
\end{align*}
\end{proof}
\paragraph*{Completing the proof.} We now calculate the MSE and bias for our Wyner-Ziv quantizer with subsampled RMQ. 

\newest{
Note that the inequality \eqref{e:intermed_subsampling} derived in Lemma \ref{t:RCS_RAQ_alpha_beta}  holds in this case. 
Therefore, we have for $j\in \clientset_2$ that
\begin{align*}
\E{\norm{Q_{{\tt WZ}, j}( \hat{g}_j(x), Y)- \hat{g}_j(x)}_2^2}
&\leq \frac{2d}{r_2} \E{\norm{ Q_{{\tt M}, R_j, j}( \hat{g}_j(x), Y)- \hat{g}_j(x)}_2^2} \\
&\qquad + 
\frac{2d}{r_2}\E{\|R\hat{g}_j(x) -RY\|_2^2}\\
&\leq \frac{2d}{r_2}\left(\frac{36\sigma^2}{c_3(\ell_2-2)^2}\left(\ln \frac{\sigma}{ \sqrt{c_3}\delta}\right)^2+237\delta^2\right)\\
&\qquad + 
\frac{2d}{r_2}\E{\|R\hat{g}_j(x) -R\nabla f(x)\|_2^2} + 
\frac{2d}{r_2}\E{\|RY -R\nabla f(x)\|_2^2}\\
&\leq \frac{2d}{r_2}\left(\frac{36\sigma^2}{c_3(\ell_2-2)^2}\left(\ln \frac{\sigma}{ \sqrt{c_3}\delta}\right)^2+237\delta^2+\sigma^2\right)\\
&+ \frac{2d}{r_2}\E{\|Y-\nabla f(x)\|^2},
\end{align*}
where the second last inequality uses bound from Lemma \ref{l:conc_SI_grad} and the fact \[\E{\|R\hat{g}_j(x) -RY\|^2}=\E{\|R\hat{g}_j(x) -R\nabla f(x)\|^2}+\E{\|RY -R\nabla f(x)\|^2},\] and the final inequality uses the fact that $R$ is a unitary matrix  and assumption \eqref{e:bounded_est}.

Recall from Lemma \ref{l:SGnormSI} that the quantity $(\nabla f(x)(i) -Y(i))$ is subgaussian with variance parameter $\sigma^{\prime 2}.$ Thus, 
$\E{(Y(i)-\nabla f(x)(i))^2}\leq c_5\sigma^{\prime 2}$ for some universal constant $c_5>0$ (see, for instance, \cite{vershynin2018high}).  Again using the fact that $d\sigma^{\prime 2}<\sigma^2$, whenever $nr \geq c_4 d^2 \log (B/\sigma)$, $\E{\|Y-\nabla f(x)\|^2}\leq \sigma^2$.

Further, the bias remains unchanged compared to without subsampling case, i.e,  \[\norm{\E{Q_{{\tt WZ}, j}( \hat{g}_j(x), Y)}-\hat{g}_j(x)} = \norm{\E{ Q_{{\tt M}, R_j, j}( \hat{g}_j(x), Y)}-\hat{g}_j(x)} \leq \sqrt{237}\delta.\]
}
\newnewest{At last, we set the following parameters for $\tt WZ\text{-}SGD$:
\[\log \ell_2=\ceil{c \log \log nT}, \qquad \delta = \frac{2\sigma}{nT}.\]
Accordingly,  we need to sample $r_2=r/\ceil{c \log \log nT}$ coordinates at each client.  Using the standard bounds for averaging of vectors in Lemma \ref{l:main}, respectively, we obtain
\[{\alpha^{\prime}}^2(\M_t) \leq  c_1\frac{\sigma^2 \log \log nT}{n} \cdot \frac{d}{r}, \qquad {\beta^{\prime}}^2(\M_t) \leq \frac{c_2\sigma^2}{n^2T^2},\]
for suitably chosen constants $c_1, c_2>0$ and $\M_t$ as defined in \eqref{e:descent_WZ_SGD}. 
The proof is completed by using the bounds on $\alpha^\prime$ and $\beta^\prime$ with Lemma \ref{l:general}.}

\qed

\newest{\subsection{Proof of Theorem \ref{c:PSGD+}}
\noindent We proceed in a way similar to the proof of Theorem \ref{t:ParSGD}.
Towards that,  we first use the mean square assumption in \eqref{e:var} to write
\[\E{\left\| \frac{2}{n}\sum_{i \in \clientset_2} \hat{g}_i(x_t)- \nabla f(x_t)\right\|^2} \leq \frac{2\sigma^2}{n}.
\]
Then, it only remains to bound the term $\E{\left\|\frac{2}{n} \sum_{i \in \clientset_2} Q_{\mathtt{RDAQ}}(\hat{g}_i(x_t),Y_t)-\frac{2}{n} \sum_{i \in \clientset_2} \hat{g}_i(x_t)\right\|^2},$ where ${Y_t=\frac{2}{n}\sum_{i\in \clientset_2}Q_{\mathtt{RATQ}}(\hat{g}_i(x_t))}$ is the side-information.
For that, we follow the proof of Lemma \ref{t:sRDAQ} in Section \ref{s:proof_RDAQ}. 

Fix any arbitrary client ${i \in \clientset_2}$. Conditioning on its gradient estimate $\hat{g}_i(x_t),$ the available side information at server $Y_t=y_t$,  we have using the proof of Lemma \ref{t:sRDAQ} that
\begin{align*}
&\E{\left\| Q_{\mathtt{RDAQ}}(\hat{g}_i(x_t),Y_t)-\hat{g}_i(x_t)\right\|^2\mid \hat{g}_i(x_t), Y_t=y_t}\leq \frac{16\sqrt{3}dB\|\hat{g}_i(x_t)-y_t\|}{\frac{r}{\ceil{h+\log h}}-1}.
\end{align*}
By the law of total expectations, we also have
\begin{align*}
&\E{\left\| Q_{\mathtt{RDAQ}}(\hat{g}_i(x_t),Y_t)-\hat{g}_i(x_t)\right\|^2}\leq \frac{16\sqrt{3}dB\E{\|\hat{g}_i(x_t)-Y_t\|}}{\frac{r}{\ceil{h+\log h}}-1}.
\end{align*}
The proof is completed by noting that 
\begin{align*}
\E{\|\hat{g}_i(x_t)-Y_t\|}^2&\leq \E{\|\hat{g}_i(x_t)-Y_t\|^2}\\
&=\E{\|\hat{g}_i(x_t){-}\nabla f(x_t)\|^2}+\E{\|\nabla f(x_t){-}Y_t\|^2}\\
&\leq \sigma^2+\frac{2\sigma^2}{n}+ \frac{ 2dB^2}{n\left(\frac{r}{ 3+ \ceil{\log(1+ \ln^\ast(d/3))}} -1\right)},
\end{align*}
where the first line is using Jensen's inequality,  the only identity is due to the unbiased property of subsampled RATQ ($c.f.$ Lemma \ref{l:subsampled_RATQ}), and the last line is due to \eqref{e:var} and applying the value of $\alpha(\M_t)$ for $\clientset_1$ in Theorem \ref{t:ParSGD}.
\qed}
\subsection{Proof of Theorem \ref{t:G_WZ}}
 The proof of this Theorem is similar to that of Lemma \ref{t:RMQ}.
 We denote by $Q(X(i), Y(i))$ the output of the modulo quantizer with side information $Y(i)$ and parameters $k$, $\Delta^{\prime}$ set as in \eqref{e:WZ_MQ}. Then, we have
 
\begin{align}
\nonumber
\E{\norm{Q_d(X, Y)-X}^2} &\leq \sum_{i=1}^{d} \E{(Q(X(i), Y(i)) -X(i))^2}\\
\nonumber
&\leq \sum_{i=1}^{d} \E{(Q(X(i), Y(i))-X(i))^2 \indic{\{|\left(X(i)-Y(i)\right)| \leq \Delta^{\prime}\}}}
\\& \hspace{2cm}+ \sum_{i=1}^{d} \E{(Q(X(i), Y(i))-X(i))^2 \indic{\{|\left(X(i)-Y(i)\right)| \geq \Delta^{\prime}\}}}.
\label{e:error_split_GWZ}
\end{align}
 
We bound the first term on the right-side in a similar manner as the bound in \eqref{e:error_term1}. Specifically,
under the event 
$\{|X(i)-Y(i)| \leq
   \Delta^{\prime}\}$, we get
   by Lemma~\ref{t:MQ}
   that
   \[
   |Y(i)
   -X(i)| \leq \eps=
   \frac{2\Delta^{\prime}}{k-2}, \quad \text{{almost surely}},
   \]
   whereby 
  \begin{align}
  \sum_{i=1}^{d} \E{(Y(i)-X(i))^2 \indic{\{|X(i)-Y(i)\}| \leq \Delta^{\prime}}}\leq d\,\eps^2.
  \label{e:error_term1_GWZ}
  \end{align}

For the second term in the RHS note that $X(i)-Y(i)$ is subgaussian with  variance factor $\sigma_z^2$. Therefore, by proceeding in a similar manner as the derivation of \eqref{e:error_term2} we get
\begin{align}\nonumber &\sum_{i=1}^{d} \E{(Q(X(i), Y(i))-X(i))^2 \indic{\{|X(i)-Y(i)| \geq \Delta^{\prime}\}}}\\
&\leq 
2\sum_{i=1}^{d} \left[\E{( Q(X(i), Y(i))- Y(i))^2 \indic{\{|X(i)-Y(i)| \geq \Delta^{\prime}\}}}
+\E{(Y(i)-X(i))^2 \indic{\{|X(i)-Y(i)| \geq \Delta^{\prime}\}}}
\right]
\nonumber 
\\
&\leq 2k^2\eps^2
\sum_{i=1}^{d}P(|X(i)-Y(i)| \geq \Delta^{\prime})
+
2\sum_{i=1}^{d} 
\E{(X(i)-Y(i))^2 \indic{\{|X(i)-Y(i)| \geq \Delta^{\prime}\}}}
\nonumber
\\
&\leq 4dk^2\eps^2e^{-{d{\Delta^\prime}^2}/{2 \sigma_z^2}}
+
2\sum_{i=1}^{d} 
\E{(X(i)- Y(i))^2 \indic{\{|X(i)-Y(i)| \geq \Delta^{\prime}\}}}
\nonumber
\\
&\leq 4dk^2\eps^2 e^{-{{{\Delta^\prime}}^2}/{2\sigma_z^2}}+
4(2\sigma_z^2+d\Delta^{\prime 2})e^{-\frac{{\Delta^{\prime}}^2}{2\sigma_z^2}},
\label{e:error_term2_GWZ}
\end{align}
where the second inequality follows upon noting from the description decoder of MQ in
Alg.~\ref{a:D_MCQ} that 
$|Q(X(i), Y(i))- Y(i)|\leq \eps k$
almost surely for
each $i\in [d]$; the third inequality uses
the fact that $X(i)-Y(i)$ is sub-Gaussian with variance parameter
$\sigma_z^2$; and the fourth inequality
is by Lemma~\ref{l:variance_tail}.

Upon bounding the two terms on the right-side of \eqref{e:error_split_GWZ} 
from above
using \eqref{e:error_term1_GWZ}, \eqref{e:error_term2_GWZ},  we obtain
\begin{align}
\nonumber
\E{\norm{Q_d(X, Y)-X}^2}\leq d\eps^2+ 4dk^2\eps^2 e^{-{{\Delta^\prime}^2}/{2\sigma_z^2}}+
4(2\sigma_z^2+d\Delta^{\prime 2})e^{-\frac{{{\Delta^{\prime}}^2}}{2\sigma_z^2}}.
\end{align}
Note that the RHS in the upper bound above is precisely the same as in \eqref{e:break} with $\sigma_z^2$ replacing $\Delta^2/d$.Therefore proceeding in the same manner as in \eqref{e:break}, we get
\[\E{\norm{Q_d(X, Y)-X}^2}\leq 24 \frac{\sigma_z^2}{(k-2)^2}\ln \frac{\sigma_z}{\delta} + 154 \delta^2.\]
Substituting the value of $k$ and $\delta$ completes the proof.
 \qed
 
 \subsection{Proof of Lemma \ref{t:bRDAQ}}
For $Q(x)$ as in \eqref{e:boosted_RDAQ}, we have \[Q(x) =\sum_{i=1}^{N} q_i/N,\] where $q_i$ for all $i \in \{1, \ldots N\}$ is an unbiased estimate of $x$ and equals in distribution the output of the RDAQ quantizer for an input $x$ and side information $y$.  Moreover, $q_i$s are mutually independent conditioned on $R$. Therefore,
\eq{
\E{\norm{Q(x) -x}_2^2 } &= \E{\norm{\sum_{i=1}^{N}\frac{q_i}{N}-x}_2^2}\\
&=\E{ \E{\norm{\sum_{i=1}^{N}\frac{q_i}{N}-x}_2^2|R}}\\
&=\E{\sum_{i=1}^{N}\frac{1}{N^2}\E{\norm{q_i-x}_2^2|R}}\\
&\leq 16 \sqrt{3}\, \frac{\Delta}{N},
}
where the third identity follows from the conditional independence of
$q_i$s after conditioning on $R$ and the fact that $q_i$ is an
unbiased estimate of $x$.
The final inequality
 follows from the fact that $q_i$ equals in distribution the output of the RDAQ quantizer and then using Lemma \ref{t:RDAQ}.
\qed

\bibliography{tit2018}
\bibliographystyle{IEEEtranS}
\end{document}